\newcommand{\Px}{ \mathbb{P} }
\newcommand{\Qx}{ \mathbb{Q} }
\newcommand{\Ex}{ \mathbb{E} }
\def\esssup_#1{\underset{#1}{\mathrm{ess\,sup\, }}}
\def\essinf_#1{\underset{#1}{\mathrm{ess\,inf\, }}}
\def\argmax_#1{\underset{#1}{\mathrm{arg\,max\, }}}
\def\argmin_#1{\underset{#1}{\mathrm{arg\,min\, }}}
\newcommand{\Fx}{\mathbb{F} }
\newcommand{\F}{\mathcal{F}}
\newcommand{\R}{\mathds{R}}
\newtheorem{theorem}{Theorem}[section]
\newtheorem{definition}{Definition}[section]
\numberwithin{equation}{section}
\newtheorem{proposition}[theorem]{Proposition}
\newtheorem{remark}[theorem]{Remark}
\newtheorem{lemma}[theorem]{Lemma}
\definecolor{Red}{rgb}{1.00, 0.00, 0.00}
\definecolor{DRed}{rgb}{0.5, 0.00, 0.00}
\definecolor{Blue}{rgb}{0.00, 0.00, 1.00}
\definecolor{Green}{rgb}{0.0, 0.4, 0.0}
\title{On optimal tracking portfolio in incomplete markets: The reinforcement learning approach}
\author{Lijun Bo \thanks{Email: lijunbo@ustc.edu.cn, School of Mathematics and Statistics, Xidian University, Xi'an, 710126, China.}
\and
Yijie Huang \thanks{Email: huang1@mail.ustc.edu.cn, School of Mathematical Sciences, University of Science and Technology of China, Hefei, 230026, China.}
\and
Xiang Yu \thanks{Email: xiang.yu@polyu.edu.hk, Department of Applied Mathematics, The Hong Kong Polytechnic University, Hung Hom, Kowloon, Hong Kong.}
}
\date{\vspace{-1.5cm}}
\begin{document}
\maketitle

\begin{abstract}
This paper studies an infinite horizon optimal  tracking portfolio problem using capital injection in incomplete market models. The benchmark process is modelled by a geometric Brownian motion with zero drift driven by some unhedgeable risk. The relaxed tracking formulation is adopted where the fund account compensated by the injected capital needs to outperform the benchmark process at any time, and the goal is to minimize the cost of the discounted total capital injection. When model parameters are known, we formulate the equivalent auxiliary control problem with reflected state dynamics, for which the classical solution of the HJB equation with Neumann boundary condition is obtained explicitly. When model parameters are unknown, we introduce the exploratory formulation for the auxiliary control problem with entropy regularization and develop the continuous-time q-learning algorithm in models of reflected diffusion processes. In some illustrative numerical example, we show the satisfactory performance of the q-learning algorithm.

\vspace{0.1in}
\noindent{\textbf{Keywords}}: Optimal tracking portfolio, capital injection, incomplete market, reflected diffusion process, continuous-time q-learning\\

\noindent{\textbf{MSC 2020}}: 93E20, 49J55,	68T05 
\end{abstract}

\section{Introduction}
One important business in fund management is to choose the portfolio among some risky assets to closely track some benchmark processes such as the market index, the inflation rates, the exchange rates, the liability, or the living cost and education cost, etc. How to formulate the tracking procedure and derive the optimal tracking portfolio has become an active research topic in quantitative finance during the past decades. For example, several objectives for active portfolio management are proposed in \cite{Browne9}, \cite{Browne99}, and \cite{Browne00} including: maximizing the probability that the agent's wealth achieves a performance goal related to the benchmark before falling below a predetermined shortfall; minimizing the expected time to reach the performance goal; the mixture of these two objectives, and some further extensions by considering the expected reward or expected penalty. Another standard method to optimize the tracking error is to minimize the variance or downside variance relatively to the index value or return, which leads to the linear-quadratic stochastic control problem; See \cite{Gaivoronski05}, \cite{YaoZZ06} and \cite{NLFC} among others. In \cite{Strub18}, a different objective function is introduced to measure the similarity between the normalized historical trajectories of the tracking portfolio and the index where the rebalancing transaction costs can also be taken into account. Recently, another new tracking formulation using the fictitious capital injection is studied in \cite{BoLiaoYu21,BHY23a, BHY23b}, however, all in the complete market model. With the help of an auxiliary control problem based on the state process with reflections, the original stochastic control problem under dynamic floor constraints is equivalent to the study of the HJB equation with a Neumann boundary condition. The market completeness plays a key role therein as the dual transform of the HJB equation leads to a linear PDE problem whose solution admits some probabilistic representations. The existence of the classical solution for the linear dual PDE and the verification of the feedback optimal portfolio control can be obtained by using stochastic flow analysis and some delicate estimations.

The present paper aims to further extend the studies on the relaxed tracking portfolio using capital injection in \cite{BoLiaoYu21,BHY23a,BHY23b} in some incomplete market models in which there exists some unhedgeable risk driving the external benchmark process (see its definition in \eqref{eq:W-kappa}). As a direct consequence, the previous methodology based on the dual transform and the probabilistic representations for the dual PDE therein fails because the dual HJB equation can no longer be linearized. We consider the benchmark process governed by a geometric Brownian motion with zero drift in the present paper. With knowledge of market parameters, we formulate an equivalent auxiliary stochastic control problem similar to \cite{BoLiaoYu21}, which results in a stochastic control problem with the underlying state process exhibiting reflections at
the boundary zero. Leveraging the benchmark dynamics as a geometric Brownian motion (GBM) facilitates the dimension reduction of the control problem. We then derive the explicit classical solution to the associated HJB equation with a Neumann boundary condition. Furthermore, we consider the realistic situation when all market model parameters are unknown, for which we are interested in developing the continuous-time reinforcement learning approach. Recently, for continuous-time stochastic control problems, \cite{Wang20}, \cite{JZ22a,JZ22b,JZ22c} have developed the theoretical foundation for reinforcement learning in the continuous time exploration formulation with continuous state space and action space. In particular, \cite{Wang20} firstly studied the exploratory learning formulation by incorporating the entropy regularization to encourage the policy exploration, and the optimal policy has been shown as a Gaussian type for LQ control problems. \cite{JZ22a} examined the policy evaluation problem and proposed the martingale loss to facilitate the algorithm design. The policy gradient problem was then studied in \cite{JZ22b} where the martingale approach in \cite{JZ22a} can be adopted for the policy gradient and some Actor-Critic algorithms can be devised to learn the value function and the stochastic policies alternatively. Later, \cite{JZ22c} established a continuous time q-learning theory by considering the first order approximation of the conventional discrete-time Q-function. We also refer to some recent subsequent studies in different contexts such as \cite{W23} in applying the continuous-time Actor-Critic algorithm for solving the optimal execution problem in the Almgren-Chriss model; the application of policy improvement algorithm in \cite{BGMX} for solving the optimal dividend problem in diffusion models; the generalization of continuous-time q-learning in \cite{weiyu2023} for mean-field control problems in McKean-Vlasov diffusion models. The occupation measure considered in \cite{ZhaoTD2024} facilitates the derivation of performance-difference and local-approximation formulas for continuous-time policy gradient (PG) and trust region policy optimization (TRPO) algorithms. 

In this paper, we are interested in generalizing the continuous-time q-learning from the standard diffusion models in \cite{JZ22c} to solve our auxiliary stochastic control problem with the reflection boundary at $0$. In particular, for the stochastic control problems with reflections, it is shown in the present paper that the value function and the associated q-function can also be characterized by the martingale condition of some stochastic processes involving the reflection term or the local time at $0$; see Proposition \ref{prop:martingale-condition} and Theorem \ref{thm:martingale-condition}. In addition, some extra transversality conditions (see \eqref{eq:cindition-q} and \eqref{eq:condition-pi}) are also needed in our infinite horizon setting. 
Building upon our martingale characterization in Theorem \ref{thm:martingale-condition}, we are able to devise the offline q-learning algorithm for the targeted stochastic control problems with reflections using the stochastic approximation resulting from the martingale orthogonality condition. As a new theoretical contribution to the literature, in the context of the infinite horizon stochastic control problems with reflections, we establish the convergence of the stochastic approximation algorithms for time discretization and horizon truncation when $\Delta t\rightarrow 0$ and $T\rightarrow\infty$; see Theorem \ref{thm:convergence-policy}. To illustrate the efficiency of our q-learning algorithm, we study a simulation example of our optimal tracking portfolio problem with a particular choice of the temperature parameter $\gamma=\rho/d$. Here $\rho$ stands for the subjective discount rate and $d$ is the dimension of stocks, both are known to the agent. In this case, we present an example beyond the LQ control framework such that the exploratory HJB equation with a Neumann boundary condition can be solved fully explicitly, allowing us to obtain the exact parameterization of the optimal value function and the optimal q-function. For some initial inputs and choices of learning rates, we illustrate the very satisfactory performance of the iteration convergence of learned parameters towards their true values.

The remainder of this paper is organized as follows. Section \ref{sec:formulation} introduces the incomplete market model and the benchmark process as well as the portfolio optimization problem under the relaxed tracking formulation. Section \ref{sec:PDE} delves into the classical control problem when the model is known. By adopting an auxiliary control formulation and conducting dimension reduction, the HJB equation is solved explicitly. In Section \ref{sec:RL}, when the market parameters are unknown, the continuous-time reinforcement learning approach is proposed for the exploratory formulation of the control problem with entropy regularization. In particular, the q-learning algorithm is developed therein together with some convergence results on time discretization and horizon truncation. Section \ref{sec:example} studies some numerical examples by implementing the q-learning algorithm. In addition, we present the comparison between the q-learning algorithm and the maximum likelihood estimation (MLE) for some real market data to demonstrate the outperformance of our algorithm. Finally, all proofs of the main results in previous sections are collected in Section \ref{appendix:proof}.

\section{Market Model and Problem Formulation}\label{sec:formulation}

Let $(\Omega, \mathcal{F}, \Fx^{W},\mathbb{P}^{W})$ be a filtered probability space with the filtration $\mathbb{F}^W=(\mathcal{F}^W_t)_{t\geq 0}$ satisfying the usual conditions, which supports a $d+1$-dimensional Brownian motion $(W^0,W^1,\ldots,W^d)=(W_t^0,W_t^1,\ldots,W_t^d)_{t\geq 0}$. We consider a financial market consisting of $d$ risky assets whose price dynamics are driven by the d-dimensional Brownian motion $(W^1,\ldots, W^d)$ satisfying
\begin{align}\label{stockSDE}
\frac{dS_t^i}{S_t^i}= \mu_idt+\sum_{j=1}^d\sigma_{ij}dW_t^{j},\quad i=1,\ldots,d,\quad t\geq 0,
\end{align}
with the mean return rate $\mu_i\in\R$ and the volatility $\sigma_{ij}\in\R$. Hereafter, we denote by  $\mu=(\mu_1,\ldots,\mu_d)^{\top}$ the vector of return rate and by $\sigma=(\sigma_{ij})_{d\times d}$ the volatility matrix. Assume that the riskless interest rate $r=0$ that amounts to the change of num\'{e}raire and the return rate $\mu\neq {\bm 0}$. From this point onwards, all processes including the wealth process and the benchmark process are defined after the change of num\'{e}raire.

For $t\geq 0$, let $\theta_t^i$ be the amount of wealth (the resulting process $\theta^i=(\theta_t^i)_{t\geq 0}$ is assumed to be $\Fx^W$-adapted) that the fund manager allocates in asset $S^i=(S_t^i)_{t\geq 0}$ at time $t$. The self-financing wealth process under the portfolio control $\theta=(\theta_t^1,\ldots,\theta_t^d)_{t\geq 0}^{\top}$ is given by, for $t\geq 0$,
\begin{align}\label{eq:wealth2}
V^{\theta}_t &=\textrm{v}+\int_0^t\theta_s^{\top}\mu ds+\int_0^t\theta_s^{\top}\sigma dW_s,\quad V_0^{\theta}=\textrm{v}\in\R_+:=[0,\infty).
\end{align}

 We consider the portfolio decision making by the fund manager whose goal is to optimally track a stochastic benchmark process $Z=(Z_t)_{t\geq 0}$. In the present paper, we consider the benchmark process satisfying the  geometric Brownian motion (GBM) with zero drift as studied in \cite{BJLand2000} that may refer to futures price process of a commodity, a stock index, or an interest rate index. As in \cite{BJLand2000}, the benchmark process $Z=(Z_t)_{t\geq 0}$ is assumed to satisfy the form of
\begin{align}\label{factor-Z}
\frac{dZ_t}{Z_t} =\sigma_ZdW^{\kappa}_t,\quad \forall t>0,
\end{align}
with the initial value $Z_0=z>0$ and the volatility $\sigma_Z>0$.

For the nonzero correlative coefficient $\kappa\in [-1,1]$, the process $W^{\kappa}=(W_t^{\kappa})_{t\geq 0}$ appeared in \eqref{factor-Z} is a standard Brownian motion, which is defined by
\begin{align}\label{eq:W-kappa}
W^{\kappa}_t:=\kappa W^0_t+\sqrt{1-\kappa^2}W^{\eta}_t,\quad t\geq 0.
\end{align} 
Here $W^{\eta}=(W^{\eta}_t)_{t\geq 0}$ is a linear combination of $d$-dimensional Brownian motion $W=(W^1,\ldots, W^d)$ with weights $\eta=(\eta_1,\ldots,\eta_d)^{\top}\in[-1,1]^d$. We also recall that $W^0=(W^0_t)_{t \geq 0}$ is a Brownian motion independent of the $d$-dimensional Brownian motion $W$, which stands for the unhedgeable risk. As a consequence, the market model is incomplete. 

Given the benchmark process $Z=(Z_t)_{t\geq0}$, an optimal tracking portfolio problem is considered that combines the portfolio control with another singular control as capital injection together with dynamic floor constraints. To be precise, we assume that the fund manager can strategically inject capital  
$A=(A_t)_{t\geq 0}$ into the fund account from time to time whenever it is necessary such that the total capital dynamically dominates the benchmark process $Z$, i.e., $A_t+V_t^{\theta}\geq Z_t$ for all $t\geq 0$. The goal of the optimal tracking problem is to minimize the expected cost of the discounted total capital injection under dynamic floor constraints that
\begin{align}\label{eq_prob_IBP}
{\rm w}(\mathrm{v}, z):=\text{$\inf_{A,\theta} \Ex\left[A_0+\int_0^{\infty} e^{-\rho t}dA_t \right]$\quad s.t.\quad $Z_t \le A_t + V^{\theta}_t$ at each $t\geq 0$,}
\end{align}
where the constant $\rho>0$ is the discount rate and $A_0=(z-\textrm{v})^+$ is the initial injected capital to match with the initial benchmark.

To cope with the problem \eqref{eq_prob_IBP} with dynamic floor constraints, we observe that, for a fixed control $\theta$, the optimal $A$ is the smallest adapted right-continuous and non-decreasing process that dominates $Z-V^{\theta}$. Let $\mathbb{U}$ be the set of regular $\Fx$-adapted control processes $\theta=(\theta_t)_{t\geq 0}$ such that \eqref{eq:wealth2} is well-defined. Similar to \cite{BoLiaoYu21}, for each fixed regular control $\theta$, the optimal singular control $A_t^*$ satisfies that
$A^*_t =0\vee \sup_{s\in[0,t]}(Z_s-V_s^{\theta})$, $\forall t\geq 0$.
As a result, the problem \eqref{eq_prob_IBP} with floor constraints admits the equivalent formulation as a unconstrained control problem but with a running maximum cost that
\begin{equation}\label{eq_orig_pb}
{\rm w}(\mathrm{v}, z)=(z-\mathrm{v})^++\inf_{\theta\in\mathbb{U}}\ \Ex\left[ \int_0^{\infty} e^{-\rho t} d\left(0\vee \sup_{s\in[0,t]}(Z_s-V_s^{\theta})\right)\right].
\end{equation}
In the forthcoming section, we will focus on the solvability of problem \eqref{eq_orig_pb} in the sense of strong control by introducing an auxiliary control problem.  

\section{The Auxiliary Control Problem}\label{sec:PDE}

To formulate the auxiliary control problem, we impose a new controlled state process to replace the process $V^{\theta}=(V_t^{\theta})_{t\geq 0}$ given in \eqref{eq:wealth2}. To do it, we first define the following difference process by
$D_t:=Z_t-V_t^{\theta}+\mathrm{v}-z$,  $\forall t\geq 0$.
It is obvious that $D_0=0$. Moreover, for any $x\geq 0$, let us consider the running maximum process of $D=(D_t)_{t\geq 0}$ defined by
\begin{align}\label{eq:maxM}
L_t :=x\vee \sup_{s\in[0,t]}D_s\geq 0,\quad \forall t>0,
\end{align}
with the initial value $L_0=x\geq 0$. One can easily see that $(z-\mathrm{v})^+-{\rm w}(z,\mathrm{v})$ with the value function ${\rm w}(\mathrm{v},z)$ given in \eqref{eq_orig_pb} is equivalent to the auxiliary control problem:
\begin{align}\label{eq:objectivefcn}
\sup_{\theta\in\mathbb{U}}\Ex\left[-\int_0^{\infty}e^{-\rho t}dL_t\right],
\end{align}
when we set the initial level $L_0=x=(\mathrm{v}-z)^+$. 

We start with the introduction of a new controlled state process $X=(X_t)_{t\geq 0}$ for problem \eqref{eq:objectivefcn}, which is defined as the reflected process $X_t:=L_t-D_t$ for $t\geq 0$ that satisfies the SDE:
\begin{align}\label{state-X}
X_t = -\int_0^t \sigma_Z Z_s dW^{\kappa}_s +\int_0^t\theta_s^{\top}\mu ds+\int_0^t\theta_s^{\top}\sigma dW_s + L_t,\quad \forall t\geq0
\end{align}
with the initial value $X_0=x\geq 0$. In particular, the running maximum process $L_t$ increases if and only if $X_t=0$, i.e., $L_t=D_t$. We will change the notation from $L_t$ to $L_t^X$ from this point onwards to emphasize its dependence on the new state process $X$ given in \eqref{state-X}. The benchmark process $Z=(Z_t)_{t\geq 0}$ defined in \eqref{factor-Z}  is chosen as another state process.

Let $\mathbb{U}^r$ be the set of admissible portfolios (controls) such that the reflected SDE \eqref{state-X} has a unique strong solution. Then, we propose the following stochastic control problem, for $(x,z)\in\R_+\times (0,\infty)$,
\begin{align}\label{eq:hatw}
\hat{{\rm w}}(x,z):=\sup_{\theta\in\mathbb{U}^r}J(\theta;x,z):=\sup_{\theta\in\mathbb{U}^r}\Ex_{x,z}\left[-\int_0^{\infty}e^{-\rho t}dL_t^X\right],
\end{align}
where $\Ex_{x,z}[\cdot]:=\Ex[\cdot|X_0=x,Z_0=z]$. Using the definition \eqref{eq:hatw} of $\hat{\rm w}(x,z)$, it is not difficult to check that (i) the value function $x\to \hat{{\rm w}}(x,z)$ is non-decreasing; (ii) it holds that $\sup_{x_1,x_2\geq0}|\frac{\hat{{\rm w}}(x_1,z)-\hat{{\rm w}}(x_2,z)}{x_1-x_2}|\leq1$, that is, the value function $x\to \hat{{\rm w}}(x,z)$ is Lipschitz continuous with Lipschitz coefficient being $1$. These properties will be applied in the remaining sections.

We next apply the change of measure to simplify the stochastic control problem in  \eqref{eq:hatw} by reducing the dimension. To this end, let us introduce the normalized processes of triplet $(X,\theta,L^X)$  by the benchmark process $Z=(Z_t)_{t\geq0}$ defined  by
\begin{align}\label{eq:Y}
    &Y_t:=\frac{X_t}{Z_t},\quad \tilde{\theta}_t:=\frac{\theta_t}{Z_t},\quad L_t^Y:=\int_0^t \frac{dL_s^X}{Z_s},\quad \forall t\geq0.
\end{align}
Note that the process $Z$ is strictly positive (see \eqref{factor-Z}). Then, $t\to L_t^Y$ is a non-decreasing process and satisfies that, a.s., $\int_{0}^t{\bf1}_{Y_s=0}dL_s^Y=\int_{0}^t{\bf1}_{X_s=0}\frac{dL_s^X}{Z_s}=\int_0^t\frac{dL_s^X}{Z_s}=L_t^{Y}$ for $t\geq 0$. This implies that $L^Y=(L_t^Y)_{t\geq 0}$ is the local time process for the process $Y=(Y_t)_{t\geq 0}$ at the reflecting boundary $0$. Then, the It\^o's rule yields that
\begin{align}\label{eq:SDEY}
dY_t&=\sigma_Z^2(Y_t+1)dt-\sigma_Z(Y_t+1)dW_t^{\kappa} +\tilde{\theta}_t^{\top}\mu dt + \tilde{\theta}_t^{\top}\sigma dW_t+ dL_t^Y.
\end{align}
In the sequel, we introduce the following change of measure specified by 
\begin{align}\label{eq:Q}
    \frac{d\Qx^W}{d\Px^W}\Big|_{\F_t}=Z_t,\quad \forall t\geq 0.
\end{align}
We then consider the following stochastic control problem formulated as:
\begin{align}\label{eq:stocontrolY00}
u(y)&:=\sup _{\theta \in \tilde{\mathbb{U}}^r}\Ex^{\Qx^W}\left[-\int_0^{\infty}e^{-\rho s}dL_s^Y\right],\quad \forall y\in\R_+,
\end{align}
where the normalized admissible set $\tilde{\mathbb{U}}^r:=\{(\frac{\theta_t}{Z_t})_{t\geq 0};~\theta\in\mathbb{U}^r\}$, and $\Ex^{\Qx}[\cdot]$ is the  expectation under the probability measure $\Qx^W\sim\Px^W$. It follows from \eqref{eq:SDEY} that the underlying state process $Y$ has the following dynamics under $\Qx^W$ given by
\begin{align}\label{eq:dynamcsY2Q}
dY_t&=-\sigma_Z(Y_t+1)d\tilde{W}^{\kappa}_t+\tilde{\theta}^{\top}_t\mu dt+\tilde{\theta}^{\top}_t\sigma dW_t+dL_t^Y.
\end{align}
Here $\tilde{W}_t^{\kappa}:=W^{\kappa}_t-\sigma_Z t$ for $t\geq 0$ is a $\Qx^W$-Brownian motion. For the case $\kappa=1$, the process $\tilde{W}^{\kappa}=(\tilde{W}_t^{\kappa})_{t\geq0}$ is also independent of $W=(W^1,\ldots,W^d)^{\top}$. The following lemma gives the relationship between the value function $u(y)$ for $y\geq0$ defined by \eqref{eq:stocontrolY00} and the value function $\hat{\rm w}(x,z)$ for $(x,z)\in\R_+\times(0,\infty)$ defined by \eqref{eq:hatw}, whose proof is straightforward and hence omitted.
\begin{lemma}\label{lem:reluv}
Let $(x,z)\in\R_+\times (0,\infty)$. It holds that $\hat{\rm w}(x,z)=zu\left(\frac{x}{z}\right)$.
\end{lemma}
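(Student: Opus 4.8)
The plan is to push the functional in \eqref{eq:hatw} through the normalization \eqref{eq:Y} and the change of measure \eqref{eq:Q}, handling the infinite horizon by truncation. Fix $(x,z)\in\R_+\times(0,\infty)$ and set $y:=x/z$. I would first record two elementary facts: since $Z$ is strictly positive, $\theta\mapsto\tilde\theta:=\theta/Z$ is a bijection from $\mathbb{U}^r$ onto $\tilde{\mathbb{U}}^r$ under which the normalized state $Y:=X/Z$ solves \eqref{eq:SDEY} with $Y_0=y$; and the definition of $L^Y$ in \eqref{eq:Y} gives $dL_t^X=Z_t\,dL_t^Y$. Consequently, for every $\theta\in\mathbb{U}^r$,
\begin{align*}
J(\theta;x,z)=\Ex_{x,z}\!\left[-\int_0^\infty e^{-\rho t}\,dL_t^X\right]=\Ex_{x,z}\!\left[-\int_0^\infty e^{-\rho t}Z_t\,dL_t^Y\right]=-\lim_{T\to\infty}\Ex_{x,z}\!\left[\int_0^T e^{-\rho t}Z_t\,dL_t^Y\right],
\end{align*}
the last step by monotone convergence, since $t\mapsto\int_0^t e^{-\rho s}Z_s\,dL_s^Y$ is nondecreasing.

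The core step would be to move the density $Z_t$ out of the integral for each fixed $T$. Because the benchmark in \eqref{factor-Z} is a zero-drift GBM with constant volatility, $Z$ is a genuine continuous nonnegative $\Px^W$-martingale, while $L^Y$ is a continuous, adapted, hence predictable, increasing process; the predictable-projection identity (equivalently, Fubini together with the martingale property of $Z$) then yields
\begin{align*}
\Ex_{x,z}\!\left[\int_0^T e^{-\rho t}Z_t\,dL_t^Y\right]=\Ex_{x,z}\!\left[Z_T\int_0^T e^{-\rho t}\,dL_t^Y\right],
\end{align*}
whenever the right-hand side is finite (otherwise $J(\theta;x,z)=-\infty$, a case that does not affect the supremum below). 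Since $\int_0^T e^{-\rho t}\,dL_t^Y$ is $\F_T$-measurable, applying \eqref{eq:Q} and using the initial normalization $Z_0=z$ rewrites this as $z\,\Ex^{\Qx^W}[\int_0^T e^{-\rho t}\,dL_t^Y]$, where under $\Qx^W$ the process $Y$ obeys \eqref{eq:dynamcsY2Q} driven by the $\Qx^W$-Brownian motion $\tilde{W}^{\kappa}$ and started at $Y_0=y$, the control entering only through $\tilde\theta=\theta/Z$.

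Combining the two displays, $J(\theta;x,z)=-z\lim_{T\to\infty}\Ex^{\Qx^W}[\int_0^T e^{-\rho t}\,dL_t^Y]=z\,\Ex^{\Qx^W}[-\int_0^\infty e^{-\rho t}\,dL_t^Y]$ by monotone convergence once more. Taking the supremum over $\theta\in\mathbb{U}^r$, which coincides with the supremum over $\tilde\theta\in\tilde{\mathbb{U}}^r$, and pulling the positive constant $z$ out of the supremum gives
\begin{align*}
\hat{{\rm w}}(x,z)=\sup_{\theta\in\mathbb{U}^r}J(\theta;x,z)=z\sup_{\tilde\theta\in\tilde{\mathbb{U}}^r}\Ex^{\Qx^W}\!\left[-\int_0^\infty e^{-\rho t}\,dL_t^Y\right]=z\,u(y)=z\,u(x/z),
\end{align*}
which is the claim.

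The one genuinely delicate point will be the interchange $\Ex[\int_0^T e^{-\rho t}Z_t\,dL_t^Y]=\Ex[Z_T\int_0^T e^{-\rho t}\,dL_t^Y]$: one must justify the Fubini/optional-projection argument against the singular random measure $dL_t^Y$ and, relatedly, one cannot avoid the finite-horizon truncation, because a zero-drift GBM tends to $0$ almost surely and is hence not uniformly integrable, so \eqref{eq:Q} specifies $\Qx^W$ consistently only on each $\F_T$, not on $\F_\infty$. The remainder is routine bookkeeping with the scaling relations \eqref{eq:Y} and the bijection between $\mathbb{U}^r$ and $\tilde{\mathbb{U}}^r$.
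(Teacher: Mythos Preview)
The paper does not give a proof of Lemma~\ref{lem:reluv}; it simply says the proof is ``straightforward and hence omitted.'' Your argument supplies precisely the natural details: pass from $dL^X$ to $Z\,dL^Y$ via \eqref{eq:Y}, truncate to a finite horizon by monotone convergence, use the $\Px^W$-martingale property of the zero-drift GBM $Z$ together with integration by parts (equivalently, the predictable-projection identity against the continuous increasing process $\int_0^\cdot e^{-\rho s}dL_s^Y$) to replace $Z_t$ by $Z_T$ inside the expectation, change measure to $\Qx^W$, and undo the truncation. Your remark that the change of measure is consistent only on each $\F_T$ because a zero-drift GBM is not uniformly integrable is exactly the reason the truncation is needed; this is the only genuinely nontrivial bookkeeping and you handle it correctly.

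One small notational point worth flagging: the paper's formula \eqref{eq:Q} reads $\frac{d\Qx^W}{d\Px^W}|_{\F_t}=Z_t$ with $Z_0=z$, which taken literally gives $\Qx^W$ total mass $z$. For the lemma and for $u$ in \eqref{eq:stocontrolY00} to make sense as stated, $\Qx^W$ must be a probability measure, i.e.\ the intended density is $Z_t/z$. Your line ``using the initial normalization $Z_0=z$'' is the right fix and produces the factor $z$ in front of $u(x/z)$; just be explicit that you are reading \eqref{eq:Q} as $Z_t/Z_0$.
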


Lemma~\ref{lem:reluv} tells us that we can solve the simplified stochastic control problem \eqref{eq:stocontrolY00} with one-dimensional state process with reflection instead of problem \eqref{eq:hatw}.  Then, the corresponding value function satisfies the following  HJB equation given by, for $y>0$,
\begin{align}\label{eq:HJB-uYY-infinite}
&\sup_{\tilde{\theta}\in\R^n}\left\{u'(y)\tilde{\theta}^{\top}\mu +u''(y)\tilde{\theta}^{\top}\sigma\sigma^{\top}\tilde{\theta}-\sqrt{1-\kappa^2}\sigma_Z(y+1)\tilde{\theta}^{\top}\sigma\eta u''(y)\right\}\nonumber\\
&\qquad+\frac{1}{2}\sigma_Z^2(y+1)^2u''(y)=\rho u(y)
\end{align}
with Neumann boundary condition $u'(0)=1$. If the value function $u$ is strictly concave, then the optimal feedback portfolio is given by 
\begin{align}\label{eq:optimal-conditon-u-infinite}
\tilde{\theta}^{*}(y)=-(\sigma \sigma^{\top})^{-1}\frac{u'(y) \mu-\sqrt{1-\kappa^2}\sigma_Z(y+1)\sigma\eta u''(y)}{u''(y)}, \quad\forall y\geq0.
\end{align}
By plugging \eqref{eq:optimal-conditon-u-infinite} into \eqref{eq:HJB-uYY-infinite}, we have
\begin{align}\label{eq:HJB-u-infinite}
-\alpha \frac{(u'(y))^2}{u''(y)}+\sqrt{1-\kappa^2}\zeta(y+1)u'(y)+\frac{1}{2}\sigma_Z^2\kappa^2(y+1)^2u''(y)=\rho u(y),
\end{align}
where the coefficients are defined by $\alpha:=\frac{1}{2}\mu^{\top}(\sigma\sigma^{\top})^{-1}\mu$ and $\zeta:=\sigma_Z \eta^{\top}\sigma^{-1}\mu$. We consider the  candidate solution of Eq.~\eqref{eq:HJB-u-infinite} given by
\begin{align}\label{eq:sol-u}
    u(y)=\frac{\lambda-1}{\lambda}(1+y)^{\frac{\lambda}{\lambda-1}},\quad \forall y\geq0. 
\end{align}
Here, $\lambda\in(0,1)$ is the unique solution to the following equation:
\begin{align}\label{eq:lambda}
\ell(\lambda):=\alpha \lambda(\lambda-1)^2+\rho(\lambda-1)^2-\sqrt{1-\kappa^2}\zeta\lambda(\lambda-1)-\frac{1}{2}\kappa^2\sigma_Z^2\lambda=0.
\end{align}
Hence, Eq.~\eqref{eq:lambda} has a unique root $\lambda\in(0,1)$. Based on the above results, we have the following verification result for the strong control problem, whose proof is standard and hence omitted. 
\begin{proposition}\label{thm:verification-strong}
The function  $u(y)$ for $y\geq0$ given by \eqref{eq:sol-u} is a classical solution of the HJB equation \eqref{eq:HJB-uYY-infinite}. Define the following optimal feedback control function by, for all $y\in\R_+$,
\begin{align}\label{eq:theta}
\tilde{\theta}^*(y)=
(1-\lambda)(y+1)(\sigma \sigma^{\top})^{-1}\mu+\sqrt{1-\kappa^2}\sigma_Z(y+1)(\sigma \sigma^{\top})^{-1}\sigma\eta.
\end{align}
Consider the controlled state process $Y^*=(Y_t^*)_{t \geq 0}$ that obeys the following SDE, for all $t \geq 0$,
\begin{align}\label{eq:optimal-Y}
Y_t^*=-\int_0^t\sigma_Z(Y_s+1)d\tilde{W}^{\kappa}_s+\int_0^t \tilde{\theta}^*(Y_s^*)^{\top}\mu ds+\int_0^t \tilde{\theta}^*(Y_s^*)^{\top}\sigma dW_s+L_t^{Y^*},
\end{align}
with $L_0^{Y^*}=y$. Let $\tilde{\theta}_t^*=\tilde{\theta}^*(Y_t^* )$ for all $t \geq 0$. Then $\tilde{\theta}^*=(\tilde{\theta}_t^*)_{t \geq 0} \in \tilde{\mathbb{U}}^{\mathrm{r}}$ is an optimal investment strategy. 
\end{proposition}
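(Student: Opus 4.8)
The statement contains two distinct claims --- that the explicit function $u$ of \eqref{eq:sol-u} is a classical solution of the Neumann problem \eqref{eq:HJB-uYY-infinite}, and that the feedback policy $\tilde\theta^*$ of \eqref{eq:theta} is admissible and optimal --- and I would treat them in that order. For the first I would simply differentiate: from $u(y)=\tfrac{\lambda-1}{\lambda}(1+y)^{\lambda/(\lambda-1)}$ one gets $u'(y)=(1+y)^{1/(\lambda-1)}$ and $u''(y)=\tfrac{1}{\lambda-1}(1+y)^{(2-\lambda)/(\lambda-1)}$, so $u'(0)=1$, and since $\lambda\in(0,1)$ gives $\lambda-1<0$ one has $u''<0$ on $\R_+$; thus $u$ is strictly concave, the pointwise maximizer in \eqref{eq:HJB-uYY-infinite} is the one displayed in \eqref{eq:optimal-conditon-u-infinite}, and the identity $-u'(y)/u''(y)=(1-\lambda)(1+y)$ turns \eqref{eq:optimal-conditon-u-infinite} into exactly the affine feedback \eqref{eq:theta}. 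Substituting $u',u''$ into the reduced equation \eqref{eq:HJB-u-infinite} and dividing by the common factor $(1+y)^{\lambda/(\lambda-1)}$ reduces the PDE to the scalar equation $\ell(\lambda)=0$ of \eqref{eq:lambda}, whose root $\lambda\in(0,1)$ exists and is unique as recorded there (existence being immediate from $\ell(0)=\rho>0>-\tfrac12\kappa^2\sigma_Z^2=\ell(1)$). This gives that $u$ of \eqref{eq:sol-u} is a $C^2$ solution of \eqref{eq:HJB-uYY-infinite}.

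Next I would verify admissibility of $\tilde\theta^*$. Since $y\mapsto\tilde\theta^*(y)$ in \eqref{eq:theta} is affine, the drift and diffusion coefficients of the reflected SDE \eqref{eq:optimal-Y} are globally Lipschitz with linear growth and the reflecting coefficient $-\sigma_Z(y+1)$ does not vanish near $0$; hence, by the standard well-posedness theory for reflected SDEs on $[0,\infty)$ (the Skorokhod problem with Lipschitz coefficients), \eqref{eq:optimal-Y} has a unique strong solution $(Y^*,L^{Y^*})$ with $Y^*\ge 0$, and a Gronwall estimate gives $\sup_{t\le T}\Ex^{\Qx^W}[|Y^*_t|^p]<\infty$ for all $p\ge1$ and $T<\infty$. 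Through the normalization \eqref{eq:Y} and the measure change \eqref{eq:Q} this says precisely that the associated un-normalized portfolio belongs to $\mathbb{U}^r$, i.e.\ $\tilde\theta^*\in\tilde{\mathbb{U}}^r$.

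For optimality I would run the standard verification argument, modified to absorb the local-time term. Fix an arbitrary $\theta\in\tilde{\mathbb{U}}^r$ with state $Y^\theta$ solving \eqref{eq:dynamcsY2Q} and apply It\^o's formula to $t\mapsto e^{-\rho t}u(Y^\theta_t)$ under $\Qx^W$: the drift part equals $e^{-\rho t}$ times (the generator of the diffusion \eqref{eq:dynamcsY2Q} applied to $u$ and evaluated at $\tilde\theta_t$) minus $\rho u(Y^\theta_t)$, which is $\le 0$ by \eqref{eq:HJB-uYY-infinite}, with equality exactly when $\tilde\theta_t=\tilde\theta^*(Y^\theta_t)$; the local-time part is $e^{-\rho t}u'(Y^\theta_t)\,dL^{Y^\theta}_t=e^{-\rho t}\,dL^{Y^\theta}_t$ since $dL^{Y^\theta}$ is carried by $\{Y^\theta_t=0\}$ and $u'(0)=1$; the remaining terms are stochastic integrals. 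Stopping along a localizing sequence $\tau_n\uparrow\infty$, taking $\Qx^W$-expectations and rearranging gives
\begin{align*}
\Ex^{\Qx^W}\!\Bigl[\int_0^{\tau_n\wedge T}e^{-\rho t}\,dL^{Y^\theta}_t\Bigr]\ &\ge\ \Ex^{\Qx^W}\!\bigl[e^{-\rho(\tau_n\wedge T)}u(Y^\theta_{\tau_n\wedge T})\bigr]-u(y).
\end{align*}
Since $\lambda\in(0,1)$ makes $u$ bounded on $\R_+$ (indeed $\tfrac{\lambda-1}{\lambda}\le u<0$) with bounded derivative, letting first $T\to\infty$ and then $n\to\infty$ sends the right-hand side to $-u(y)$ by dominated convergence (the transversality $e^{-\rho(\tau_n\wedge T)}u(\cdot)\to 0$ at infinity) while the left-hand side increases to $\Ex^{\Qx^W}[\int_0^\infty e^{-\rho t}\,dL^{Y^\theta}_t]$ by monotone convergence; hence $-\Ex^{\Qx^W}[\int_0^\infty e^{-\rho t}\,dL^{Y^\theta}_t]\le u(y)$ for every $\theta\in\tilde{\mathbb{U}}^r$, so $u$ dominates the value function of \eqref{eq:stocontrolY00}. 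Rerunning the same computation with $\theta=\tilde\theta^*$ --- for which the stochastic integrals are genuine martingales, because $u'(Y^*)(1+Y^*)$ and $u'(Y^*)\tilde\theta^*(Y^*)^\top\sigma$ are bounded --- turns every inequality into an equality and yields $-\Ex^{\Qx^W}[\int_0^\infty e^{-\rho t}\,dL^{Y^*}_t]=u(y)$. Therefore $u$ of \eqref{eq:sol-u} is the value function of \eqref{eq:stocontrolY00} and $\tilde\theta^*\in\tilde{\mathbb{U}}^r$ is optimal. The only genuinely non-mechanical points are the well-posedness of \eqref{eq:optimal-Y} under the feedback and the $T\to\infty$ passage together with the martingale property of the stochastic integrals; both are painless here precisely because $\tilde\theta^*$ is affine and $u,u'$ are bounded, so I do not expect a real obstacle.
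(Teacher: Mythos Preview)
Your argument is correct and is exactly the ``standard'' verification that the paper announces but omits: differentiate to check that \eqref{eq:sol-u} solves \eqref{eq:HJB-u-infinite} with $u'(0)=1$ and $u''<0$, read off the maximizer \eqref{eq:theta}, and then run It\^o's formula on $e^{-\rho t}u(Y_t^{\theta})$ using the Neumann condition to absorb the local-time term. The only substantive point is the passage $T\to\infty$, and your observation that $\lambda\in(0,1)$ forces $\lambda/(\lambda-1)<0$ so that both $u$ and $u'$ are bounded on $\R_+$ (hence $u'(y)(1+y)=(1+y)^{\lambda/(\lambda-1)}$ is bounded too) is precisely what makes the transversality and the martingale property immediate; this is the clean way to close the argument and is presumably what the authors had in mind.
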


\section{The Continuous-time Reinforcement Learning Approach}\label{sec:RL}

In this section, the model is assumed to be unknown to the fund manager, i.e., all model parameters $\mu=(\mu_1,\ldots,\mu_d)^{\top}$, $\sigma=(\sigma_{ij})_{d\times d}$, $\sigma_Z$, $\kappa$ and $\eta=(\eta_1,\ldots,\eta_d)^{\top}\in[-1,1]^d$ are unknown. Thereby, the classical control approach in the previous section is not applicable. Our goal is to develop a continuous-time reinforcement learning algorithm to find the optimal tracking portfolio in problem \eqref{eq:stocontrolY00}. 

Reinforcement learning allows the decision maker to learn the optimal action in the unknown environment through the interactions with the environment as the repeated trial-and-error procedure. Specifically, the agent exercises a sequel of actions $\tilde{\theta}=(\tilde{\theta}_t)_{t\geq 0}$ and observe responses from $(Y^{\tilde{\theta}},L^Y)=(Y^{\tilde{\theta}}_t,L_t^Y)_{t\geq 0}$ along with a stream of discounted running rewards $(-e^{-\rho t}dL_t^{Y})_{t\geq 0}$, and continuously update and improve his actions based on these observations.

 To describe the exploration step in reinforcement learning, we randomize the action $\tilde{\theta}$ and consider its distribution. Assume that the probability space is rich enough to support the uniformly distributed random variable on $[0,1]$ that is independent of $(\tilde{W}^{\kappa},W^1,\ldots,W^d)$ and can be used to generate other random variables with specified density functions. Let $K=(K_t)_{t\in[0,T]}$ be a process of mutually independent copies of a uniform random variable on $[0,1]$ which is also independent of the BMs $(W^0,W^1,\ldots,W^d)$, the construction of which requires a suitable extension of probability space (c.f. \citealt{Sun06}). We then further expand the filtered probability space to $(\Omega, \mathcal{F},\Fx, \mathbb{Q})$ where $\Fx=(\mathcal{F}_t^W \vee \sigma\left(K_s, 0 \leq s \leq t\right))_{t\geq 0}$ and the probability measure $\Qx$, now defined on $\Fx$, is an extension from $\Qx^W$ (i.e. the two probability measures coincide when restricted to $\Fx^W$). In particular, let $\bm{\pi}:y \in \R_+\to \bm{\pi}(\cdot \mid y) \in \mathcal{P}({\cal A})$ be a given (feedback) policy with ${\cal A}:=\R^d$, and $\mathcal{P}({\cal A})$ is a suitable collection of probability density functions. At each time $t\geq0$, an action $\tilde{\theta}^{\bm \pi}_t$  is generated from the joint density $\bm{\pi}\left(\cdot \mid Y_t\right)$.  Fix a policy $\bm{\pi}$ and an initial state $y\geq0$, we can consider the following reflected SDE:
\begin{align}\label{eq:sample-Y}
dY_t^{\bm{\pi}}&=-\sigma_Z(Y_t^{\bm{\pi}}+1)d\tilde{W}^{\kappa}_t+(\tilde{\theta}^{\bm \pi}_t)^{\top}\mu dt+(\tilde{\theta}^{\bm \pi}_t)^{\top}\sigma dW_t+dL_t^{Y^{\bm{\pi}}},~t> 0; \quad Y^{\bm{\pi}}_0=y.
\end{align}
Here, $L^{Y^{\bm{\pi}}}=(L_t^{Y^{\bm{\pi}}})_{t\geq0}$ denotes the local time of the state process $Y^{\bm{\pi}}=(Y_t^{\bm{\pi}})_{t\geq0}$ at the level $0$.

To encourage exploration, we adopt the Shannon's entropy regularizer suggested in \cite{Wang20} that leads to the following value function associated to a given policy $\bm{\pi}$ that
\begin{align}\label{eq:obj-fuc}
J(y ; \bm{\pi})= \mathbb{E}^{\mathbb{Q}}\left[-\gamma\int_0^\infty  e^{-\rho t}\ln {\bm{\pi}}\left(\tilde{\theta}_t^{{\bm \pi}} \mid Y^{\bm{\pi}}_t\right) \mathrm{d} t-\int_0^\infty e^{-\rho t}dL_t^{Y^{\bm{\pi}}}\Big| Y_0^{\bm{\pi}}=y\right],
\end{align}
where $\mathbb{E}^{\mathbb{Q}}$ is the expectation with respect to both the Brownian motion and the action randomization. In the above, $\gamma > 0$ is a given parameter indicating the level of exploration, also known as the temperature parameter.

 Similar to \cite{Wang20}, we can show that the average of the sample trajectories $(Y^{\bm{\pi}}_t)_{t\geq 0}$ converges to $(\tilde{Y}_t^{\bm{\pi}})_{t\geq 0}$, which satisfies the following reflected SDE:    
\begin{align}\label{eq:tilde-Y-pi}
d \tilde{Y}_t^{\bm{\pi}}=\tilde{b}(\tilde{Y}^{\bm{\pi}}_t,\bm{\pi}(\cdot \mid \tilde{Y}^{\bm{\pi}}_t)) dt+\tilde{\sigma}(\tilde{Y}^{\bm{\pi}}_t,\bm{\pi}(\cdot \mid \tilde{Y}^{\bm{\pi}}_t)) dB_t+dL_t^{\bm{\pi}},
\end{align}
where $B=(B_t)_{t\geq 0}$ is a standard BM independent of $(\tilde{W}^{\kappa},W)$, the coefficient functions $(\tilde{b},\tilde{\sigma})$ are respectively defined by, for $(y,\bm{\pi})\in\R_+\times\mathcal{P}(\mathcal{A})$,
\begin{align*}
\begin{cases}
\displaystyle \tilde{b}(y,\bm{\pi}):=\int_{\mathcal{ A}} \tilde{\theta}^{\top}\mu\bm{\pi}(\tilde{\theta}\mid y) d\tilde{\theta}, \\[1.2em]
\displaystyle\tilde{\sigma}(y,\bm{\pi}):=\sqrt{\int_{ \mathcal{A}} \left(\frac{1}{2}\tilde{\theta}^{\top}\sigma\sigma^{\top}\tilde{\theta}-\sqrt{1-\kappa^2}\sigma_Z(y+1)\tilde{\theta}^{\top}\sigma\eta+\frac{\sigma_Z^2}{2}(y+1)^2\right)\bm{\pi}(\tilde{\theta}\mid y) d\tilde{\theta}}.
\end{cases}
\end{align*} 
Here, $L^{{\bm{\pi}}}=(L_t^{{\bm{\pi}}})_{t\geq0}$ denotes the local time process of the state process $\tilde{Y}^{\bm{\pi}}=(\tilde{Y}_t^{\bm{\pi}})_{t\geq0}$ at the level $0$,  namely, $L^{{\bm{\pi}}}=(L_t^{{\bm{\pi}}})_{t\geq0}$ is a continuous non-decreasing process satisfying $\int_{0}^t{\bf1}_{\tilde{Y}_s^{\bm{\pi}}=0}dL_s^{{\bm{\pi}}}=L_t^{{\bm{\pi}}}$ for all $t\geq0$. It follows from the property of Markovian projection in \cite{BS2013} that $Y_t^{\bm{\pi}}$ and $\tilde{Y}_t^{\bm{\pi}}$ have the same distribution for any $t\geq 0$. As a consequence, the value function \eqref{eq:obj-fuc} is equivalent to the following relaxed control form (see also \citealt{WKushner1990} and \citealt{Fleming1999}):
\begin{align}\label{eq:RL-val}
J(y ; \bm{\pi})=  \mathbb{E}^{\mathbb{Q}^W}\left[-\gamma\int_0^\infty e^{-\rho t} \int_{\mathcal{A}} \ln \bm{\pi}\left(\tilde{\theta} \mid  \tilde{Y}_t^{\bm{\pi}}\right) \bm{\pi}\left(\tilde{\theta}\mid \tilde{Y}_t^{\bm{\pi}}\right) d\tilde{\theta} dt- \int_0^\infty e^{-\rho t} dL_t^{\bm{\pi}}\Big| \tilde{Y}_0^{\bm{\pi}}=y\right].
\end{align}
The task of reinforcement learning is to find the optimal policy to attain the maximum of the value function that
\begin{align}\label{eq:optimal-RL-val}
v(y)=\max _{\bm{\pi}\in \Pi} J(y; \bm{\pi}),\quad y\geq0,
\end{align}
where the set $\Pi$ stands for the set of admissible (stochastic) policies. The following gives the precise definition of admissible policies.

\begin{definition}\label{def:admissible-pi}
A policy $\bm{\pi}$ is called admissible, that is, $\pi\in \Pi$, if
\begin{itemize}
\item[(i)]  $\bm{\pi}$ takes the feedback form as $\bm{\pi}_t={\bm \pi}(\cdot|Y_t)$ for $t\geq 0$, where $\bm{\pi}(\cdot|\cdot):{\cal A}\times \R_+\to \R$ is a measurable function and $\bm{\pi}(\cdot|y)\in{\cal P}( {\cal A})$ for all $y\in\R_+$;
\item[(ii)] the SDE \eqref{eq:tilde-Y-pi} admits a unique strong solution for any initial $y \in\R_+$.
\end{itemize}
\end{definition}

Assume that the value function \eqref{eq:RL-val} under a given admissible policy $\bm{\pi}\in\Pi$ is smooth enough, we can see that it satisfies the following Neumann problem, for $\bm{\pi}\in\mathcal{P}(\mathcal{A})$,
\begin{align}\label{RL}
\begin{cases}
\displaystyle \int_{\mathcal {A}}\left[H(\tilde{\theta},y,J_y(y;\bm{\pi}),J_{yy}(y;\bm{\pi}))
-\gamma \ln\bm{\pi}(\tilde{\theta}\mid y)\right]\bm{\pi}(\tilde{\theta}\mid y)d\tilde{\theta}\\[0.8em] \displaystyle\qquad\qquad\qquad\qquad\qquad\qquad\qquad+\frac{\sigma_Z^2 }{2}(y+1)^2J_{yy}(y;\bm{\pi})=\rho J(y;\bm{\pi}),\\[1em]
\displaystyle J_y(0;{\bm \pi})=1.
\end{cases}
\end{align}
On the other hand, the optimal value function defined by \eqref{eq:optimal-RL-val} satisfies the following exploratory HJB equation with a Neumann boundary condition that, for $y\geq0$,
\begin{align}\label{RL-HJB}
\begin{cases}
\displaystyle \sup_{{\bm \pi}\in{\mathcal{ P}}({\mathcal {A}})}\int_{\mathcal {A}}\left[H(\tilde{\theta},y,v_y(y),v_{yy}(y))
-\gamma \ln\bm{\pi}(\tilde{\theta})\right]\bm{\pi}(\tilde{\theta})d\tilde{\theta} +\frac{\sigma_Z^2}{2}(y+1)^2v_{yy}(y)=\rho  v(y),\\[1.7em]
\displaystyle v_y(0)=1.
\end{cases}
\end{align}
Here, the Hamilton operator $H:{\cal A}\times \R_+\times\R\times\R\to\R$ is defined by
\begin{align}\label{eq:Halmoper}
H(\tilde{\theta},y,P,Q):=\tilde{\theta}^{\top}\mu P+\frac{1}{2}\tilde{\theta}^{\top}\sigma\sigma^{\top}\tilde{\theta} Q-\sqrt{1-\kappa^2}\sigma_Z(y+1)\tilde{\theta}^{\top}\sigma\eta Q.
\end{align}
Moreover, by assuming $v_{yy}<0$, the candidate optimal policy is a Gaussian measure after normalization that 
\begin{align}\label{eq:optimal-candidate-policy}
{\bm{ \pi}}^*(\tilde{\theta}\mid y)&=\frac{ \exp\left\{\frac{1}{\gamma}H(\tilde{\theta},y,v_y(y),v_{yy}(y))\right\}}{ \int_{\mathcal {A}}\exp\left\{\frac{1}{\gamma}H(\tilde{\theta},y,v_y(y),v_{yy}(y))\right\}d\tilde{\theta} }.
\end{align}

In what follows, we first establish the policy improvement theorem. 

\begin{theorem}[Policy Improvement Theorem]\label{thm:PIT}
For any given $\bm{\pi} \in \Pi$, assume that the value function $J(\cdot;\bm{\pi})\in C^2(\R_+)$ satisfies Eq. \eqref{RL} with the Neumann boundary condition and $J_{yy}(\cdot;{\bm{\pi}})<0$ for any $y\in\R_+$. We consider the  mapping ${\cal I}$ on $\Pi$ defined by
\begin{align*}
\mathcal{I}(\bm{\pi})&:=\frac{ \exp\left\{\frac{1}{\gamma}H(\cdot,y,J_y(y;\bm{\pi}),J_{yy}(y;\bm{\pi}))\right\}}{ \int_{\mathcal {A}}\exp\left\{\frac{1}{\gamma}H(\tilde{\theta},y,J_y(y;\pi),J_{yy}(y;\bm{\pi}))\right\}d\tilde{\theta} }\nonumber\\
&=\mathcal{N}\left(\tilde{\theta} ~\Big| -(\sigma\sigma^{\top})^{-1}\frac{\mu J_y(y;\bm{\pi})-\sqrt{1-\kappa^2}\sigma_Z(y+1)\sigma\eta J_{yy}(y;\bm{\pi})}{ J_{yy}(y;\bm{\pi})},-(\sigma\sigma^{\top})^{-1}\frac{\gamma}{J_{yy}(y;\bm{\pi})}\right),
\end{align*}
where denote by ${\cal N}(\tilde{\theta}|a,b)$ the Gaussian density function with mean vector $a\in\R^d$ and covariance matrix $b\in\R^{d\times d}$. Then, we have
\begin{itemize}
\item[(i)]Denote by ${\bm \pi}^{\prime}={\cal I}(\bm{\pi})$.  If ${\bm\pi}^{\prime}\in \Pi$ satisfies
\begin{align}\label{eq:condition-pi}
\lim_{T\to\infty}\Ex^{\Qx^W}\left[e^{-\rho T}J(\tilde{Y}_T^{{\bm{\pi}}^{\prime}};\bm{\pi})\right]=0,
\end{align}
then it holds that $J\left(y;\bm{\pi}^{\prime}\right) \geq J(y;\bm{\pi})$ for all $y\in\R_+$.

\item[(ii)]If the map ${\cal I}$ has a fixed point $\bm{\pi}^*\in \Pi$ satisfying 
\begin{align}\label{eq:condition-optimal-pi}
\begin{cases}
\displaystyle \lim_{T\to\infty}\Ex^{\Qx^W}\left[e^{-\rho T}J(\tilde{Y}_T^{{\bm{\pi}}^{*}};\bm{\pi}^*)\right]=0,\\[1.2em]
\displaystyle \limsup_{T\to\infty}\Ex^{\Qx^W}\left[e^{-\rho T}J(\tilde{Y}_T^{{\bm{\pi}}};\bm{\pi}^*)\right]\geq 0,\quad \forall {\bm \pi}\in \Pi,
\end{cases}
\end{align}
then $\bm{\pi}^*$ is the optimal policy that $v(y)=\max _{\bm{\pi}\in \Pi} J(y; \bm{\pi})=J(y; \bm{\pi}^*)$.

\item [(iii)] In particular, if we choose the the temperature parameter $\gamma=\frac{\rho}{d}$, the map ${\cal I}$ has an explicit fixed point $\bm{\pi}^*\in\Pi$ given by
\begin{align}\label{eq:fixpoint-pi}
\bm{\pi}^* (\cdot|y)= \mathcal{N}\left(y~\Big|(\sigma\sigma^{\top})^{-1}(\mu+\sigma_z\sqrt{1-\kappa^2}\sigma \eta)(1+y),(\sigma\sigma^{\top})^{-1}\gamma(1+y)^2\right).
\end{align}
\end{itemize}
\end{theorem}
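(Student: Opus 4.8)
The plan is to prove the three parts in sequence, with part (i) carrying the analytic weight and parts (ii)–(iii) being corollaries of the machinery built in (i). For part (i), I would start from the PDE \eqref{RL} satisfied by $J(\cdot;\bm\pi)$ and apply It\^o's formula to the process $e^{-\rho t}J(\tilde Y_t^{\bm\pi'};\bm\pi)$, where $\tilde Y^{\bm\pi'}$ is the relaxed state process controlled by the improved policy $\bm\pi'=\mathcal I(\bm\pi)$. The key algebraic observation is that, by the very construction of $\mathcal I$ as the Gibbs measure associated with the Hamiltonian $H(\cdot,y,J_y,J_{yy})$, the quantity
\[
\int_{\mathcal A}\bigl[H(\tilde\theta,y,J_y,J_{yy})-\gamma\ln\bm\pi'(\tilde\theta\mid y)\bigr]\bm\pi'(\tilde\theta\mid y)\,d\tilde\theta
\]
equals the supremum over all densities of the same functional evaluated at $\bm\pi$'s derivatives (this is the standard variational characterization: the Gibbs measure maximizes expected reward plus entropy), and in particular it is $\ge$ the value attained by $\bm\pi$ itself, which by \eqref{RL} equals $\rho J(y;\bm\pi)-\tfrac12\sigma_Z^2(y+1)^2 J_{yy}(y;\bm\pi)$. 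Feeding this inequality into the drift term of the It\^o expansion, the $J_{yy}$-terms cancel against the diffusion generator of $\tilde Y^{\bm\pi'}$ up to the martingale part, the Neumann condition $J_y(0;\bm\pi)=1$ handles the local-time term $dL_t^{\bm\pi'}$ (it contributes $-e^{-\rho t}\,dL_t^{\bm\pi'}$, exactly the running cost), and one obtains
\[
\Ex^{\Qx^W}\!\left[e^{-\rho T}J(\tilde Y_T^{\bm\pi'};\bm\pi)\right]-J(y;\bm\pi)\le -\,\Ex^{\Qx^W}\!\left[\int_0^T e^{-\rho t}\gamma\ln\bm\pi'(\tilde\theta_t^{\bm\pi'}\mid\tilde Y_t^{\bm\pi'})\,dt+\int_0^T e^{-\rho t}\,dL_t^{\bm\pi'}\right].
\]
Letting $T\to\infty$ and invoking the transversality hypothesis \eqref{eq:condition-pi} to kill the left-hand boundary term, the right-hand side converges to $-J(y;\bm\pi')$ by definition \eqref{eq:RL-val}, giving $J(y;\bm\pi')\ge J(y;\bm\pi)$.

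For part (ii), suppose $\bm\pi^*$ is a fixed point of $\mathcal I$. Applying part (i) with $\bm\pi=\bm\pi'=\bm\pi^*$ shows that $J(\cdot;\bm\pi^*)$ solves the \emph{exploratory HJB equation} \eqref{RL-HJB} (the supremum in \eqref{RL-HJB} is attained at $\bm\pi^*$ precisely because $\bm\pi^*=\mathcal I(\bm\pi^*)$ is the Gibbs maximizer). Then a symmetric verification argument: for an arbitrary competitor $\bm\pi\in\Pi$, apply It\^o to $e^{-\rho t}J(\tilde Y_t^{\bm\pi};\bm\pi^*)$, use that $J(\cdot;\bm\pi^*)$ satisfies the HJB \emph{inequality} against $\bm\pi$'s Hamiltonian integral, and the second transversality condition in \eqref{eq:condition-optimal-pi} to conclude $J(y;\bm\pi^*)\ge J(y;\bm\pi)$ for all $\bm\pi$, hence $J(\cdot;\bm\pi^*)=v$. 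The first condition in \eqref{eq:condition-optimal-pi} is what makes the earlier inequality an equality along $\bm\pi^*$ itself.

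For part (iii), with $\gamma=\rho/d$ the job is to verify that the explicit Gaussian \eqref{eq:fixpoint-pi} is genuinely a fixed point. I would plug the ansatz $J(y;\bm\pi^*)=\tfrac{\lambda-1}{\lambda}(1+y)^{\lambda/(\lambda-1)}$-type form (adjusted by the entropy contribution) into the Gaussian formula for $\mathcal I(\bm\pi^*)$ from the theorem statement: the mean $-(\sigma\sigma^\top)^{-1}(\mu J_y-\sqrt{1-\kappa^2}\sigma_Z(1+y)\sigma\eta J_{yy})/J_{yy}$ and variance $-(\sigma\sigma^\top)^{-1}\gamma/J_{yy}$ must reproduce $(\sigma\sigma^\top)^{-1}(\mu+\sigma_Z\sqrt{1-\kappa^2}\sigma\eta)(1+y)$ and $(\sigma\sigma^\top)^{-1}\gamma(1+y)^2$ respectively. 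Matching the variance forces $-J_{yy}(1+y)^2$ to be a constant, i.e.\ $J_{yy}\propto (1+y)^{-2}$, consistent with a log/power form; matching the mean then pins down the proportionality constant, and the choice $\gamma=\rho/d$ is exactly what makes the resulting constrained ODE (the exploratory HJB with the entropy term of a $d$-dimensional Gaussian, which contributes a $\tfrac{d}{2}\gamma\ln(\cdots)$ term) close in elementary functions. I expect the main obstacle to be \emph{not} these computations but the careful justification of the It\^o expansion and the $T\to\infty$ limit in the presence of the reflecting local time $L^{\bm\pi}$: one must confirm that $e^{-\rho t}J(\tilde Y_t^{\cdot};\cdot)$ is genuinely a true (not merely local) supermartingale/submartingale on $[0,T]$, which requires integrability estimates on $\tilde Y_t^{\bm\pi'}$ and on $\int_0^T e^{-\rho t}\,dL_t^{\bm\pi'}$ under $\Qx^W$ — here the Lipschitz-in-$y$ and monotonicity properties of the value function noted after \eqref{eq:hatw}, together with linear growth of the coefficients $(\tilde b,\tilde\sigma)$ in \eqref{eq:tilde-Y-pi}, should suffice, but the reflection term needs the standard a priori bound $\Ex[L_T^{\bm\pi}]<\infty$ for reflected diffusions with linearly growing coefficients.
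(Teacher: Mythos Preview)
Your proof of parts (i) and (ii) is correct and follows the paper's argument essentially step for step: It\^o applied to $e^{-\rho t}J(\tilde Y_t^{\bm\pi'};\bm\pi)$, the Gibbs variational inequality (the paper cites this as Lemma~2 of \cite{JZ22c}), the Neumann condition to absorb the local-time term, and the transversality hypothesis to close the limit. The paper is no more careful than you about the true-versus-local martingale issue; your remarks on integrability are already more thorough than what appears in the published proof.

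For part (iii) the approaches diverge. The paper does \emph{not} verify the fixed point directly from an ansatz for $J(\cdot;\bm\pi^*)$; instead it runs the iteration explicitly: start from any Gaussian $\bm\pi_0(\cdot\mid y)=\mathcal N(c_1(1+y),c_2(1+y)^2)$, solve the linear PDE \eqref{RL} to obtain $J^{\bm\pi_0}(y)=\ln(1+y)+C$, apply $\mathcal I$ to get $\bm\pi_1$ equal to the Gaussian \eqref{eq:fixpoint-pi}, compute $J^{\bm\pi_1}$ (again logarithmic), and observe that $\mathcal I(\bm\pi_1)=\bm\pi_1$. So the fixed point is reached in exactly two steps from any member of this Gaussian family. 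Your direct-verification route is perfectly valid and arguably cleaner, but note that the correct value function here is $J(y;\bm\pi^*)=\ln(1+y)+\text{const}$, \emph{not} the power form $\tfrac{\lambda-1}{\lambda}(1+y)^{\lambda/(\lambda-1)}$ you initially write down (that is the solution to the \emph{classical} HJB without entropy). Your own variance-matching argument already forces $J_{yy}=-(1+y)^{-2}$ exactly, which together with the Neumann condition $J_y(0)=1$ gives the logarithm with unit coefficient; the mean then matches automatically, and the role of $\gamma=\rho/d$ is to make the residual constant in the PDE close (the entropy of a $d$-dimensional Gaussian with covariance $\gamma(\sigma\sigma^\top)^{-1}(1+y)^2$ contributes a $\tfrac{d\gamma}{2}\ln(1+y)^2 = d\gamma\ln(1+y)$ term that must cancel against $\rho\ln(1+y)$). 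Either approach works; the paper's iteration has the minor advantage of exhibiting the fixed point as a limit of the improvement scheme rather than as a guess to be checked.
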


Theorem \ref{thm:PIT}-(i) provides a theoretical result for the policy improvement iteration;  while Theorem \ref{thm:PIT}-(ii) shows that the optimal policy is characterized by a fixed point of the iteration operator. In fact, Theorem \ref{thm:PIT}-(i) holds true for general Hamilton operator $H$ under some mild assumptions. However, it is usually difficult to verify the existence of fixed points and the transversality conditions \eqref{eq:condition-optimal-pi} for the general case. To address this issue, we choose a specific temperature parameter $\gamma=\rho/d$ in Theorem \ref{thm:PIT}-(iii), which significantly simplifies the finding of a fixed point of the mapping ${\cal I}$ by using the explicit expression \eqref{eq:fixpoint-pi}. The proof of Theorem \ref{thm:PIT}-(iii) also shows that if we start with a special Gaussian policy as $\bm{\pi}_0(\cdot|y)\sim \mathcal{N}(\cdot~|c_1(1+y),c_2(1+y)^2)$, then just after two steps of iterations, the  value function can no longer be improved, i.e., the fixed point is attained. We will mainly focus on this special choice of $\gamma=\rho/d$ and verify the transversality conditions \eqref{eq:condition-optimal-pi} in Section \ref{sec:example}, which implies that the fixed point \eqref{eq:fixpoint-pi} is the optimal policy.

Note that the policy improvement iteration in Theorem \ref{thm:PIT} depends on the knowledge of the model parameters, which are not known in the reinforcement learning procedure. Thus, in order to design an implementable algorithms, we turn to generalize the q-leaning theory initially proposed in \cite{JZ22c} for our purpose.

\subsection{q-function and martingale characterizations}

The aim of this section is to derive the q-function of our optimal tracking problem in continuous time and provide martingale characterizations of the q-function. We first give the definition of the q-function as the counterpart of the Q-function in the continuous time framework (c.f. \citealt{JZ22c}).
\begin{definition}[q-function]\label{def:q-function}
 The $q$-function of problem \eqref{eq:sample-Y}-\eqref{eq:obj-fuc} associated with a given policy $\bm{\pi} \in \Pi$ is defined as, for all $(y,\tilde{\theta}) \in\R_+ \times {\cal A}$,
\begin{align}\label{eq:q-function-def}
 q(y,\tilde{\theta};\bm{\pi}):=H\left(\tilde{\theta} ,y,J_y(y;\bm{\pi}), J_{yy}(y;\bm{\pi})\right)+\frac{\sigma_Z^2(y+1)^2}{2}J_{yy}(y;\bm{\pi})-\rho J(y;\bm{\pi}).
\end{align}
\end{definition}

 Moreover, we notice that the difference between the q function and the Hamiltonian is independent from $\theta$, and this allows one to use the q-function for a policy improvement theorem. Thus, the policy improvement mapping ${\cal I}$ in Theorem \ref{thm:PIT} can be expressed in terms of the q-function by $\mathcal{I}(\bm{\pi})=\frac{ \exp\left\{\frac{1}{\gamma}q(y,\cdot;\bm{\pi})\right\}}{ \int_{\mathcal {A}}\exp\left\{\frac{1}{\gamma}q(y,\tilde{\theta};\bm{\pi})\right\}d\tilde{\theta}}$.
This implies that the policy improvement iteration in Theorem \ref{thm:PIT} can be conducted by learning the q-function.

The following proposition gives the martingale condition to characterize the q-function for a given policy ${\bm \pi}$ when the value function is given.

\begin{proposition}\label{prop:martingale-condition}
Let a policy ${\bm \pi}\in \Pi$, its value function $J\in C^2(\R_+)$ satisfying Eq. \eqref{RL} and a continuous function $\hat{q}:\R_+ \times {\cal A} \rightarrow \R$ be given. Then, $\hat{q}(y,\tilde{\theta})=q(y,\tilde{\theta}; {\bm \pi})$ for all $(y,\tilde{\theta}) \in\R_+\times {\cal A}$ if and only if for all $y \in \R_+$, the following process
\begin{align}\label{eq:martinglae-J}
e^{-\rho t} J\left(Y_t^{\bm \pi}; {\bm \pi}\right)-\int_0^t e^{-\rho s}\hat{q}\left( Y_{s}^{\bm \pi},\tilde{\theta}_{s}^{\bm \pi}\right)d s-\int_0^t e^{-\rho s}dL_{s}^{\bm \pi},\quad t\geq0
\end{align}
is an $(\Fx,\mathbb{Q})$-martingale, where $Y^{\bm \pi}=(Y_t^{\bm \pi})_{t\geq0} $ is the solution to Eq.~\eqref{eq:sample-Y} with $Y_0^{\bm \pi}=y$. 
\end{proposition}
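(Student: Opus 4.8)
The plan is to prove both directions by exploiting the Itô–Tanaka formula applied to $e^{-\rho t}J(Y_t^{\bm\pi};{\bm\pi})$ together with the PDE \eqref{RL} satisfied by $J(\cdot;{\bm\pi})$. First I would apply Itô's formula to the semimartingale $e^{-\rho t}J(Y_t^{\bm\pi};{\bm\pi})$, using the dynamics \eqref{eq:sample-Y} of $Y^{\bm\pi}$. This produces a drift term involving $-\rho J + \mathcal{L}^{\tilde\theta_t^{\bm\pi}}J$ (where $\mathcal{L}$ is the generator associated with the action $\tilde\theta_t^{\bm\pi}$ and the diffusion coefficient coming from $\sigma_Z(Y_t^{\bm\pi}+1)$), a martingale part from the stochastic integrals against $\tilde W^\kappa$ and $W$, and a boundary term $e^{-\rho t}J_y(0;{\bm\pi})\,dL_t^{Y^{\bm\pi}}$ coming from the local time, which by the Neumann condition $J_y(0;{\bm\pi})=1$ equals exactly $e^{-\rho t}dL_t^{Y^{\bm\pi}}$. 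Recognizing that the drift term, evaluated along the realized action, is precisely $e^{-\rho t}\big[H(\tilde\theta_t^{\bm\pi},Y_t^{\bm\pi},J_y,J_{yy})+\tfrac{\sigma_Z^2}{2}(Y_t^{\bm\pi}+1)^2 J_{yy}-\rho J\big]=e^{-\rho t}q(Y_t^{\bm\pi},\tilde\theta_t^{\bm\pi};{\bm\pi})$ by Definition \ref{def:q-function}, I get that the process in \eqref{eq:martinglae-J} with $\hat q=q(\cdot,\cdot;{\bm\pi})$ equals a stochastic integral, hence is a local martingale; a standard integrability/localization argument (using $J\in C^2$, the Lipschitz growth of $J$ noted after \eqref{eq:hatw}, and admissibility of ${\bm\pi}$ ensuring the moments of $Y^{\bm\pi}$) upgrades it to a true martingale. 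This proves the "only if" direction.

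For the "if" direction, suppose the process \eqref{eq:martinglae-J} is an $(\Fx,\mathbb{Q})$-martingale for some given continuous $\hat q$. Subtracting the martingale identity just derived for $q(\cdot,\cdot;{\bm\pi})$, I obtain that $\int_0^t e^{-\rho s}\big(\hat q(Y_s^{\bm\pi},\tilde\theta_s^{\bm\pi})-q(Y_s^{\bm\pi},\tilde\theta_s^{\bm\pi};{\bm\pi})\big)\,ds$ is a continuous martingale of finite variation starting at $0$, hence identically zero. Differentiating in $t$ and using continuity of both $\hat q$ and $q$ in $(y,\tilde\theta)$, I conclude $\hat q(Y_t^{\bm\pi},\tilde\theta_t^{\bm\pi})=q(Y_t^{\bm\pi},\tilde\theta_t^{\bm\pi};{\bm\pi})$ for a.e.\ $t$, $\mathbb{Q}$-a.s. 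The final step is to promote this almost-everywhere-along-trajectories identity to the pointwise identity $\hat q(y,\tilde\theta)=q(y,\tilde\theta;{\bm\pi})$ for every $(y,\tilde\theta)\in\R_+\times{\cal A}$: this follows because the policy $\bm\pi(\cdot\mid y)$ is a Gaussian (full-support) density on ${\cal A}=\R^d$, so for each fixed $y$ in the support of the occupation measure of $Y^{\bm\pi}$ the conditional law of $\tilde\theta_t^{\bm\pi}$ given $Y_t^{\bm\pi}=y$ charges all of $\R^d$; combined with the fact that the reflected diffusion $Y^{\bm\pi}$ visits a neighborhood of every point of $\R_+$ (irreducibility of the reflected diffusion on $[0,\infty)$) and continuity of $\hat q-q$, the two continuous functions must agree everywhere.

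The main obstacle I anticipate is the last step of the "if" direction — rigorously passing from "equality along the sampled trajectory $(Y_t^{\bm\pi},\tilde\theta_t^{\bm\pi})$ for a.e.\ $t$" to "equality as functions on all of $\R_+\times{\cal A}$." This requires a genuine support/irreducibility argument for the reflected diffusion and for the randomized action, and is the place where admissibility of $\bm\pi$ (Definition \ref{def:admissible-pi}) and the non-degeneracy of the diffusion coefficient $\sigma\sigma^\top>0$ (guaranteeing $Y^{\bm\pi}$ genuinely diffuses and the Gaussian policy has full support) must be invoked. In contrast, the "only if" direction is essentially a one-line Itô/Tanaka computation plus a routine localization, and the identification of the local-time term with $e^{-\rho t}dL_t^{Y^{\bm\pi}}$ via the Neumann condition $J_y(0;{\bm\pi})=1$ is immediate. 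I would also note in passing that because $q-H$ is independent of $\tilde\theta$, one could alternatively phrase the whole argument in terms of the value process $J$ alone and recover $\hat q$ from the compensator, but the Itô-formula route above is the cleanest.
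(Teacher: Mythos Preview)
Your proposal is correct and follows essentially the same route as the paper: apply It\^o's formula to $e^{-\rho t}J(Y_t^{\bm\pi};{\bm\pi})$, use the Neumann condition $J_y(0;{\bm\pi})=1$ to absorb the reflection term, recognize the drift as $q(Y_t^{\bm\pi},\tilde\theta_t^{\bm\pi};{\bm\pi})$, and for the converse reduce to the vanishing of the finite-variation martingale $\int_0^t e^{-\rho s}(q-\hat q)\,ds$.

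The one tactical difference worth noting concerns the very step you flag as the obstacle. You propose to use irreducibility of the reflected diffusion on $[0,\infty)$ to reach every state $y$. The paper instead exploits the quantifier ``for all $y\in\R_+$'' in the statement: given a hypothetical point $(y_0,\tilde\theta_0)$ with $h:=q-\hat q\neq 0$, it \emph{starts} the process at $Y_0^{\bm\pi}=y_0$, introduces the exit time $\tau=\inf\{t:|Y_t^{\bm\pi}-y_0|>\delta\}$, and combines Lebesgue differentiation with a Fubini argument and the full support of $\bm\pi(\cdot\mid y)$ to force a contradiction. This sidesteps any global irreducibility claim and is a bit cleaner. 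Also, you should not assume $\bm\pi(\cdot\mid y)$ is Gaussian---what both arguments actually use is only that $\operatorname{supp}\bm\pi(\cdot\mid y)=\mathcal A$ (together with continuity in $y$), which is part of admissibility.
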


In the following, we strengthen Proposition \ref{prop:martingale-condition} and characterize the q-function and the value function associated with a given policy ${\bm \pi}$  simultaneously. This result is the crucial theoretical tool for designing the q-learning algorithm.
\begin{theorem}\label{thm:martingale-condition}
Let a policy ${\bm \pi} \in \Pi$, a function $\hat{J} \in C^2(\R_+)$ and a continuous function $\hat{q}:\mathbb{R}_+\times \mathcal{A} \to \mathbb{R}$ be given such that
\begin{align}
 &\qquad\quad\lim_{T\to\infty}\Ex^{\Qx}\left[e^{-\rho T}\hat{J}(Y_T^{\bm \pi})\right]=0, \label{eq:condition-J}\\
 &\int_{\mathcal{A}}[\hat{q}(y,\tilde{\theta})-\gamma \ln{\bm \pi}(\tilde{\theta}|y)] {\bm \pi} (\tilde{\theta}| y)d\tilde{\theta}=0, \quad \forall y\in\R_+.\label{eq:cindition-q}
\end{align}
Then, $\hat{J}$ and $\hat{q}$ are respectively the value function satisfying Eq. \eqref{RL} and the $q$-function associated with ${\bm \pi}$ if and only if for all $y\in \R_+$, the following process
\begin{align*}
e^{-\rho t} \hat{J}( Y_t^{\bm\pi})-\int_0^t e^{-\rho s}\hat{q}\left( Y_{s}^{\bm\pi},\tilde{\theta}_{s}^{\bm \pi}\right)d s-\int_0^t e^{-\rho s}dL_{s}^{\bm \pi},\quad t\geq0
\end{align*}
is an $(\Fx, \mathbb{Q})$-martingale, where $Y^{\bm \pi}=(Y_t^{\bm \pi})_{t\geq0}$ is the solution to Eq.~\eqref{eq:sample-Y} with $Y_0^{\bm \pi}=y$. If it holds further that ${\bm \pi}(\tilde{\theta}| y)=\frac{\exp\{\frac{1}{\gamma} \hat{q}(y,\tilde{\theta})\}}{\int_{\mathcal{A}} \exp\{\frac{1}{\gamma} \hat{q}(y,\tilde{\theta})\} d\tilde{\theta}}$ satisfying
\begin{align}\label{eq:condition-pi-geq}
 \limsup_{T\to\infty}\Ex^{\Qx}\left[e^{-\rho T}\hat{J}(Y_T^{{\bm{\pi}'}})\right]\geq 0,\quad \forall {\bm \pi}'\in \Pi,
\end{align}
then ${\bm \pi}$ is the optimal policy and $\hat{J}$ is the optimal value function.
\end{theorem}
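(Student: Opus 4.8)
The plan is to prove Theorem~\ref{thm:martingale-condition} in three stages: first the martingale equivalence for a fixed policy, then the characterization of $(\hat J,\hat q)$ as the value/q-function pair, and finally the optimality assertion under the extra Gibbs-form and transversality hypotheses. For the equivalence, I would apply It\^o's formula to $e^{-\rho t}\hat J(Y_t^{\bm\pi})$ along the dynamics \eqref{eq:sample-Y}, being careful to include the boundary term: since $\hat J\in C^2(\R_+)$, the local-time contribution is $e^{-\rho t}\hat J_y(0)\,dL_t^{\bm\pi}$ (using that $dL^{\bm\pi}$ is carried on $\{Y^{\bm\pi}=0\}$). Collecting the drift terms, the process in the statement has finite-variation part
\begin{align*}
\int_0^t e^{-\rho s}\Big[\tfrac12\sigma_Z^2(Y_s^{\bm\pi}+1)^2\hat J_{yy}(Y_s^{\bm\pi})+H\big(\tilde\theta_s^{\bm\pi},Y_s^{\bm\pi},\hat J_y,\hat J_{yy}\big)-\rho\hat J(Y_s^{\bm\pi})-\hat q(Y_s^{\bm\pi},\tilde\theta_s^{\bm\pi})\Big]ds\\
+\int_0^t e^{-\rho s}\big(\hat J_y(0)-1\big)\,dL_s^{\bm\pi},
\end{align*}
plus a local martingale from the $d\tilde W^\kappa$ and $dW$ integrals. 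The process is a true martingale iff both bracketed integrands vanish $ds\otimes d\Qx$-a.e.; the first vanishing is exactly $\hat q(y,\tilde\theta)=H(\tilde\theta,y,\hat J_y,\hat J_{yy})+\tfrac12\sigma_Z^2(y+1)^2\hat J_{yy}-\rho\hat J(y)$, i.e. the defining relation \eqref{eq:q-function-def} with $J$ replaced by $\hat J$, and the second forces $\hat J_y(0)=1$. This is the content I would package as the analogue of Proposition~\ref{prop:martingale-condition}, except that here $\hat J$ is not assumed a priori to solve \eqref{RL}.

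Second, I would upgrade this to the claim that $\hat J$ equals $J(\cdot;\bm\pi)$ and $\hat q$ equals $q(\cdot,\cdot;\bm\pi)$. Integrating the defining relation for $\hat q$ against $\bm\pi(\cdot|y)$ and using hypothesis \eqref{eq:cindition-q} to replace $\int\hat q\,\bm\pi\,d\tilde\theta$ by $\gamma\int\ln\bm\pi\,\bm\pi\,d\tilde\theta$, one sees $\hat J$ satisfies precisely the Neumann problem \eqref{RL} (the PDE from the $ds$-term, the boundary condition $\hat J_y(0)=1$ from the $dL$-term). Conversely, starting from a martingale, optional stopping combined with localization, the transversality condition \eqref{eq:condition-J}, and the admissibility of $\bm\pi$ (so that $\Ex^\Qx\int_0^\infty e^{-\rho s}|dL_s^{\bm\pi}|<\infty$ and the stochastic integrals are genuine martingales after localization) yields the Feynman--Kac representation
\[
\hat J(y)=\Ex^\Qx_y\Big[\int_0^\infty e^{-\rho s}\hat q(Y_s^{\bm\pi},\tilde\theta_s^{\bm\pi})\,ds+\int_0^\infty e^{-\rho s}\,dL_s^{\bm\pi}\Big],
\]
and then using \eqref{eq:cindition-q} again to rewrite $\hat q$ in terms of $-\gamma\ln\bm\pi$ shows the right-hand side is $J(y;\bm\pi)$ as defined in \eqref{eq:RL-val}; the identification $\hat q=q(\cdot,\cdot;\bm\pi)$ is then immediate from \eqref{eq:q-function-def}. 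I would lean on Lemma~\ref{lem:reluv}-type boundedness (the $1$-Lipschitz/linear-growth bounds established for $\hat{\rm w}$ and hence for the value functions) to justify the uniform integrability needed to send the localizing times and $T$ to infinity.

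Third, for the optimality claim, suppose $\bm\pi$ has the Gibbs form $\bm\pi(\tilde\theta|y)\propto\exp\{\hat q(y,\tilde\theta)/\gamma\}$ and \eqref{eq:condition-pi-geq} holds. Take any competitor $\bm\pi'\in\Pi$ and run the same It\^o expansion of $e^{-\rho t}\hat J(\tilde Y_t^{\bm\pi'})$ along \eqref{eq:tilde-Y-pi}. Now the $ds$-drift is $\int_{\cal A}H(\tilde\theta,\cdot,\hat J_y,\hat J_{yy})\bm\pi'(\tilde\theta|\cdot)d\tilde\theta+\tfrac12\sigma_Z^2(\cdot+1)^2\hat J_{yy}-\rho\hat J$, and subtracting the relation satisfied by $\hat J$ (its PDE involves $\sup_{\bm\pi}\int[H-\gamma\ln\bm\pi]\bm\pi\,d\tilde\theta$, attained at $\bm\pi$ by the Gibbs form) gives that
\[
e^{-\rho t}\hat J(\tilde Y_t^{\bm\pi'})+\int_0^t e^{-\rho s}\Big(\gamma\!\int_{\cal A}\!\ln\bm\pi'\,\bm\pi'\,d\tilde\theta\Big)ds+\int_0^t e^{-\rho s}dL_s^{\bm\pi'}
\]
is a supermartingale (the $dL$-term again contributes $(\hat J_y(0)-1)dL=0$). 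Taking expectations, rearranging, and letting $t=T\to\infty$ with the one-sided transversality bound \eqref{eq:condition-pi-geq} yields $\hat J(y)\ge J(y;\bm\pi')$; since $\hat J(y)=J(y;\bm\pi)$ by the second stage, $\bm\pi$ is optimal and $\hat J=v$. The main obstacle I anticipate is the integrability bookkeeping in the infinite-horizon, reflected setting: one must ensure that the stochastic integrals against $d\tilde W^\kappa$ and $dW$ are true martingales (not merely local) after localizing, that the local-time integral $\int_0^\infty e^{-\rho s}dL_s^{\bm\pi}$ has finite expectation so the Feynman--Kac representation is valid, and that the transversality conditions \eqref{eq:condition-J} and \eqref{eq:condition-pi-geq} genuinely control the boundary term at infinity; the Lipschitz estimate on the value function and admissibility (Definition~\ref{def:admissible-pi}) are the tools I would use, but verifying them uniformly along the controlled reflected dynamics is the delicate part.
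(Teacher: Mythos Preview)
Your proposal is correct and follows essentially the same route as the paper's proof. Two minor points: first, the Feynman--Kac representation you write has flipped signs---taking expectations of the martingale at time $T$ and sending $T\to\infty$ with \eqref{eq:condition-J} gives
\[
\hat J(y)=\Ex^{\Qx}_y\Big[-\int_0^\infty e^{-\rho s}\hat q(Y_s^{\bm\pi},\tilde\theta_s^{\bm\pi})\,ds-\int_0^\infty e^{-\rho s}dL_s^{\bm\pi}\Big],
\]
matching the signs in \eqref{eq:obj-fuc}. Second, for the optimality clause the paper does not repeat the supermartingale verification but simply observes that the Gibbs-form hypothesis means $\bm\pi=\mathcal I(\bm\pi)$ is a fixed point of the policy-improvement map and then invokes Theorem~\ref{thm:PIT}-(ii), with conditions \eqref{eq:condition-J} and \eqref{eq:condition-pi-geq} supplying the transversality requirements \eqref{eq:condition-optimal-pi}; your direct verification is equivalent, being precisely the content of that theorem's proof.
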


\subsection{Continuous-time q-learning algorithm}
In this subsection, we design q-learning algorithms to simultaneously learn and update the parameterized value function and the policy based on the martingale condition in Theorem \ref{thm:martingale-condition}. 

Given a policy ${\bm \pi}\in \Pi$, we parameterize the value function by a family of functions $J^\xi(\cdot)$, where $\xi \in \Theta \subset \R^{L_\xi}$ and $L_{\xi}$ is the dimension of the parameter, and parameterize the q-function by a family of functions $q^\psi(\cdot,\cdot)$, where $\psi \in \Psi \subset \R^{L_\psi}$ and $L_{\psi}$ is the dimension of the parameter.  Moreover,  we shall make the following assumptions for the parameterized family $\{J^{\xi}\}$ and $\{q^{\psi}\}$.
\begin{itemize}
\item[{\bf(A$_{\xi}$)}]  The family of functions $J^\xi(\cdot)$  is $C^1$ in $\xi$ and satisfies $\lim_{t\to \infty}\Ex^{\Qx}[e^{-\rho t} J^{\xi}(Y_t^{\bm \pi})]=0$ and the Neumann condition $J_y^{\xi}(0)=1$. Moreover, there exist continuous functions $G_J(\cdot)$ and $\tilde{J}(\cdot)$ such that $|J^{\xi}(y)|+|J_y^{\xi}(y)|+|J_{yy}^{\xi}(y)|\leq G_J(\xi)\tilde{J}(y)$ for all $(\xi,y)\in \Theta\times\R_+$.

\item[{\bf(A$_{\psi}$)}]   The family of functions $q^\psi(\cdot,\cdot)$ is $C^1$ in $\psi$ and satisfies that
\begin{align}
\int_{\mathcal{A}}[q^{\psi}(y,\tilde{\theta})-\gamma \ln{\bm \pi}(\tilde{\theta}\mid y)] {\bm \pi} (\tilde{\theta}\mid y)d\tilde{\theta}=0, \quad \forall y\in\R_+.\label{eq:cindition-q-xi}
\end{align}
 Furthermore, there exist continuous functions $G_q(\cdot)$ and $\tilde{q}(\cdot,\cdot)$ such that
\begin{align}\label{eq:assumption-q}
|q^{\psi}(y,\tilde{\theta})|\leq G_q(\psi)\tilde{q}(y,\tilde{\theta}),\quad\forall(\psi,y,\tilde{\theta})\in \Psi\times\R_+\times {\cal A}.
\end{align}
\end{itemize}
Then, the learning task is to find the ``optimal'' (in some sense) parameters $\xi$ and $\psi$.  The key step in the algorithm design is to enforce the martingale condition stipulated in Theorem \ref{thm:martingale-condition}. More precisely, let $M=(M_t)_{t\geq 0}$ be the  martingale given in  Theorem \ref{thm:martingale-condition}, i.e.,
\begin{align}
M_t&=e^{-\rho t} J\left(Y_t^{\bm \pi}\right)-\int_0^t e^{-\rho s}q\left( Y_{s}^{\bm \pi},\tilde{\theta}_{s}^{\bm \pi}\right)d s-\int_0^t e^{-\rho s}dL_{s}^{\bm \pi}\label{eq:M},
\end{align}
and $M^{\xi,\psi}=(M^{\xi,\psi}_t)_{t\geq 0}$ be its parameterized process defined by
\begin{align}
M_t^{\xi,\psi}&=e^{-\rho t} J^{\xi}\left(Y_t^{\bm \pi}\right)-\int_0^t e^{-\rho s}q^{\psi}\left( Y_{s}^{\bm \pi},\tilde{\theta}_{s}^{\bm \pi}\right)d s-\int_0^t e^{-\rho s}dL_{s}^{\bm \pi}\label{eq:M-xi-psi},
\end{align}
where $Y^{\bm \pi}=(Y_t^{\bm \pi})_{t\geq0} $ is the solution to \eqref{eq:sample-Y} with $Y_0^{\bm \pi}=y$. 

It follows from the martingale orthogonality condition that, for any test  adapted continuous process $\varsigma=(\varsigma_t)_{t\geq0}$ with $\Ex^{\Qx}[\int_0^{\infty}|\varsigma_t|^2d\left<M\right>_t]<\infty$, 
\begin{align}\label{eq:orthogonality-M}
\Ex^{\Qx}\left[\int_0^{\infty}\varsigma_tdM_t\right]=0.
\end{align}
In fact, the following result shows that this is a necessary and sufficient condition for the parameterized process $M^{\xi,\psi}$ to be a martingale. Its proof is omitted because it is similar to the one of Proposition 4 in \cite{JZ22b}.
\begin{proposition}\label{prop:orthogonality}
The parameterized process $M^{\xi,\psi}$ given by \eqref{eq:M-xi-psi} is a martingale if and only if
\begin{align}\label{eq:orthogonality}
\Ex^{\Qx}\left[\int_0^{\infty}\varsigma_tdM_t^{\xi,\psi}\right]=0,
\end{align}
 for any $\varsigma$ with $\Ex^{\Qx}[\int_0^{\infty}|\varsigma_t|^2d\left<M\right>_t]<\infty$.
\end{proposition}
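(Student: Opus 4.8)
The "only if" direction is routine. If $M^{\xi,\psi}$ is a martingale, then since $Y^{\bm\pi}$ solves the reflected It\^o SDE \eqref{eq:sample-Y} and $J^\xi\in C^2$, $M^{\xi,\psi}$ is a continuous locally square-integrable martingale, and for any test process $\varsigma$ with $\Ex^{\Qx}[\int_0^\infty|\varsigma_t|^2\,d\langle M\rangle_t]<\infty$ the stochastic integral $t\mapsto\int_0^t\varsigma_s\,dM^{\xi,\psi}_s$ is a true martingale starting from $0$ — after checking, via the growth bounds in (A$_\xi$), (A$_\psi$) and the moment estimates for $(Y^{\bm\pi},L^{\bm\pi})$, that $\langle M^{\xi,\psi}\rangle$ is dominated by $\langle M\rangle$ up to an integrable multiplier — so its expectation is $0$, which is exactly \eqref{eq:orthogonality}. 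The bulk of the work is the "if" direction.

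For the "if" direction the plan is to exhibit the semimartingale decomposition of $M^{\xi,\psi}$ and then kill its finite-variation part by a judicious choice of $\varsigma$. Applying It\^o's formula for the reflected diffusion \eqref{eq:sample-Y} to $e^{-\rho t}J^\xi(Y_t^{\bm\pi})$ yields
\[
d\big(e^{-\rho t}J^\xi(Y_t^{\bm\pi})\big)=e^{-\rho t}\Big[\big(\mathcal{L}^{\tilde\theta^{\bm\pi}_t}J^\xi\big)(Y_t^{\bm\pi})-\rho J^\xi(Y_t^{\bm\pi})\Big]dt+e^{-\rho t}\,d\mathcal{M}_t+e^{-\rho t}J^\xi_y(0)\,dL_t^{Y^{\bm\pi}},
\]
where $\mathcal{L}^{\tilde\theta}J^\xi(y):=H(\tilde\theta,y,J^\xi_y(y),J^\xi_{yy}(y))+\tfrac{\sigma_Z^2}{2}(y+1)^2J^\xi_{yy}(y)$ is the action-dependent generator appearing in \eqref{RL} and $\mathcal{M}$ is a continuous local martingale collecting the $d\tilde W^\kappa_t,dW_t$ terms. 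The Neumann condition $J^\xi_y(0)=1$ from (A$_\xi$) makes the local-time term precisely $e^{-\rho t}\,dL_t^{Y^{\bm\pi}}$, which cancels the corresponding term in \eqref{eq:M-xi-psi}. Hence $M^{\xi,\psi}_t=M^{\xi,\psi}_0+N_t+\int_0^t e^{-\rho s}\Gamma_s\,ds$, where $N$ is a continuous local martingale and $\Gamma_s:=\big(\mathcal{L}^{\tilde\theta^{\bm\pi}_s}J^\xi\big)(Y_s^{\bm\pi})-\rho J^\xi(Y_s^{\bm\pi})-q^\psi(Y_s^{\bm\pi},\tilde\theta^{\bm\pi}_s)$ is a continuous adapted process.

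Next I would take $\varsigma_t=\Gamma_t\,\mathbf{1}_{\{t\le\tau_n\}}$ for a localizing sequence $(\tau_n)$ chosen so that $\int_0^{\tau_n}|\Gamma_t|^2\,d\langle M\rangle_t$ and $\langle N\rangle_{\tau_n}$ are bounded — possible because of the polynomial-type bounds $(G_J,\tilde J)$, $(G_q,\tilde q)$ in (A$_\xi$), (A$_\psi$) together with standard moment estimates for the reflected diffusion $Y^{\bm\pi}$, its local time $L^{\bm\pi}$, and the action moments under $\bm\pi$. For such $\varsigma$, the orthogonality relation \eqref{eq:orthogonality} reads $\Ex^{\Qx}\big[\int_0^{\tau_n}\Gamma_t\,dN_t\big]+\Ex^{\Qx}\big[\int_0^{\tau_n}e^{-\rho t}\Gamma_t^2\,dt\big]=0$; the first term vanishes since $\int_0^{\cdot\wedge\tau_n}\Gamma_t\,dN_t$ is a genuine martingale, so $\Ex^{\Qx}\big[\int_0^{\tau_n}e^{-\rho t}\Gamma_t^2\,dt\big]=0$, and letting $n\to\infty$ by monotone convergence forces $\Gamma\equiv0$ $\,d\Qx\otimes dt$-a.e. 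Thus the finite-variation part of $M^{\xi,\psi}$ is null, so $M^{\xi,\psi}=M_0^{\xi,\psi}+N$ is a continuous local martingale, and the growth conditions in (A$_\xi$), (A$_\psi$) with the $L^2$-bounds for $(Y^{\bm\pi},L^{\bm\pi})$ upgrade it to a true martingale.

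The main obstacle I anticipate is the bookkeeping around the admissible test-process class and the localization: one must verify that the truncated candidate $\varsigma=\Gamma\,\mathbf{1}_{\{\cdot\le\tau_n\}}$ indeed lies in $\{\varsigma:\Ex^{\Qx}[\int_0^\infty|\varsigma_t|^2\,d\langle M\rangle_t]<\infty\}$ — which means bounding $\langle M\rangle$, built from the \emph{true} $(J,q)$, by the growth functions attached to the \emph{parameterized} families — and then justify interchanging the limit over $(\tau_n)$ with the infinite-horizon integral. This is precisely where the reflected-diffusion structure (moment control including the local time and the Gaussian-type action moments) and the discount factor $e^{-\rho t}$ on $[0,\infty)$ must be exploited; beyond this, the argument is the standard martingale-orthogonality reasoning of \cite{JZ22b}.
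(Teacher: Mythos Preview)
The paper omits the proof entirely, simply deferring to Proposition~4 of \cite{JZ22b}. Your proposal is correct and is precisely the \cite{JZ22b} argument adapted to the present reflected setting: the one new ingredient is the local-time term arising from It\^o's formula for the reflected diffusion, and you handle it correctly by invoking the Neumann condition $J^\xi_y(0)=1$ in {\bf(A$_\xi$)} so that it cancels against the $-\int_0^t e^{-\rho s}\,dL_s^{\bm\pi}$ term in \eqref{eq:M-xi-psi}, leaving a clean decomposition $M^{\xi,\psi}=M^{\xi,\psi}_0+N+\int_0^\cdot e^{-\rho s}\Gamma_s\,ds$ to which the standard ``test against the drift'' choice $\varsigma=\Gamma\,\mathbf{1}_{[0,\tau_n]}$ applies.
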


Proposition \ref{prop:orthogonality} tells that, to find the ``optimal'' parameters $\xi$ and $\psi$, it is enough to explore the solution $(\xi^*,\psi^*)$ of the martingale orthogonality equation \eqref{eq:orthogonality}. This can be implemented by using stochastic approximation to update parameters as, for $\chi\in\{\xi,\psi\}$,
\begin{align}\label{eq:SA}
 \chi \leftarrow \chi+\alpha_\chi \int_0^{\infty} \varsigma_tdM_t^{\xi,\psi},
\end{align}
where $\alpha_\chi>0$ is the learning rate. However,  the martingale orthogonality equation \eqref{eq:orthogonality} and the associated update rule \eqref{eq:SA} may not be directly applicable due to its requirement of having full trajectories of the infinite horizon. To overcome this difficulty, we first truncate the martingale orthogonality equation at a sufficiently long time $T$:
\begin{align}\label{eq:orthogonality-finite}
\Ex^{\Qx}\left[\int_0^{T}\varsigma_tdM_t^{\xi,\psi}\right]=0.
\end{align}
Note that the truncated martingale orthogonality equation \eqref{eq:orthogonality-finite} involves integration in continuous time, which still cannot be directly applicable. Thus, we turn to consider truncated discrete-time martingale orthogonality equation.  Let $K\in\mathbb{N}$ and $\Delta t=T/K$, consider the partition $0=t_0<t_1<t_2<\cdots<t_K=T$ with $t_k-t_{k-1}=\Delta t$ for $k=1,...,K$. In light of \eqref{eq:orthogonality-finite}, we present the truncated discretized martingale orthogonality equation as
\begin{align}\label{eq:orthogonality-finite-discrete}
\Ex^{\Qx}\left[\sum_{k=0}^{K-1}\varsigma_{t_k}\left(M_{t_{k+1}}^{\xi,\psi}-M_{t_k}^{\xi,\psi}\right)\right]=0.
\end{align}

The next theorem states the convergence  of the stochastic approximation algorithms when $\Delta t\to 0$ and $T\to \infty$.
\begin{theorem}\label{thm:convergence-policy}
 For a continuous test function $\varsigma$ satisfying $\Ex^{\Qx}[\int_0^{\infty}|\varsigma_t|^2d\left<M\right>_t]<\infty$, consider the martingale orthogonality condition equations \eqref{eq:orthogonality}, \eqref{eq:orthogonality-finite} and \eqref{eq:orthogonality-finite-discrete}.  Then, we have
\begin{itemize}
\item[(i)]  any convergent subsequence of the solution  $(\xi_{\mathrm{ML}}^*(T),\psi_{\mathrm{ML}}^*(T))$ that solves the truncated martingale orthogonality equation \eqref{eq:orthogonality-finite} converges to the solution of the martingale  orthogonality equation \eqref{eq:orthogonality}, that is
\begin{align*}
\lim_{T\to \infty}(\xi_{\mathrm{ML}}^*(T),\psi_{\mathrm{ML}}^*(T))=(\xi_{\mathrm{ML}}^*,\psi_{\mathrm{ML}}^*)
\end{align*}
solves Eq.~\eqref{eq:orthogonality}.
\item[(ii)] any convergent subsequence of the solution $(\xi_{\mathrm{ML}}^*(\Delta t;T),\psi_{\mathrm{ML}}^*(\Delta t;T))$ that solves the truncated discretized  martingale  orthogonality equation  \eqref{eq:orthogonality-finite-discrete} converges to the solution of the truncated martingale orthogonality equation \eqref{eq:orthogonality-finite}, that is
\begin{align*}
\lim _{\Delta t \rightarrow 0} (\xi_{\mathrm{ML}}^*(\Delta t;T),\psi_{\mathrm{ML}}^*(\Delta t;T))=(\xi_{\mathrm{ML}}^*(T),\psi_{\mathrm{ML}}^*(T)) 
\end{align*}
solves Eq.~\eqref{eq:orthogonality-finite}.
\end{itemize}
\end{theorem}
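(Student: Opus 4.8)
The plan is to recast each of the three orthogonality equations \eqref{eq:orthogonality}, \eqref{eq:orthogonality-finite}, \eqref{eq:orthogonality-finite-discrete} in the form $\Phi_\bullet(\xi,\psi)=0$ for continuous maps $\Phi_\infty,\Phi_T,\Phi_{\Delta t,T}$ of the parameters (valued in the relevant Euclidean space, with the test process $\varsigma$ chosen accordingly), to prove the locally uniform convergences $\Phi_T\to\Phi_\infty$ as $T\to\infty$ and $\Phi_{\Delta t,T}\to\Phi_T$ as $\Delta t\to 0$ (with $T$ fixed) on compact subsets of $\Theta\times\Psi$, and then to invoke the elementary fact that if continuous maps $f_n$ converge to $f$ uniformly on compacts, $f_n(z_n)=0$ and $z_n\to z$, then $f(z)=0$. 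Applying this fact along the assumed convergent subsequences of $(\xi^*_{\mathrm{ML}}(T),\psi^*_{\mathrm{ML}}(T))$ and of $(\xi^*_{\mathrm{ML}}(\Delta t;T),\psi^*_{\mathrm{ML}}(\Delta t;T))$ respectively then yields (i) and (ii).

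The common first step is a decomposition of the parameterized process $M^{\xi,\psi}$ in \eqref{eq:M-xi-psi}. Applying It\^o's formula to $e^{-\rho t}J^\xi(Y_t^{\bm\pi})$ along the reflected dynamics \eqref{eq:sample-Y}, the local-time increments enter through $e^{-\rho t}J_y^\xi(Y_t^{\bm\pi})\,dL_t^{Y^{\bm\pi}}$; since $L^{Y^{\bm\pi}}$ increases only on $\{Y_t^{\bm\pi}=0\}$ and $J_y^\xi(0)=1$ by the Neumann normalization in {\bf(A$_{\xi}$)}, this equals $e^{-\rho t}\,dL_t^{Y^{\bm\pi}}$ and exactly cancels the $-e^{-\rho t}\,dL_t^{Y^{\bm\pi}}$ appearing in \eqref{eq:M-xi-psi}. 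Hence $M^{\xi,\psi}=N^{\xi,\psi}+\int_0^\cdot e^{-\rho t}g^{\xi,\psi}(Y_t^{\bm\pi},\tilde\theta_t^{\bm\pi})\,dt$, where $N^{\xi,\psi}$ is a continuous local martingale whose bracket is controlled by $(J_y^\xi(Y_t^{\bm\pi}))^2$ times the (polynomially growing) local characteristics of $Y^{\bm\pi}$, and $g^{\xi,\psi}(y,\tilde\theta):=H(\tilde\theta,y,J_y^\xi(y),J_{yy}^\xi(y))+\frac{1}{2}\sigma_Z^2(y+1)^2J_{yy}^\xi(y)-q^\psi(y,\tilde\theta)-\rho J^\xi(y)$ with $H$ as in \eqref{eq:Halmoper}. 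Under the growth bounds in {\bf(A$_{\xi}$)} and {\bf(A$_{\psi}$)}, the uniform-in-time polynomial moment estimates for $(Y^{\bm\pi},\tilde\theta^{\bm\pi})$, and the $L^2$-integrability of $\varsigma$ against $d\langle M\rangle$, the stochastic integral of $\varsigma$ against $N^{\xi,\psi}$ has zero expectation over the relevant horizons, and likewise $\sum_k\varsigma_{t_k}(N^{\xi,\psi}_{t_{k+1}}-N^{\xi,\psi}_{t_k})$ has zero expectation by the $\mathcal F_{t_k}$-measurability of $\varsigma_{t_k}$. Consequently all three functionals collapse onto a single absolutely-continuous object: $\Phi_\infty(\xi,\psi)=\Ex^{\Qx}[\int_0^\infty\varsigma_te^{-\rho t}g^{\xi,\psi}(Y_t^{\bm\pi},\tilde\theta_t^{\bm\pi})\,dt]$, $\Phi_T(\xi,\psi)=\Ex^{\Qx}[\int_0^T\varsigma_te^{-\rho t}g^{\xi,\psi}(Y_t^{\bm\pi},\tilde\theta_t^{\bm\pi})\,dt]$, and $\Phi_{\Delta t,T}(\xi,\psi)=\Ex^{\Qx}[\int_0^T\varsigma_{\underline t}\,e^{-\rho t}g^{\xi,\psi}(Y_t^{\bm\pi},\tilde\theta_t^{\bm\pi})\,dt]$ with $\underline t:=t_k$ on $[t_k,t_{k+1})$; their continuity in $(\xi,\psi)$ follows from the assumed $C^1$-dependence and dominated convergence using the domination functions $G_J,\tilde J,G_q,\tilde q$.

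For (ii), fix $T$ and let $\Delta t\to 0$: since $t\mapsto\varsigma_t$ is (a.s.) uniformly continuous on $[0,T]$ we have $\varsigma_{\underline t}\to\varsigma_t$ uniformly on $[0,T]$, and the integrand is dominated by the $\Qx\otimes dt$-integrable process $\big(\sup_{0\le u\le T}|\varsigma_u|\big)\,e^{-\rho t}|g^{\xi,\psi}(Y_t^{\bm\pi},\tilde\theta_t^{\bm\pi})|$, so dominated convergence gives $\Phi_{\Delta t,T}\to\Phi_T$; the rate is uniform on compact $(\xi,\psi)$-sets because the dominating bound factorizes into a parameter-dependent continuous factor times a parameter-free integrable process. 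For (i), $\Phi_\infty(\xi,\psi)-\Phi_T(\xi,\psi)=\Ex^{\Qx}[\int_T^\infty\varsigma_te^{-\rho t}g^{\xi,\psi}(Y_t^{\bm\pi},\tilde\theta_t^{\bm\pi})\,dt]$, which by Cauchy--Schwarz, the growth bounds, the polynomial moment estimates and the exponential discount factor is bounded above by $\varepsilon(T)\,G(\xi,\psi)$ with $\varepsilon(T)\downarrow0$ and $G$ continuous, giving $\Phi_T\to\Phi_\infty$ locally uniformly. The limit-of-zeros fact stated in the first paragraph then closes both parts.

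The step I expect to be the main obstacle is the uniform tail control in part (i): showing that $\Ex^{\Qx}[\int_T^\infty\varsigma_te^{-\rho t}g^{\xi,\psi}(Y_t^{\bm\pi},\tilde\theta_t^{\bm\pi})\,dt]\to 0$ at a rate independent of $(\xi,\psi)$ on compacts, which forces one to reconcile the merely $L^2(d\langle M\rangle)$-integrability of $\varsigma$ (against a measure carrying an intrinsic $e^{-2\rho t}$ weight and a density $(J_y(Y_t^{\bm\pi}))^2$ that may degenerate as $Y_t^{\bm\pi}$ grows) with the polynomial-in-time moment growth of the reflected diffusion, and to confirm that the martingale part contributes zero in the limit $T\to\infty$ and not merely over finite horizons. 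The cancellation of the local-time terms through the Neumann normalization $J_y^\xi(0)=1$, while conceptually transparent, is the one genuinely new ingredient relative to the unreflected setting of \cite{JZ22b,JZ22c}, and it must be handled with care when the trajectories of $Y^{\bm\pi}$ accumulate positive local time at the boundary.
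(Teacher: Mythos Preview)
Your proposal is correct and follows the same overall architecture as the paper: both invoke the elementary ``limit of zeros'' lemma (quoted in the paper as Lemma~8 of \cite{JZ22b}), both reduce the problem to showing locally uniform convergence in $(\xi,\psi)$ of the maps $\Phi_T\to\Phi_\infty$ and $\Phi_{\Delta t,T}\to\Phi_T$, and both unpack $dM^{\xi,\psi}$ via It\^o's formula together with the growth bounds in {\bf(A$_\xi$)} and {\bf(A$_\psi$)}. The one substantive difference is the handling of the local-time contribution. You exploit the Neumann normalization $J_y^\xi(0)=1$ from {\bf(A$_\xi$)} to cancel the $dL^{Y^{\bm\pi}}$-term outright, so that all three functionals collapse to ordinary Lebesgue integrals of the single drift $g^{\xi,\psi}$; this makes the reflected case look formally identical to the unreflected one in \cite{JZ22b,JZ22c}. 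The paper does \emph{not} use this cancellation: it retains the local-time integrals and bounds them separately --- for part~(i) by noting that $\Ex^{\Qx}[\int_T^\infty e^{-\rho t}\varsigma_t\,dL_t^{\bm\pi}]$ is $(\xi,\psi)$-independent and hence trivially uniform, and for part~(ii) by a Riemann--Stieltjes argument comparing $\int_0^T e^{-\rho t}\varsigma_t\,dL_t^{\bm\pi}$ with its piecewise-constant-in-$\varsigma$ approximation. Your route is cleaner; the paper's route carries extra local-time bookkeeping but avoids relying on the Neumann normalization of the parametric family. A second, minor difference: for part~(ii) the paper controls $\varsigma_t-\varsigma_{\underline t}$ via the $L^2(\Qx)$ modulus of continuity combined with Cauchy--Schwarz, whereas you use pathwise uniform continuity on $[0,T]$ plus dominated convergence; both are valid under the stated continuity hypothesis on $\varsigma$.
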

A direct result is that  if the solution  $(\xi_{\mathrm{ML}}^*(\Delta t;T),\psi_{\mathrm{ML}}^*(\Delta t;T))$ of the truncated discretized martingale orthogonality equation  \eqref{eq:orthogonality-finite-discrete}  converges, then it holds that
\begin{align*}
\lim_{T\to \infty}\lim _{\Delta t \rightarrow 0} (\xi_{\mathrm{ML}}^*(\Delta t;T),\psi_{\mathrm{ML}}^*(\Delta t;T))=(\xi_{\mathrm{ML}}^*,\psi_{\mathrm{ML}}^*) 
\end{align*}
solves the martingale  orthogonality equation \eqref{eq:orthogonality}. Therefore, it provides a theoretical foundation for implementing the truncation and discretization in the learning algorithm.

Next we use the truncated discretized martingale orthogonality equation \eqref{eq:orthogonality-finite-discrete}, Theorem \ref{thm:PIT}, Theorem \ref{thm:martingale-condition}, Theorem \ref{thm:convergence-policy} to design the q-learning algorithm. To learn the optimal value function and q-function, we choose the proper parameterized function approximators $J^{\xi}$ and $q^{\psi}$ for $\xi \in \Theta \subset \R^{L_\xi}, \psi \in \Psi \subset \R^{L_\psi}$, which satisfy the assumptions {\bf(A$_{\xi}$)} and {\bf(A$_{\psi}$)}. Theorem \ref{thm:PIT} and the definition of q-function the parameterized form of the optimal policy as
\begin{align}\label{eq:pis-pi}
{\bm \pi}^{\psi}(\tilde{\theta}\mid y)=\frac{\exp \left\{\frac{1}{\gamma} q^{\psi}(y,\tilde{\theta} )\right\}}{\int_{\cal A} \exp \left\{\frac{1}{\gamma} q^{\psi}(y,\tilde{\theta})\right\}d\tilde{\theta} },\quad \forall (y,\tilde{\theta})\in \R_+\times {\cal A}.
\end{align}
Moreover, it can be easily see that with the policy ${\bm \pi}^{\psi}$ given in \eqref{eq:pis-pi}, the consistency condition in the assumption {\bf(A$_{\psi}$)} trivially holds that 
\begin{align}\label{eq:constraint}
 \int_{\cal A}\left[q^{\bm{\psi}}(y,\tilde{\theta})-\gamma \ln {\bm \pi}^{\psi}(\tilde{\theta}\mid y)\right]{\bm  \pi}^{\psi}(\tilde{\theta}\mid y) d\tilde{\theta}=0.
\end{align}
Based on the truncated discretized martingale orthogonality equation \eqref{eq:orthogonality-finite-discrete}, we can apply the stochastic approximation method \cite{Robbins1951} to update the parameters in the following manner, which corresponds to the TD(0) algorithm (c.f. \cite{Sutton1988} and \cite{JZ22b}):
\begin{align}\label{eq:SGD-continuous}
\xi \leftarrow \xi+\alpha_\xi \sum_{k=0}^{K-1}\iota_{t_k} G_k,\quad \psi \leftarrow \psi+\alpha_\psi \sum_{k=0}^{K-1}\varsigma_{t_k} G_k,
\end{align}
where $\alpha_{\xi},\alpha_{\psi}>0$ are the learning rates, $\iota,\varsigma$ are test functions, and for $k=0,1,\ldots,K-1$, the quantity $G_k$ is given by
\begin{align*}
G_k:=e^{-\rho t_{k+1}}J^\xi\left(Y_{t_{k+1}}^{\pi^{\bm \psi}}\right)-e^{-\rho t_{k}}J^\xi\left(Y_{t_k}^{\pi^{\bm \psi}}\right)-e^{-\rho t_{k}}q^{\bm{\psi}}\left(Y_{t_k}^{\pi^{\bm{\psi}}},\theta_{t_k}^{\pi^{\bm \psi}}\right) \Delta t-e^{-\rho t_{k}}\left(L_{t_{k+1}}^{\pi^{\bm{\psi}}}-L_{t_{k}}^{\pi^{\bm{\psi}}}\right) .
\end{align*}
Note that, by using the martingale orthogonality condition, we need at least $L_{\xi}+L_{\psi}$ equations to fully determine the parameters $(\xi,\psi)$. In this paper, we choose the test functions in the conventional sense that
\begin{align*}
\iota_t=e^{-\rho t}\frac{\partial J^\xi}{\partial \xi}\left(Y_t^{\pi^{\bm \psi}}\right),\quad \varsigma_t= e^{-\rho t}\frac{\partial q^\psi}{\partial \psi}\left( Y_s^{\bm{\pi}^\psi}, \theta_s^{\boldsymbol{\pi}^\psi}\right)  .
\end{align*}

Based on the above updating rules, we then present the pseudo-code of the offline q-learning algorithm in Algorithm \ref{Alg:tracking}. 
\ \\
\ \\

	\begin{algorithm}[h]
			\caption{\textbf{Offline q-Learning Algorithm}}
		    \label{Alg:tracking}
			\hspace*{0.02in} {\bf Input:} 
			 Initial state $y_0$, horizon $T$, time step $\Delta t$, number of episodes $N$, number of mesh grids $K$, initial learning rates $\alpha_\xi(\cdot), \alpha_\psi(\cdot)$  (a function of the number of episodes), functional forms of parameterized value function $J^\xi(\cdot)$, q-function $q^\psi(\cdot)$, policy $\boldsymbol{\pi}^\psi(\cdot \mid \cdot)$ and temperature parameter $\gamma$.\\
			\hspace*{0.02in} {\bf Required Program:} an environment simulator $(y^{\prime}, L^{\prime})=$ Environment $_{\Delta t}(t,y,L,\tilde{\theta})$ that takes current time-state pair $(t,y,L)$ and action $\theta$ as inputs and generates state $y^{\prime}$ and $L^{\prime}$ at time $t+\Delta t$ as outputs . \\
			\hspace*{0.02in} {\bf Learning Procedure:}
			\begin{algorithmic}[1]
\State Initialize $\xi$,~$\psi$ and $i=1$. 
				\While{$i<N$}  
				\State Initialize $j = 0$.  Observe initial state $y_0$ and store $y_{t_0}\leftarrow y_0$.
				\While{$j < K$}
				  
				    \State \ \ \ \ \    Generate action $\tilde{\theta}_{t_j} \sim \bm{\pi}^\psi\left(\cdot \mid y_{t_j}\right)$.
                   \State \ \ \ \ \   Apply $\tilde{\theta}_{t_j} $ to environment simulator $(y, L)=$ Environment  $_{\Delta t}(t_j, y_{t_j},L_{t_j},\tilde{\theta}_{t_j})$.
                   \State \ \ \ \ \   Observe  new state $y$ and $L$ as output. Store $y_{t_{j+1}} \leftarrow y$,  and $ L_{t_{j+1}} \leftarrow  L$. 
                \EndWhile{} 
                \State  For every $k=0,1,...,K-1$, compute 
\begin{align*}
 G_{k}&=J^\xi\left(y_{t_{k+1}}\right)-J^\xi\left(y_{t_k}\right)-q^{\bm{\psi}}\left(y_{t_k},\tilde{\theta}_{t_k}\right) \Delta t-\left(L_{t_{k+1}}-L_{t_{k}}\right)-\rho J^\xi\left(y_{t_k}\right)\Delta t.
\end{align*}

				\State Update $\xi$ and $\psi$ by
\begin{align*}
\xi &\leftarrow \xi+\alpha_\xi(i) \sum_{k=0}^{K-1} e^{-\rho t_k} \frac{\partial J^\xi}{\partial \xi}\left(y_{t_k}\right)G_{k}, \\
\psi &\leftarrow \psi+\alpha_\psi(i) \sum_{k=0}^{K-1}  e^{-\rho t_k} \frac{\partial q^\psi}{\partial \psi}\left(y_{t_k}, \tilde{\theta}_{t_k}\right)G_{k}.
\end{align*}
               \State   Update $i \leftarrow i+1$.
				          
              \EndWhile{}                
    \end{algorithmic}
\end{algorithm}

\section{Numerical Examples}\label{sec:example}

In this section, we consider our auxiliary control problem with reflections as an example beyond the LQ control framework to illustrate the proposed q-learning algorithm. In particular, we will first choose the temperature parameter $\gamma=\rho/d$, where we recall that $\rho$ is the discount factor and $d$ is number of risky assets. We also note that, in this case, Theorem \ref{thm:PIT}-(iii) asserts that the map ${\cal I}$ has a fixed point $\bm{\pi}^*$ given by \eqref{eq:fixpoint-pi}. However, it remains to verify the transversality condition \eqref{eq:condition-optimal-pi} to check the optimality of $\bm{\pi}^*$. 

In what follows, we first give the explicit solution of the exploratory HJB equation \eqref{eq:exploratory-HJB} and shows that $\bm{\pi}^*$ is indeed the optimal policy. 
Plugging \eqref{eq:optimal-candidate-policy} into the exploratory HJB equation \eqref{RL-HJB}, we get that
\begin{align}\label{eq:exploratory-HJB}
 &\frac{1}{2}\sigma_Z^2\kappa^2 (y+1)^2 v_{yy}(y)+\sqrt{1-\kappa^2}\zeta(y+1)v_y(y)+\gamma \ln\left(\sqrt{-\frac{(2\pi \gamma)^d}{|\sigma\sigma^{\top}| (v_{yy}(y))^d}}\right)\nonumber\\
&\qquad-\alpha\frac{(v_y(y))^2}{v_{yy}(y)}=\rho v(y)
\end{align}
 with the Neumann boundary condition $v_y(0)=1$. We also recall that  $\alpha=\frac{1}{2}\mu^{\top}(\sigma\sigma^{\top})^{-1}\mu$ and $\zeta=\sigma_Z \eta^{\top}\sigma^{-1}\mu$.
We conjecture that $v(y)$ for $y\geq0$ satisfies the form of
\begin{align}\label{eq:JAB}
v(y)=A\ln(1+y)+B,
\end{align}
for some constants $A$ and $B$. By plugging the expression into \eqref{eq:exploratory-HJB}, \eqref{eq:JAB} and using the choice of $\gamma=\rho/d$, we obtain that
\begin{align*}
A=1, \quad B=\frac{1}{\rho}\left(-\frac{1}{2}\sigma_Z^2\kappa^2+\sqrt{1-\kappa^2}\zeta+\frac{\gamma}{2}\ln\left(\frac{(2\pi \gamma)^d}{|\sigma\sigma^{\top}|}\right)+\alpha\right).
\end{align*}
It then follows that
\begin{align}\label{eq:optimal-J}
v(y)=\ln(1+y)+\frac{1}{\rho}\left(-\frac{1}{2}\sigma_Z^2\kappa^2+\sqrt{1-\kappa^2}\zeta+\frac{\gamma}{2}\ln\left(\frac{(2\pi \gamma)^d}{|\sigma\sigma^{\top}|}\right)+\alpha\right)
\end{align}
is a classical solution of Eq.~\eqref{eq:exploratory-HJB}. For the general temperature parameter $\gamma\neq \rho/d$, the classical solution to the exploratory HJB equation ~\eqref{eq:exploratory-HJB}, if it exists, is not in the simple form of \eqref{eq:optimal-J} and actually does not admit an explicit expression, which shows the significant influence of the entropy regularizer in our stochastic control problem when it is not the LQ type control.

Back to the special choice of $\gamma=\rho/d$, as a result of \eqref{eq:optimal-candidate-policy} and \eqref{eq:optimal-J}, the candidate optimal policy is given by the following Gaussian policy 
\begin{align}\label{eq:optimal-pi}
\bm{\pi}^* (\cdot|y)= \mathcal{N}\left(\tilde{\theta}\mid(\sigma\sigma^{\top})^{-1}(\mu+\sigma_z\sqrt{1-\kappa^2}\sigma \eta)(1+y),(\sigma\sigma^{\top})^{-1}\gamma(1+y)^2\right),
\end{align}
which is also the fixed point as shown in Theorem \ref{thm:PIT}-(iii). It is notable that both the mean and the variance of the Gaussian policy $\bm{\pi}^* (\cdot|y)$ depend on the state variable $y$, and in particular, its variance is increasing in $y$. That is, for the larger state process $Y_t$ or the state process $X_t$, the agent needs to implement the more random policies with the larger variance for the purpose of learning.

The following verification theorem shows that the classical solution of the exploratory HJB equation \eqref{RL-HJB} coincides with the value function and provides the optimal policy.
\begin{theorem}[Verification Theorem]\label{thm:verification}
Consider  the classical solution $v(y)$ for $y\in\R_+$ of the exploratory HJB equation \eqref{eq:exploratory-HJB} given by \eqref{eq:optimal-J}, the policy $\bm{\pi}^*$ given by \eqref{eq:optimal-pi}, and the controlled state process $Y^*=(Y_t^*)_{t \geq 0}$ that obeys the following reflected SDE, for all $t \geq 0$,
\begin{align}\label{eq:optimal-Y-pi}
d Y_t^{*}&=(2\alpha+\sqrt{1-\kappa^2}\zeta) (1+Y^*_t) dt+\sqrt{\alpha+\frac{d}{2}\gamma+\frac{1}{2}\kappa^2\sigma_Z^2+\sqrt{1-\kappa^2}\zeta}(1+Y^*_t)dB_t+dL_t^{*}.
\end{align}
Here, $L^*=(L_t^*)_{t\geq 0}$ is the local time process for the process $Y^*=(Y_t^*)_{t\geq 0}$ at the reflecting boundary $0$, namely, $L^*=(L_t^*)_{t\geq 0}$ is a continuous non-decreasing process satisfying $\int_{0}^t{\bf1}_{Y_s^{*}=0}dL_s^{*}=L_t^{*}$. Then, $\bm{\pi}^*$ is an optimal policy and $v(y)$ is the value function. That is, for all admissible ${\bm \pi} \in \Pi$, we have $J(y;{\bm\pi}) \leq v(y)$ for all $y\in\R_+$, where the equality holds when ${\bm \pi}={\bm \pi}^*$.

\end{theorem}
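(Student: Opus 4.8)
The plan is to run the standard verification argument for a concave HJB equation with reflection, adapted to the infinite-horizon discounted setting. Let $v$ be the explicit classical solution \eqref{eq:optimal-J}; note $v(y)=\ln(1+y)+\text{const}$ is concave with $v_y(y)=1/(1+y)>0$, $v_{yy}(y)=-1/(1+y)^2<0$, so the Gaussian policy \eqref{eq:optimal-pi} is well-defined and the supremum in \eqref{RL-HJB} is attained at $\bm{\pi}^*$. First I would fix an arbitrary admissible $\bm\pi\in\Pi$ with associated state process $\tilde Y^{\bm\pi}$ solving \eqref{eq:tilde-Y-pi} (equivalently $Y^{\bm\pi}$ solving \eqref{eq:sample-Y}, which has the same law), apply It\^o's formula to $e^{-\rho t}v(\tilde Y^{\bm\pi}_t)$, and use that the $dL^{\bm\pi}$-term contributes $\int_0^t e^{-\rho s} v_y(0)\,dL^{\bm\pi}_s=\int_0^t e^{-\rho s}\,dL^{\bm\pi}_s$ because $L^{\bm\pi}$ increases only when $\tilde Y^{\bm\pi}=0$ and $v_y(0)=1$. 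The generator term, after integrating the relaxed control against $\bm\pi(\cdot\mid y)$, is bounded above using the HJB inequality
\begin{align*}
\int_{\mathcal A}\!\big[H(\tilde\theta,y,v_y,v_{yy})-\gamma\ln\bm\pi(\tilde\theta\mid y)\big]\bm\pi(\tilde\theta\mid y)\,d\tilde\theta+\tfrac{1}{2}\sigma_Z^2(y+1)^2v_{yy}(y)\le \rho v(y),
\end{align*}
with equality precisely when $\bm\pi(\cdot\mid y)=\bm\pi^*(\cdot\mid y)$, by \eqref{RL-HJB} and \eqref{eq:optimal-candidate-policy}.

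Taking expectations and rearranging, this yields
\begin{align*}
v(y)\ge \Ex^{\Qx}\!\left[-\gamma\!\int_0^T\! e^{-\rho t}\ln\bm\pi(\tilde\theta^{\bm\pi}_t\mid\tilde Y^{\bm\pi}_t)\,dt-\!\int_0^T\! e^{-\rho t}\,dL^{\bm\pi}_t+e^{-\rho T}v(\tilde Y^{\bm\pi}_T)\right],
\end{align*}
and I then let $T\to\infty$. For the upper bound $J(y;\bm\pi)\le v(y)$ I need $\liminf_{T\to\infty}\Ex^{\Qx}[e^{-\rho T}v(\tilde Y^{\bm\pi}_T)]\ge 0$, and for the matching lower bound with $\bm\pi=\bm\pi^*$ I need $\lim_{T\to\infty}\Ex^{\Qx}[e^{-\rho T}v(Y^*_T)]=0$; both reduce, since $v$ grows only logarithmically, to controlling $\Ex^{\Qx}[e^{-\rho T}\ln(1+\tilde Y^{\bm\pi}_T)]$. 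I would obtain this from a moment estimate on the reflected SDE: the drift and diffusion coefficients of $\tilde Y^{\bm\pi}$ are (at most) linear in $y$ with coefficients controlled by the admissibility of $\bm\pi$, so a Gronwall-type argument gives $\Ex^{\Qx}[\ln(1+\tilde Y^{\bm\pi}_T)]\le C_{\bm\pi}(1+T)$ — actually for the explicit optimal dynamics \eqref{eq:optimal-Y-pi}, where $\tilde b,\tilde\sigma$ are exactly linear in $(1+y)$, one can compute $\Ex^{\Qx}[\ln(1+Y^*_T)]$ explicitly via It\^o and see it grows at most linearly in $T$ (the local time only helps, since it pushes $Y^*$ up away from $0$ and $\ln(1+\cdot)$ is increasing, but one checks the $dL^*$ contribution to $d\ln(1+Y^*_t)$ is bounded because it acts only at $Y^*=0$). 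Multiplying by $e^{-\rho T}$ then kills the limit. The entropy term requires a parallel check: $\Ex^{\Qx}\big[\int_0^\infty e^{-\rho t}|\ln\bm\pi(\tilde\theta^{\bm\pi}_t\mid\tilde Y^{\bm\pi}_t)|\,dt\big]<\infty$ so that $J(y;\bm\pi)$ is well-defined and the $T$-truncated integral converges to it; for a Gaussian policy this is again a polynomial-in-$y$ bound handled by the same moment estimates.

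The main obstacle is precisely this transversality/integrability bookkeeping: unlike the finite-horizon case, here one must quantify the growth of moments (and log-moments) of the reflected state under an arbitrary admissible policy and confirm the discount rate $\rho>0$ dominates it. I would handle the general admissible $\bm\pi$ by invoking whatever growth restriction is built into Definition \ref{def:admissible-pi} (unique strong solution of \eqref{eq:tilde-Y-pi}, plus the implicit requirement that $J(y;\bm\pi)$ be finite), which should be enough to get $\limsup_{T}\Ex^{\Qx}[e^{-\rho T}v(\tilde Y^{\bm\pi}_T)]\ge 0$ — note this inequality is essentially free because $v$ is bounded below (its constant term plus $\ln(1+y)\ge$ that constant), so in fact $e^{-\rho T}v(\tilde Y^{\bm\pi}_T)\ge e^{-\rho T}\cdot(\text{const})\to 0$, giving the needed $\liminf\ge 0$ with no work at all. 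The genuinely substantive estimate is therefore only the one for the optimal process $Y^*$, namely $\Ex^{\Qx}[e^{-\rho T}\ln(1+Y^*_T)]\to 0$, which I settle by the explicit-dynamics computation described above. Finally, chaining the two inequalities gives $J(y;\bm\pi)\le v(y)=J(y;\bm\pi^*)$ for all admissible $\bm\pi$, which is the assertion; alternatively, one can note that $v$ and $q(y,\tilde\theta):=H(\tilde\theta,y,v_y,v_{yy})+\tfrac12\sigma_Z^2(y+1)^2v_{yy}-\rho v$ satisfy the hypotheses of Theorem \ref{thm:martingale-condition} together with \eqref{eq:condition-pi-geq} (which is the same cost-free lower-bound observation), so optimality of $\bm\pi^*$ follows from that theorem as well, providing a cross-check.
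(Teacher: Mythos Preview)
Your proposal is correct and follows essentially the same route as the paper: reduce everything to the two transversality conditions, observe that $\limsup_{T\to\infty}\Ex^{\Qx}[e^{-\rho T}v(\tilde Y^{\bm\pi}_T)]\ge 0$ is free because $v$ is bounded below, and then show $\Ex^{\Qx}[e^{-\rho T}\ln(1+Y^*_T)]\to 0$ for the optimal process. The only place the paper is more explicit is in that last estimate: it sets $H_t:=\ln(1+Y^*_t)$, observes from \eqref{eq:optimal-Y-pi} that $H$ is a reflected Brownian motion with constant drift and diffusion, and uses the Skorokhod representation to bound its local time (hence $\Ex^{\Qx}[H_T]$) linearly in $T$---this is precisely the ``compute via It\^o'' you gesture at.
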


\begin{figure}[h]
\centering
\includegraphics[width=7cm]{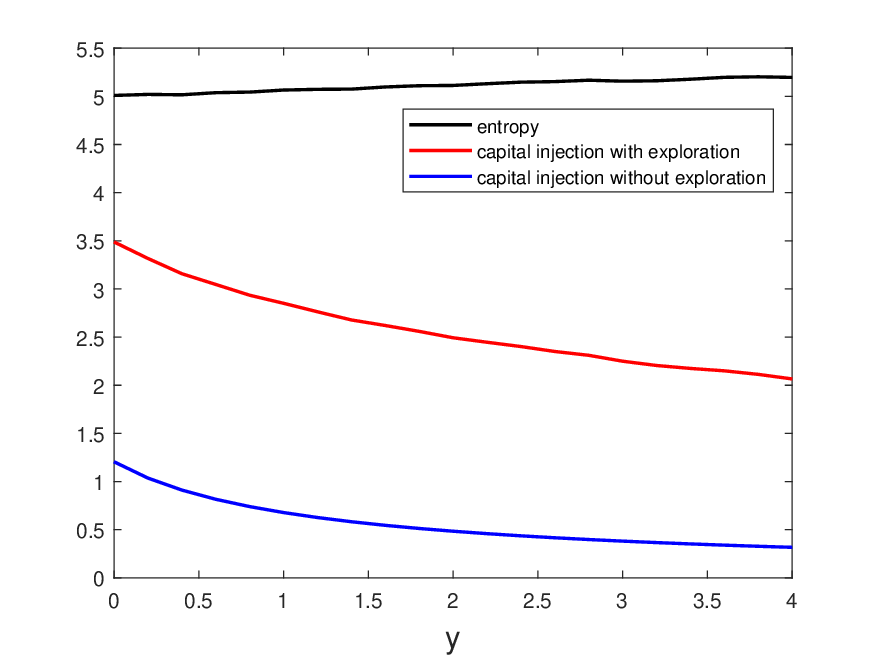}
\caption{ \small {Expected entropy and expected capital injection.} 
}\label{fig:entropy}
\end{figure}

\begin{remark}
When the model parameters are known, we present in 
Figure \ref{fig:entropy} the numerical comparison between the expectation of the discounted capital injection with exploration and the expectation of the discounted capital injection without exploration, illustrating the influence by the additional entropy regularizer or the policy exploration. As shown in Figure \ref{fig:entropy},  both expectations of the discounted capital injection with exploration and without exploration are decreasing with respect to the variable $y$, which can be explained by the fact that the larger value of $y$ indicates the larger initial distance between the portfolio process $V^\theta$ and the benchmark process $Z$, thereby it will have smaller chances for the capital injection. More importantly, Figure \ref{fig:entropy} shows that if the fund manager employs the exploratory policy to
learn the optimal policy in the unknown environment, the randomized actions will inevitably cause the underlying controlled processes to hit the reflection boundary $0$ more often comparing with the case using the strict control, thereby leading to the higher expected capital injection.
\end{remark}

Let us denote some parameters $(\xi^*,\psi_1^*,\psi_2^*,\psi_3^*)$  by
\begin{align}\label{eq:optimal-psi-xi}
\begin{cases}
\displaystyle \xi^*=\frac{1}{\rho}\left(-\frac{1}{2}\sigma_Z^2\kappa^2+\sqrt{1-\kappa^2}\zeta+\frac{\gamma}{2}\ln\left(\frac{(2\pi \gamma)^d}{|\sigma\sigma^{\top}|}\right)+\alpha\right),\\[0.9em]
\displaystyle \psi^*_1=\mu+\sigma_z\sqrt{1-\kappa^2}\sigma \eta,~~~\psi_2^*=\sigma,\\[0.5em]
\displaystyle\psi_3^*=\frac{1}{2}\sigma_Z^2(\kappa^2-1)-\alpha-\sqrt{1-\kappa^2}\zeta-\frac{\gamma}{2}\ln\left(\frac{(2\pi \gamma)^d}{|\sigma\sigma^{\top}|}\right).
\end{cases}
\end{align}

Using Definition \ref{def:q-function}, the q-function can be expressed by, for all $(y,\tilde{\theta})\in\R_+\times{\cal A}$,
\begin{align}\label{eq:q-function}
q(y,\tilde{\theta})&=\frac{(\psi_1^*)^{\top}\tilde{\theta}}{1+y}-\frac{\tilde{\theta}^{\top}\psi_2^*(\psi_2^*)^{\top}\tilde{\theta}}{2(1+y)^2}-\rho \ln(1+y)+\psi_3^*,
\end{align}
and the value function $v(y)$ can be written by
\begin{align}\label{vxi}
v(y)=\ln(1+y)+\xi^*,\quad\forall y\in\R_+.
\end{align}

Based on \eqref{eq:q-function}-\eqref{vxi}, for all $(y,\tilde{\theta})\in\R_+\times {\cal A}$, we can parameterize the optimal value function and the optimal q-function in the exact form as: 
\begin{align}\label{eq:parameter-J-q}
\begin{cases}
\displaystyle J^{\xi}(y)=\ln(1+y)+\xi, \\[0.6em]
\displaystyle q^{\bm \psi}(y,\tilde{\theta})=\frac{ \psi_1^{\top}\tilde{\theta}}{1+y}- \frac{\tilde{\theta}^{\top} \psi_2\psi_2^{\top}\tilde{\theta}^{\top}}{2(1+y)^2}-\rho \ln(1+y)+\psi_3
\end{cases}
\end{align}
with the parameters $(\xi,\psi_1,\psi_2)\in\R\times \R^d \times\R^{d\times d}$ to be learnt, the parameter $\psi_3$ satisfying 
\begin{align*}
\psi_3=-\frac{1}{2}\psi_1^{\top}(\psi_2\psi_2^{\top})^{-1}\psi_1-\frac{\gamma}{2}\ln\left(\frac{(2\pi \gamma)^d}{|\psi_2\psi_2^{\top}|}\right),
\end{align*}
 and the optimal policy can be parameterized by
$\pi^{\bm{\psi}}(\theta\mid y)=\frac{\exp \left\{\frac{1}{\gamma} q^{\bm{\psi}}(y,\theta )\right\}}{\int_{\cal A} \exp \left\{\frac{1}{\gamma} q^{\bm{\psi}}(y,\theta)\right\}d\theta}$. We can verify that the parameterized value function and q-function satisfy assumptions {\bf(A$_{\xi}$)} and {\bf(A$_{\psi}$)}. 

We consider the model with one risky asset (i.e., $d = 1$), and we set the coefficients of the simulator to be $\mu=0.2$, $\sigma=1$, $\sigma_Z=0.2$ and $\kappa=0.5$. Furthermore, we set $\gamma=\rho=0.2$ and the truncated horizon $T=12$, the time step $\Delta t=0.005$ and the number of episodes $N=2\times 10^4$. The learning rates $(\alpha_\psi,\alpha_\xi)$ are chosen by
\begin{align*}
\alpha_{\psi}(i)&=
\begin{cases}
\displaystyle \left(\frac{0.1}{i^{0.61}},\frac{0.01}{i^{0.61}}\right), &1\leq i\leq 1\cdot 10^4,\\[0.7em]
\displaystyle \left(\frac{0.05}{i^{0.81}},\frac{0.005}{i^{0.81}}\right) , &1\cdot 10^4< i\leq 2\cdot 10^4,
\end{cases}\quad
\alpha_{\xi}(i)&=
\begin{cases}
\displaystyle \frac{0.015}{i^{0.61}}, &1\leq i\leq 1\cdot 10^4,\\[0.7em]
\displaystyle \frac{0.005}{i^{0.81}}, &1\cdot 10^4< i\leq 2\cdot 10^4,
\end{cases}
\end{align*}
which decay along the iterations. Based on Algorithm \ref{Alg:tracking}, we plot in Figure \ref{fig:convergence} the numerical results on the
convergence of iterations for parameters $(\psi_1,\psi_2,\psi_3,\xi)$ in the optimal value function and the optimal q-function. The learnt parameters for the optimal value function and the optimal
 q-function are summarized in Table \ref{table:learnt-parameter}.

\begin{table}[h]   
\begin{center}   
\caption{The true value and learnt value of the parameters $(\psi_1,\psi_2,\psi_3,\xi)$.}  
\label{table:learnt-parameter} 
\begin{tabular}{ m{3cm}<{\centering} |m{2.2cm}<{\centering}| m{2.2cm}<{\centering}| m{2.2cm}<{\centering}| m{2.2cm}<{\centering}}   
\hline & $\psi_1$ & $\psi_2$ & $\psi_3$  &$\xi$ \\   
\hline  True value &0.3732 &1.0000 &-0.0050  &0.3624 \\ 
\hline   Learnt value &0.2520 &1.2410 &-0.0002  &0.3326\\  
\hline
\end{tabular}   
\end{center}   
\end{table}

\begin{figure}[h]
\centering
  \subfigure[]{
        \includegraphics[width=5cm]{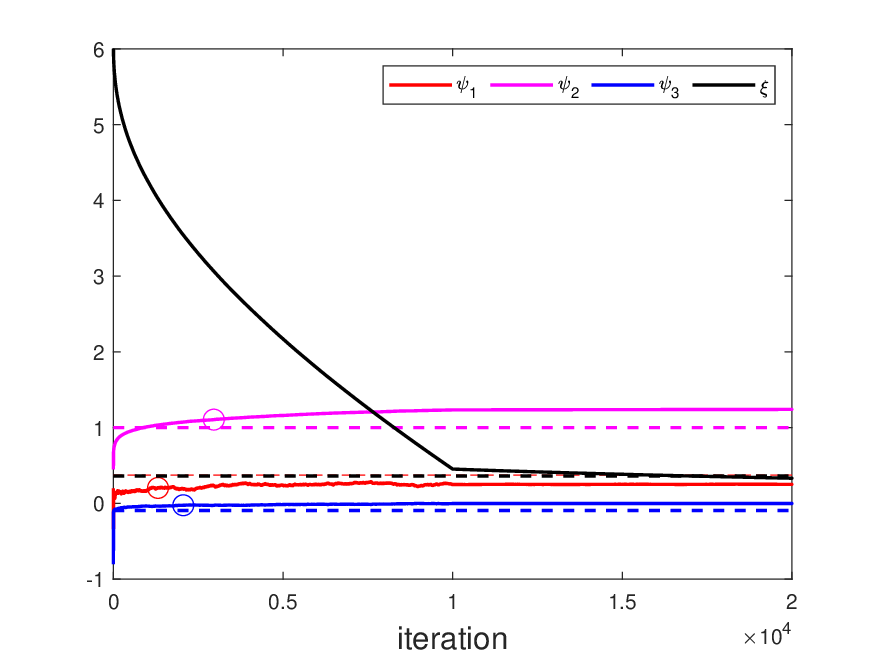}
    }\hspace{-8mm}
  \subfigure[]{
        \includegraphics[width=5cm]{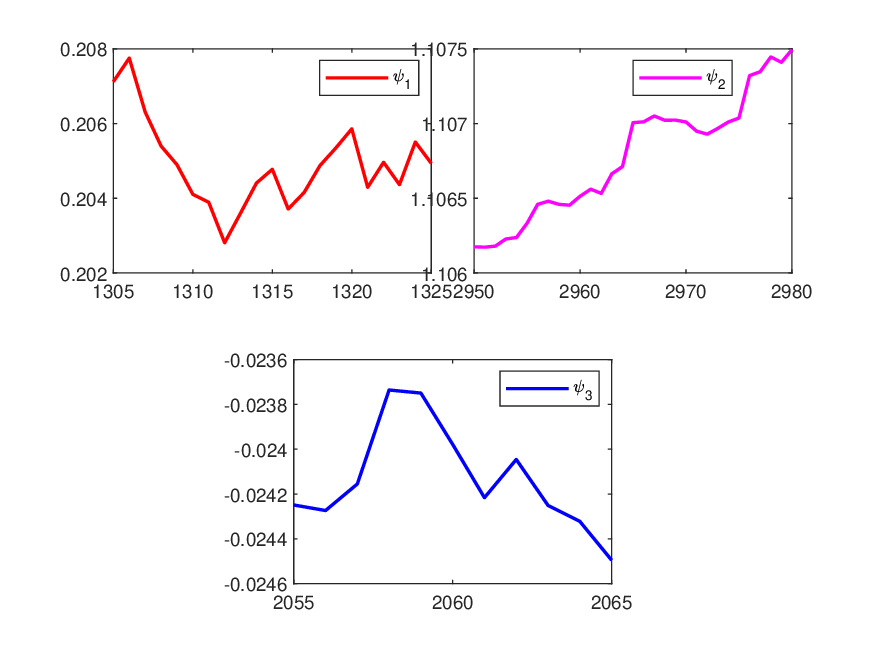}
    }\hspace{-8mm}
\subfigure[]{
        \includegraphics[width=5cm]{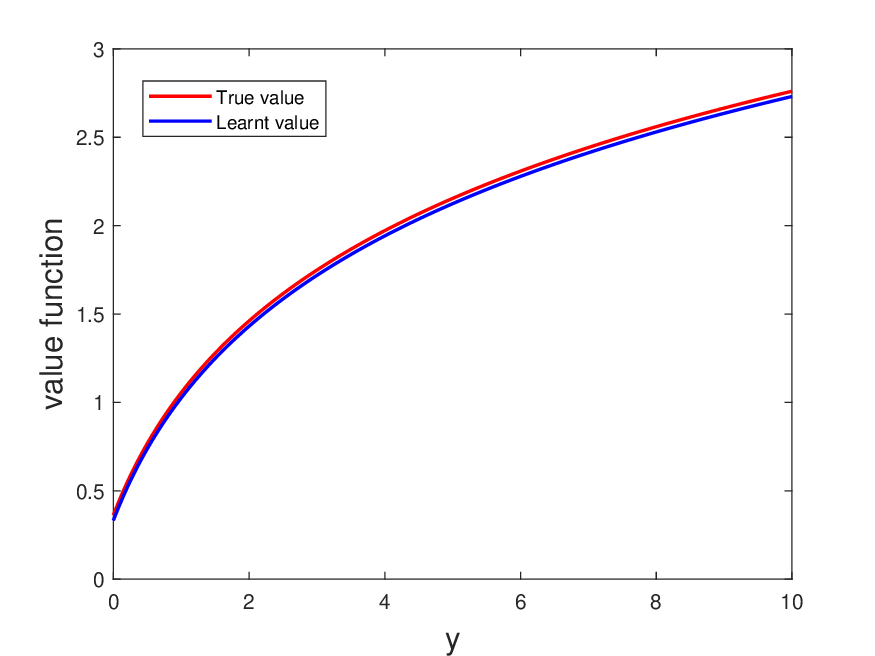}
    }
\caption{ \small {Convergence of parameter iterations in the optimal value function and the optimal q-function using Algorithm \ref{Alg:tracking}.}
 (a): paths of learnt parameters for the optimal value function and the optimal q-function vs optimal parameters shown in the dashed line; (b): ``zoom-in" paths of learnt parameters for the hollow circles selected in panel (a) to illustrate the fluctuations of the local path; (c): the learnt value function vs the optimal value function.}\label{fig:convergence}
\end{figure}

For the general choice of temperature parameter $\gamma\neq \rho/d$, one needs to apply neural networks as the parameterized approximations of the optimal value function and the optimal q-function. In what follows, we implement our proposed q-learning algorithm together with neural networks. More precisely, we consider the parameterized value function $J^{\xi}(y)$ with $J^{\xi}_y(0)=1$ and parameterized q-function given by $q^{\psi}(y,\tilde{\theta})=-a^{\psi}(y)(\tilde{\theta}-b^{\psi}(y))^2+c^{\psi}(y)$ with  $c^{\psi}(y)=-\frac{\gamma}{2}\ln\big(\frac{(2\pi\gamma)^d}{|b^{\psi}(y)|}\big)$ 
 for $(y,\tilde{\theta})\in\R_+\times {\cal A}$.  Thus, we adopt three two-layer fully connected neural networks with Sigmoid activation functions to train the functions $J^{\xi}(y)$ and $(a^{\psi}(y),b^{\psi}(y))$. Consider the model with one risky asset (i.e., $d = 1$), and still set the coefficients of the simulator to be $\mu=0.2$, $\sigma=1$, $\sigma_Z=0.2$, $\kappa=0.5$ and  $\rho=0.2$. Here, we choose different temperature parameters $\gamma=0.05,0.1,0.3$. The truncated horizon is $T=12$ and the time step is $\Delta t=0.1$. The learning rates are given by $\alpha_\psi(i)=\alpha_\xi(i)=\frac{0.001}{i^{0.61}}$. We plot in Figure \ref{fig:neural} the learnt value function, the mean and variance functions of the learnt policy by using Algorithm \ref{Alg:tracking} together with the neural networks after $N=100$ iterations of policy update. 
 \begin{figure}[h]
\centering
\centering
  \subfigure[]{
        \includegraphics[width=5cm]{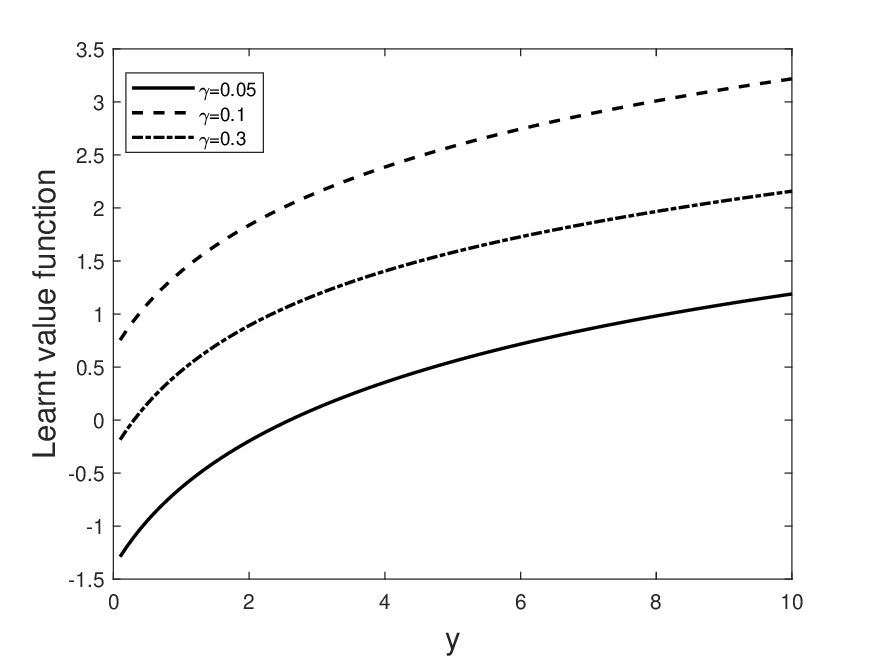}
    }\hspace{-5mm}
  \subfigure[]{
        \includegraphics[width=5cm]{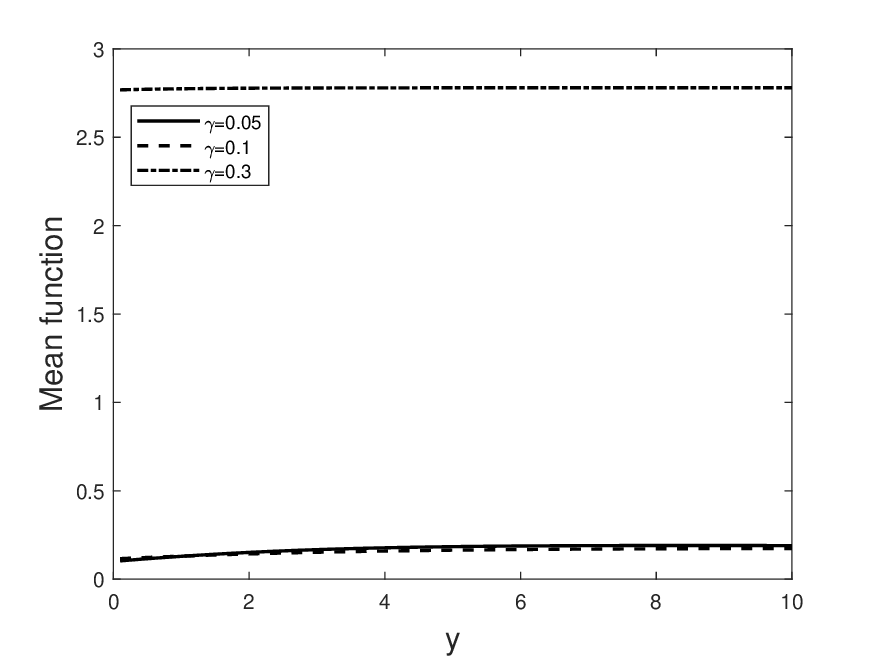}
    }\hspace{-5mm}
\subfigure[]{
        \includegraphics[width=5cm]{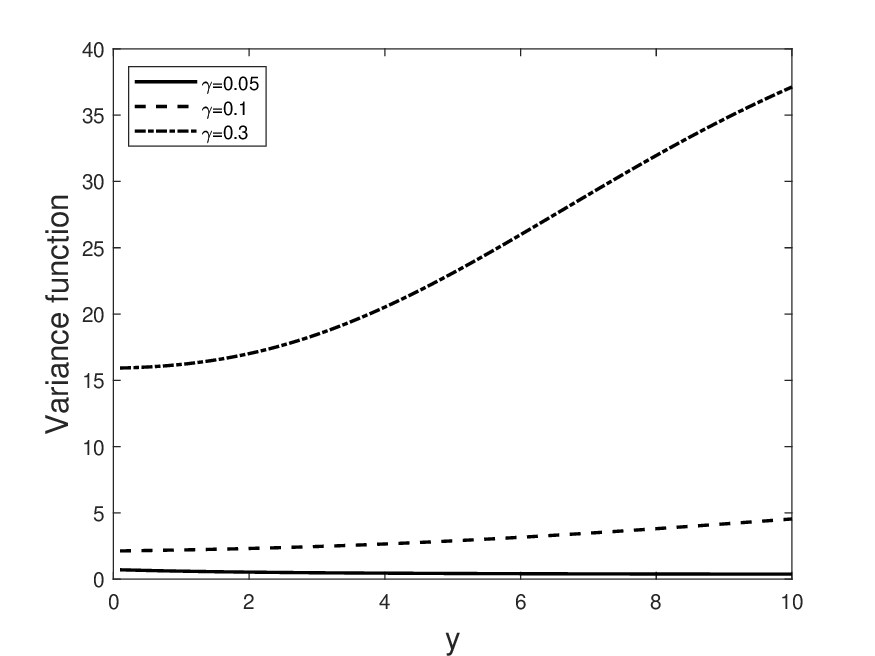}
    }
\caption{The learnt value function (left panel), mean function (middle panel) and variance function (right panel) of the learnt policy by implementing Algorithm \ref{Alg:tracking} together with the neural networks for $\gamma\neq \rho/d$.}\label{fig:neural}
\end{figure}

 We next also compare the performance of
the proposed q-learning algorithm with the classical stochastic control approach using maximum likelihood estimation (MLE) method for model parameters. In particular, we implement two methods based on real market data. We choose the S\&P 500 index as the benchmark process and select the Amazon.com Inc (AMZN) as the risky asset. We set the discount factor $\rho=0.1$, time step $\Delta t=1$ (Day) and initial wealth $\mathrm{v}=3281.1$ (10 higher then initial value of the benchmark process).  The daily returns from January 1, 2000 through July 30, 2020 are used as the training set while daily returns from July 31, 2020 through July 30, 2024 are used as the test set.

For the MLE method, we first use the training set to obtain the estimated values of the parameters as $\hat{\mu}=0.0012$, $\hat{\sigma}=0.0324$ and $\hat{\sigma}_Z=0.0126$. Suppose that the two Brownian motions $W^{\kappa}$ and $W^{\eta}$ are independent, i.e., $\kappa=1$. We then compute the optimal portfolio strategy given by \eqref{eq:theta} on the test set. On the other hand, for the reinforcement learning approach, we choose the temperature parameter $\gamma=\rho/d=0.1$, and train $N=500$ times with learning rates given by
\begin{align*}
 \alpha_{\psi}(i)=\left(\frac{0.005}{i^{0.71}},\frac{0.005}{i^{0.71}}\right), \quad \alpha_{\xi}=\frac{0.005}{i^{0.71}},\quad 1\leq i\leq 500  
\end{align*}
on the training set. We then implement the q-learning  algorithm on the test set with the learned value function and q-function. We plot the total wealth  (including capital injection) process of the agent and the cumulative capital injection process in Figure \ref{fig:injection}. It shows that the wealth process using the q-learning algorithm outperforms the one using the MLE method. The total injected capital required by the agent using the reinforcement learning method is $2243.46$, which is $5.86\%$ lower than the total  capital injection of $2383.21$ using the MLE method. These plots can show the effectiveness and robustness of the q-learning algorithm for some real market data.

\begin{figure}[h]
\centering
  \subfigure[]{
        \includegraphics[width=7.2cm]{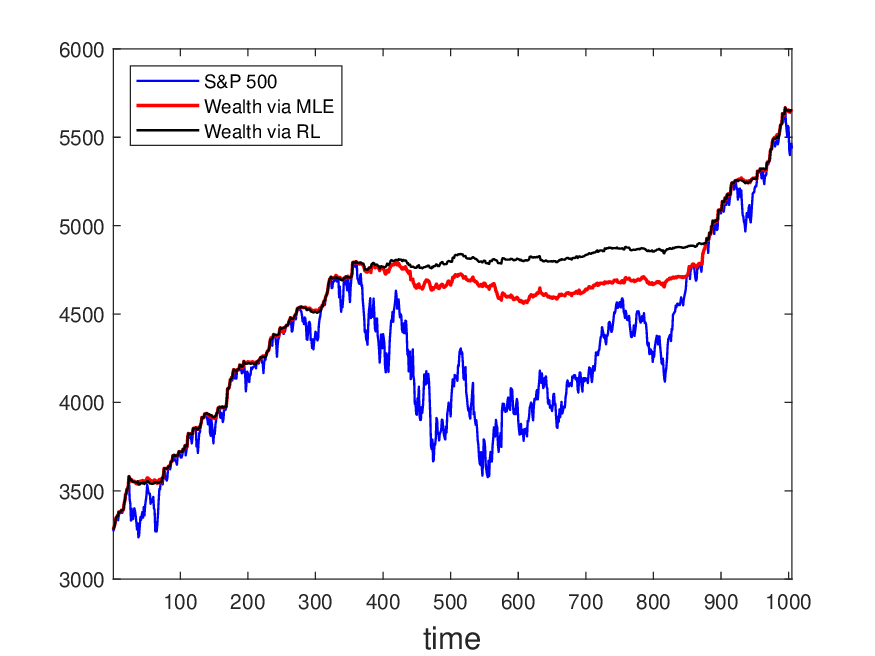}
    }\hspace{-8mm}
  \subfigure[]{
        \includegraphics[width=7.2cm]{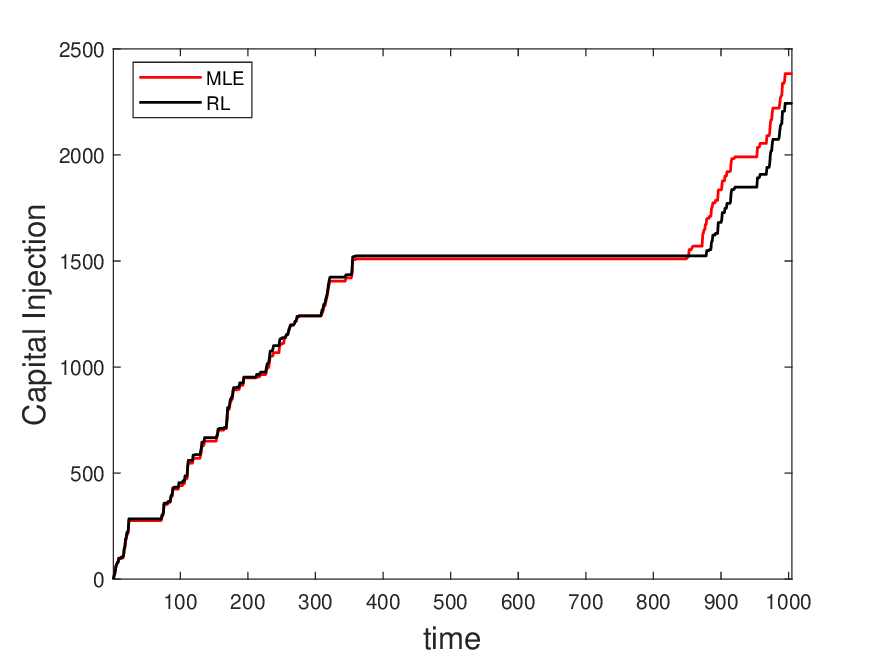}
    }
\caption{ 
 (a): The price process of the S\&P500, the total wealth (including capital injection) process under the MLE method and under the RL approach from  July 31, 2020 through July 30, 2024; (b): The cumulative capital injection process under the MLE method and under the RL approach from July 31, 2020 through July 30, 2024.}\label{fig:injection}
\end{figure}

\section{Proofs}\label{appendix:proof}

This section collects all proofs of the main results presented in previous sections.

\begin{proof}[Proof of Theorem \ref{thm:PIT}]
 We first prove item (i). For $T>0$ and $y\in\R_+$, applying It{\^o}'s formula to the process $e^{-\rho t}J(\tilde{Y}_t^{{\bm{\pi}}^{\prime}};\bm{\pi})$ from $0$ to $T$, we can obtain
\begin{align}\label{eq:Ito-J}
&e^{-\rho T}J(\tilde{Y}_T^{{\bm{\pi}}^{\prime}};\bm{\pi})-J(y;\bm{\pi})- \int_0^T e^{-\rho t} dL_t^{\bm{\pi}^{\prime}}-\gamma \int_0^T\int_{\cal A} e^{-\rho t}\ln\left(\bm{\pi}^{\prime}(\tilde{\theta};\tilde{Y}_t^{{\bm{\pi}}^{\prime}})\right)\bm{\pi}^{\prime}(\tilde{\theta};\tilde{Y}_t^{{\bm{\pi}}^{\prime}})d\tilde{\theta} dt\nonumber\\
&=\int_0^T e^{-\rho t}\int_{\cal A}\left(\left[H\left(\tilde{\theta},J_y(\tilde{Y}_t^{{\bm{\pi}}^{\prime}};\bm{\pi}), J_{yy}(\tilde{Y}_t^{{\bm{\pi}}^{\prime}};\bm{\pi})\right)-\rho J(\tilde{Y}_t^{{\bm{\pi}}^{\prime}};\bm{\pi})\right]+\frac{\sigma_Z^2(\tilde{Y}_t^{{\bm{\pi}}^{\prime}}+1)^2}{2}J_{yy}(\tilde{Y}_t^{{\bm{\pi}}^{\prime}};\bm{\pi})\right.\nonumber\\
&\quad\left.-\gamma \ln \bm{\pi}^{\prime}(\tilde{\theta};\tilde{Y}_t^{{\bm{\pi}}^{\prime}})\right) \bm{\pi}^{\prime}(\tilde{\theta};\tilde{Y}_t^{{\bm{\pi}}^{\prime}})d\tilde{\theta} dt+\int_0^T e^{-\rho t}\left(J_y(\tilde{Y}_t^{{\bm{\pi}}^{\prime}};\bm{\pi})-1\right) dL_t^{\bm{\pi}^{\prime}}\nonumber\\
&\quad+\int_0^Te^{-\rho t}\tilde{\sigma}(\tilde{Y}_t^{{\bm{\pi}}^{\prime}},\bm{\pi}^{\prime}(\cdot \mid \tilde{Y}_t^{{\bm{\pi}}^{\prime}})) J_y(\tilde{Y}_t^{{\bm{\pi}}^{\prime}};\bm{\pi})d B_t .
\end{align}
As $J(\cdot;\bm{\pi})\in C^2(\R_+)$ satisfies the PDE \eqref{RL}, we can see that
\begin{align}\label{eq:Neu-J}
\int_0^T e^{-\rho t}\left(J_y(\tilde{Y}_t^{{\bm{\bm{\pi}}}^{\prime}};\bm{\bm{\pi}})-1\right) dL_t^{\bm{\bm{\pi}}^{\prime}}=\int_0^T e^{-\rho t}\left(J_y(0;\bm{\bm{\pi}})-1\right) dL_t^{\bm{\bm{\pi}}^{\prime}}=0.
\end{align}
Moreover, it follows from Lemma 2 in \cite{JZ22c}  that, $\forall y\in\R_+$,
{\small\begin{align}\label{eq:HJB-J}
&\int_{\cal A}\left(\left[H\left(\tilde{\theta},J_y(y;\bm{\pi}), J_{yy}(y;\bm{\pi})\right)-\rho J(y;\bm{\pi})\right]+\frac{\sigma_Z^2(y+1)^2}{2}J_{yy}(y;\bm{\pi})-\gamma \ln \bm{\pi}^{\prime}(\tilde{\theta};y)\right) \bm{\pi}^{\prime}(\tilde{\theta};y)d\tilde{\theta} \\
&\geq \int_{\cal A}\left(\left[H\left(\tilde{\theta},J_y(y;\bm{\pi}), J_{yy}(y;\bm{\pi})\right)-\rho J(y;\bm{\pi})\right]+\frac{\sigma_Z^2(y+1)^2}{2}J_{yy}(y;\bm{\pi})-\gamma \ln \bm{\pi}(\tilde{\theta};y)\right) \bm{\pi}(\tilde{\theta};y)d\tilde{\theta}=0.\nonumber
\end{align}}
From \eqref{eq:Ito-J}, \eqref{eq:Neu-J} and \eqref{eq:HJB-J}, we deduce that
\begin{align}\label{eq:pi-J}
&e^{-\rho T}J(\tilde{Y}_T^{{\bm{\pi}}^{\prime}};\bm{\pi})-J(y;\bm{\pi})- \int_0^T e^{-\rho t} dL_t^{\bm{\pi}^{\prime}}-\gamma \int_0^T\int_{\cal A} e^{-\rho t}\ln\left(\bm{\pi}^{\prime}(\tilde{\theta};\tilde{Y}_t^{{\bm{\pi}}^{\prime}})\right)\bm{\pi}^{\prime}(\tilde{\theta};\tilde{Y}_t^{{\bm{\pi}}^{\prime}})d\tilde{\theta} dt\nonumber\\
&\qquad\geq \int_0^T e^{-\rho t}\tilde{\sigma}(\tilde{Y}_t^{{\bm{\pi}}^{\prime}},\bm{\pi}^{\prime}(\cdot \mid \tilde{Y}_t^{{\bm{\pi}}^{\prime}})) J_y(\tilde{Y}_t^{{\bm{\pi}}^{\prime}};\bm{\pi})d B_t .
\end{align}
Taking the expectation on both side of \eqref{eq:pi-J} and letting $T\to \infty$, by \eqref{eq:condition-pi} and the monotone convergence theorem (MCT), we conclude that for any $y\in\R_+$,
\begin{align*}
J(y;\bm{\pi})&\leq -\lim_{T\to \infty}  \Ex^{\mathbb{Q}^W}\left[\int_0^{T} e^{-\rho t} dL_t^{\bm{\pi}^{\prime}}\right]-\gamma\lim_{T\to \infty}  \Ex^{\mathbb{Q}^W}\left[ \int_0^{T}\int_{\cal A} e^{-\rho t}\ln\left(\bm{\pi}^{\prime}(\tilde{\theta};\tilde{Y}_t^{{\bm{\pi}}^{\prime}})\right)\bm{\pi}^{\prime}(\tilde{\theta};\tilde{Y}_t^{{\bm{\pi}}^{\prime}})d\tilde{\theta} dt\right]\nonumber\\
&\qquad+\lim_{T\to \infty}\Ex^{\mathbb{Q}^W}\left[e^{-\rho T}J(\tilde{Y}_T^{{\bm{\pi}}^{\prime}};\bm{\pi})\right] \nonumber\\
&=\Ex^{\mathbb{Q}^W}\left[- \int_0^{\infty} e^{-\rho t} dL_t^{\bm{\pi}^{\prime}}-\gamma \int_0^{\infty}\int_{\cal A} e^{-\rho t}\ln\left(\bm{\pi}^{\prime}(\tilde{\theta};\tilde{Y}_t^{{\bm{\pi}}^{\prime}})\right)\bm{\pi}^{\prime}(\tilde{\theta};\tilde{Y}_t^{{\bm{\pi}}^{\prime}})d\tilde{\theta} dt\right]\nonumber\\
&=J(y;\bm{\pi}^{\prime}).
\end{align*}
Next we deal with item (ii). Denote by $J^*(y)=J(y;\bm{\pi}^*)$. Thus, $J^*(y)$ satisfies Eq. \eqref{RL}, and Lemma 2 in \cite{JZ22c} shows that
\begin{align*}
&\int_{\cal A}\left(\left[H\left(\tilde{\theta},y,J_y^*(y), J_{yy}^*(y)\right)-\rho J^*(y)\right]+\frac{\sigma_Z^2(y+1)^2}{2}J_{yy}^*(y)-\gamma \ln \bm{\pi}^*(\tilde{\theta};y)\right) \bm{\pi}^{*}(\tilde{\theta};y)d\tilde{\theta} \nonumber\\
&= \sup_{\bm{\pi}\in{\mathcal{ P}}({\mathcal {A}})} \int_{\cal A}\left(\left[H\left(\tilde{\theta},y,J_y^*(y), J_{yy}^*(y)\right)-\rho J^*(y)\right]+\frac{\sigma_Z^2(y+1)^2}{2}J^*_{yy}(y)-\gamma \ln \bm{\pi}(\tilde{\theta};y)\right) \bm{\pi}(\tilde{\theta};y)d\tilde{\theta}=0.
\end{align*}
Fix $(T,y)\in(0,\infty)\times\R_+$ and $\bm{\pi}\in\Pi$. By applying It\^{o}'s formula to $e^{-\rho T}J^*(\tilde{Y}_{T}^{\bm{\pi}^*})$, we arrive at
\begin{align}\label{eq:itoveri}
e^{-\rho T}J^*(\tilde{Y}_{T}^{\bm{\pi}})&=J^*(y)+\int_0^{T} e^{-\rho s} J^*_y(\tilde{Y}_{s}^{\bm{\pi}})dL_s^Y+\int_0^{T} e^{-\rho s}({\cal L}^{\bm{\pi}(\cdot|\tilde{Y}_s^{\bm{\pi}})} J^*-\rho J^*)(\tilde{Y}_{s}^{\bm{\pi}})ds\nonumber\\
&\quad+\int_0^{T}e^{-\rho s} J^*_y(\tilde{Y}_{s}^{\bm{\pi}})\tilde{\sigma}(\tilde{Y}_s^{\bm{\pi}},\bm{\pi}(\cdot|\tilde{Y}_s^{\bm{\pi}})) dB_s,
\end{align}
where, for any  $\bm{\pi}\in\Pi$, the operator ${\cal L}^{\bm{\pi}}$ acting on $C^2(\R_+)$ is defined by
\begin{align*}
{\cal L}^{\bm{\pi}}g(y)&:=\tilde{b}(y,\bm{\pi})g'(y)+\frac{1}{2}\tilde{\sigma}(y,\bm{\pi})g''(y),\quad \forall y\in\R_+.
\end{align*}
Taking the expectation on both sides of \eqref{eq:itoveri}, we deduce from the Neumann boundary condition $J^*_y(0)=1$ that
\begin{align}\label{eq:value-ineq}
\Ex^{\Qx^W}\left[- \int_0^{T} e^{-\rho s}dL_s^Y \right]&=J^*(y)-\Ex^{\Qx^W}\left[e^{-\rho T}v(\tilde{Y}_{T}^{\bm{\pi}})\right]+\Ex^{\Qx^W}\left[\int_0^{T} e^{-\rho s}({\cal L}^{\bm{\pi}(\cdot|\tilde{Y}_s^{\bm{\pi}})} v-\rho v)(\tilde{Y}_{s})ds\right]\nonumber\\
&\leq J^*(y)-\Ex^{\Qx^W}\left[e^{-\rho T}J^*\left(\tilde{Y}_{T}^{\bm{\pi}}\right)\right].
\end{align}
Here, the last inequality in \eqref{eq:value-ineq} holds true due to $({\cal L}^{\bm{\pi}} v-\rho v)(y)\leq 0$ for all $y\in\R_+$ and $\bm{\pi}\in{\cal P}({\cal A})$. Toward this end, letting $T\to\infty$ in \eqref{eq:value-ineq}, we obtain from  \eqref{eq:condition-optimal-pi} and MCT that, for all $\bm{\pi}\in \Pi$,
\begin{align*}
\Ex^{\Qx^W}\left[- \int_0^{\infty} e^{-\rho s}dL_s^Y \right]\leq v(y),\quad \forall y\in\R_+,
\end{align*}
where the equality holds when $\bm{\pi}=\bm{\pi}^*$. This implies that $J^*(y)=v(y)$ is the optimal value function and $\bm{\pi}^*$ is the optimal policy.

Next, we move to item (iii). To find the fix point, we use the iteration method and start with a special Gaussian policy: $\bm{\pi}_0(\cdot|y)\sim \mathcal{N}\left(\cdot~|c_1(1+y),c_2(1+y)^2\right)$, with vector $c_1\in\R^d$ and  positive matrix $c_2\in \R^{d\times d}$.  Then it follows that the resulting value function $J^{\bm{\pi}_0}(y)$ satisfies the equation:
\begin{align}\label{eq:PDE-pi0}
\begin{cases}
\displaystyle \int_{\mathcal {A}}\left[H(\theta,y,J^{\bm{\pi}_0}_y(y),J^{\bm{\pi}_0}_{yy}(y))
-\gamma \log\bm{\pi}_0(\theta|y)\right]\bm{\pi}_0(\theta|y)d\theta +\frac{\sigma_Z^2}{2}(y+1)^2J^{\bm{\pi}_0}_{yy}(y)=\rho J^{\bm{\pi}_0}(y),\\[1.2em]
\displaystyle J^{\bm{\pi}_0}_y(y)=1.
\end{cases}
\end{align}
By standard arguments, we can show that the classical solution of \eqref{eq:PDE-pi0} is given by
\begin{align*}
J^{\bm{\pi}_0}(y)=\ln(1+y)+C,\quad \forall y\in\R_+,
\end{align*}
where $C$ is a constant defined by
\begin{align*}
C:=\frac{1}{\rho}\left(\mu^{\top}c_1+\sqrt{1-\kappa^2}\sigma_Zc_1^{\top}\sigma\eta-\frac{1}{2}c_1^{\top}\sigma\sigma^{\top}c_1-\frac{1}{2}\text{tr}(\sigma\sigma^{\top}c_2)-\frac{1}{2}\sigma_Z^2+\frac{1}{2}\gamma\ln\left((2\pi)^d|c_2|\right)+\frac{d}{2}\gamma\right),
\end{align*}
and for a matrix $A\in\R^{d\times d}$, we denote $|A|:=\text{det}(A)$. Then, using once iteration, we get that
\begin{align}\label{eq:fix-point}
\bm{\pi}_1(\cdot|y)={\cal I}(\bm{\pi}_0) \sim  \mathcal{N}\left(\cdot~\Big|(\sigma\sigma^{\top})^{-1}(\mu+\sigma_z\sqrt{1-\kappa^2}\sigma \eta)(1+y),(\sigma\sigma^{\top})^{-1}\gamma(1+y)^2\right).
\end{align}
Again, we can calculate the corresponding reward function as 
\begin{align}\label{eq:fix-point-function}
J^{\pi_1}(y)=\ln(1+y)+\frac{1}{\rho}\left(-\frac{1}{2}\sigma_Z^2\kappa^2+\sqrt{1-\kappa^2}\zeta+\frac{\gamma}{2}\ln\left(\frac{(2\pi \gamma)^d}{|\sigma\sigma^{\top}|}\right)+\alpha\right)
\end{align}
Then the  iteration is applicable again, which yields the improved policy $\bm{\pi}_2$ as exactly the Gaussian policy $\bm{\pi}_1$ given in \eqref{eq:fix-point}, together with the reward function in \eqref{eq:fix-point-function}. Then we find that ${\bm{\pi}}^*$ give in \eqref{eq:fixpoint-pi} is a fixed point of ${\cal I}$. Thus, we complete the proof.
\end{proof}

\begin{proof}[Proof of Proposition \ref{prop:martingale-condition}]
For $T>0$ and $y\in\R_+$, applying It{\^o}'s formula to the process $e^{-\rho s}J(Y_s^{\bm\pi};{\bm \pi})$ from $0$ to $t$, we can obtain that
\begin{align}\label{eq:Ito-JJ}
&e^{-\rho s}J(Y_s^{{\bm \pi}};{\bm \pi})-J(y;{\bm \pi})-\int_0^t e^{-\rho s}\hat{q}\left( Y_{s}^{{\bm \pi}},\tilde{\theta}_{s}^{{\bm \pi}}\right)d s-\int_0^t e^{-\rho s}dL_{s}^{{\bm \pi}}\nonumber\\
&=\int_0^t e^{-\rho s}\left(H\left(\tilde{\theta}^{{\bm \pi}},Y_s^{{\bm \pi}},J_y(Y_s^{{\bm \pi}};{\bm \pi}), J_{yy}(Y_s^{{\bm \pi}};{\bm \pi})\right)-\rho J(Y_s^{{\bm \pi}};{\bm \pi})+\frac{\sigma_Z^2(Y_s^{{\bm \pi}}+1)^2}{2}J_{yy}(Y_s^{{\bm \pi}};{\bm \pi})\right)ds\nonumber\\
&\quad+\int_0^t e^{-\rho t}\left(J_y(Y_s^{{\bm \pi}};{\bm \pi})-1\right) dL_s^{{\bm \pi}}+\int_0^t e^{-\rho s}(\tilde{\theta}^{{\bm \pi}})^{\top}\sigma J_y(Y_s^{{\bm \pi}};{\bm \pi})d W_s\nonumber\\
&\quad -\int_0^t  e^{-\rho s}\sigma_Z(Y_s^{{\bm \pi}}+1)J_y(Y_s^{{\bm \pi}};{\bm \pi})d\tilde{W}^{\kappa}_s\nonumber\\
&=\int_0^t e^{-\rho s}\left(q(Y_s^{{\bm \pi}},\tilde{\theta}^{{\bm \pi}};{\bm \pi})-\hat{q}\left( Y_{s}^{{\bm \pi}},\tilde{\theta}_{s}^{{\bm \pi}}\right)\right) ds+\int_0^t e^{-\rho s}(\tilde{\theta}^{{\bm \pi}})^{\top}\sigma J_y(Y_s^{{\bm \pi}};{\bm \pi})d W_s \nonumber\\
&\quad-\int_0^t  e^{-\rho s}\sigma_Z(Y_s^{{\bm \pi}}+1)J_y(Y_s^{{\bm \pi}};{\bm \pi})d\tilde{W}^{\kappa}_s.
\end{align}
From Eq.~\eqref{eq:Ito-JJ}, we can see that, if $\hat{q}(y,\tilde{\theta})=q(y,\tilde{\theta}; {\bm \pi})$  for all $(y,\tilde{\theta}) \in\R_+\times {\cal A}$, then
the above process, and hence the process defined by \eqref{eq:martinglae-J} is an  $((\mathcal{F}_t)_{t \geq 0},\mathbb{Q})$-martingale.

On the other hand,  if the process \eqref{eq:martinglae-J} is an  $((\mathcal{F}_t)_{t \geq 0},\mathbb{Q})$-martingale, we next show that $\hat{q}(y,\tilde{\theta})=q(y,\tilde{\theta}; {\bm \pi})$ for $(y,\tilde{\theta}) \in\R_+\times {\cal A}$. By \eqref{eq:Ito-JJ}, $\int_0^t e^{-\rho s}(q(Y_s^{{\bm \pi}},\tilde{\theta}^{{\bm \pi}};{\bm \pi})-\hat{q}( Y_{s}^{{\bm \pi}},\tilde{\theta}_{s}^{{\bm \pi}})) ds$ for $t\geq0$ is a continuous local martingale with finite variation and hence zero quadratic variation. Hence, it follows that $\int_0^t e^{-\tilde{\rho}s}(q(Y_s^{{\bm \pi}},\tilde{\theta}^{{\bm \pi}};{\bm \pi})-\hat{q}( Y_{s}^{{\bm \pi}},\tilde{\theta}_{s}^{{\bm \pi}})) ds=0$ for all $t \in[0, \infty)$, $\mathbb{Q}$-a.s. (see Chapter 1, Exercise 5.21 in \cite{Karatzas14}). For $y\in\R_+, a\in{\cal A}$, denote by $h(y,\tilde{\theta})=q(y,\tilde{\theta};{\bm \pi})-\hat{q}(y,\tilde{\theta})$, which is a continuous function that maps $\R_+ \times \mathcal{A}$ to $\R_+$. Next,  we argue by contradiction. Suppose there exists a pair $(y_0,\tilde{\theta}_0)\in\R_+\times{\cal A}$ and $\epsilon>0$ such that $h(y_0,\tilde{\theta}_0)>\epsilon$. By the continuity of  $h$, there exists $\delta>0$ such that $h(y,\tilde{\theta})>\epsilon/2$ for all $(y,\tilde{\theta})$ with $\max\{|y-y_0| ,|\tilde{\theta}-\tilde{\theta}_0|\}<\delta$. 

Now, consider the state process, still denoted by $Y^{\bm \pi}$, starting from $(y_0,\tilde{\theta}_0)$, namely, $(Y_t^{\bm \pi})_{t\geq0}$ follows (6) with $Y^{\bm \pi}_0=y_0$ and $\tilde{\theta}_0^{\bm \pi}=\tilde{\theta}_0$. Define the stopping time $\tau$ by $\tau:=\inf \left\{t\geq 0:\left|Y_t^{\bm \pi}-y_0\right|>\delta\right\}$. We have already shown that there exists $\Omega_0 \in \mathcal{F}$ with $\mathbb{Q}\left(\Omega_0\right)=0$ such that for all $\omega \in \Omega \backslash \Omega_0$,  $\int_0^t e^{-\rho s}(q(Y_s^{{\bm \pi}},\tilde{\theta}^{{\bm \pi}};{\bm \pi})-\hat{q}( Y_{s}^{{\bm \pi}},\tilde{\theta}_{s}^{{\bm \pi}})) ds=0$ for all $t \in[0, \infty)$. It follows from Lebesgue's differentiation theorem that, for any $\omega \in \Omega \backslash \Omega_0$, $h(Y_t^{\bm \pi}(\omega),\tilde{\theta}_t^{\bm \pi}(\omega))=0,~ \text {a.e., for }~ t \in[0, \tau(\omega)]$. Consider the set ${\cal O}(\omega)=\{t \in[0, \tau(\omega)]: |\tilde{\theta}_t^{\bm \pi}(\omega)-\tilde{\theta}_0|<\delta\} \subset [0, \tau(\omega)]$. Because $h( Y_t^{\bm \pi}(\omega), \tilde{\theta}_t^{\bm \pi}(\omega))>\frac{\epsilon}{2}$ when $t \in {\cal O}(\omega)$, we conclude that ${\cal O}(\omega)$ has Lebesgue measure zero for any $\omega \in \Omega \backslash \Omega_0$. That is
\begin{align*}
\int_0^{\infty} {\bf 1}_{\{t \leq \tau(\omega)\}} {\bf 1}_{\left\{|\tilde{\theta}_t^{\bm \pi}(\omega)-\tilde{\theta}_0|<\delta\right\}} dt=0 .
\end{align*}
Integrating $\omega$ with respect to $\mathbb{Q}$ and applying Fubini's theorem, we obtain that
\begin{align*}
0 & =\int_{\Omega} \int_0^{\infty}  {\bf 1}_{\{t \leq \tau(\omega)\}} {\bf 1}_{\left\{|\tilde{\theta}_t^{\bm \pi}(\omega)-\tilde{\theta}_0|<\delta\right\}} dt \mathbb{Q}(d\omega)\int_0^{\infty}  \Ex^{\Qx}\left[{\bf 1}_{\{t \leq \tau\}} {\bf 1}_{\left\{|\tilde{\theta}_t^{\bm \pi}-\tilde{\theta}_0|<\delta\right\}} \right] dt\\
& =\int_{0}^\infty \mathbb{E}\left[{\bf 1}_{\{t \leq \tau\}} \mathbb{Q}^{\Qx}\left(|\tilde{\theta}_t^{\bm \pi}-\tilde{\theta}_0|<\delta|\mathcal{F}_t\right)\right] dt=\int_{0}^\infty \mathbb{E}\left[{\bf 1}_{\{t \leq \tau\}}\int_{|\tilde{\theta}-\tilde{\theta}_0|<\delta} {\bm \pi}\left(\tilde{\theta}| Y_t^{{\bm \pi}}\right)d\tilde{\theta}\right]dt \\
& \geq \min _{\left|y-y_0\right|<\delta}\left\{\int_{|\tilde{\theta}-\tilde{\theta}_0|<\delta} {\bm \pi}\left(\tilde{\theta}| y\right)d\tilde{\theta}\right\} \int_{0}^\infty \mathbb{E}^{\Qx}\left[{\bf 1}_{\{t \leq \tau\}}\right] dt \\
& =\min _{\left|y-y_0\right|<\delta}\left\{\int_{|\tilde{\theta}-\tilde{\theta}_0|<\delta} {\bm \pi}\left(\tilde{\theta}| y\right)d\tilde{\theta}\right\} \mathbb{E}^{\Qx}[\tau]\geq 0 .
\end{align*}
The above implies that $\min _{|y-y_0|<\delta}\{\int_{|\tilde{\theta}-\tilde{\theta}_0|<\delta} {\bm \pi}(\tilde{\theta}| y)d\tilde{\theta}\}=0$. However, this contradicts Definition \ref{def:admissible-pi} about an admissible policy. Indeed, Definition \ref{def:admissible-pi}-(i) stipulates $\operatorname{supp} {\bm \pi}(\cdot| y)=\mathcal{A}$ for any $y\in\R_+$; hence $\int_{|\tilde{\theta}-\tilde{\theta}_0|<\delta} {\bm \pi}(\tilde{\theta}| y)d\tilde{\theta}>0$. Then, the continuity in Definition \ref{def:admissible-pi}-(iii) yields $\min _{|y-y_0|<\delta}\{\int_{|\tilde{\theta}-\tilde{\theta}_0|<\delta} {\bm \pi}(\tilde{\theta}| y)d\tilde{\theta}\}>0$, which is a contradiction. Hence, we conclude that $q(y,\tilde{\theta};{\bm \pi})=\hat{q}(y,\tilde{\theta})$ for every $(y,\tilde{\theta})\in\R_+\times {\cal A}$.
\end{proof}

\begin{proof}[Proof of Theorem \ref{thm:martingale-condition}]
If $\hat{J}$ and $\hat{q}$ are respectively the value function and the $q$-function associated with the policy ${\bm \pi}$, it follows from the same argument as in the proof of Proposition \ref{prop:martingale-condition} that
\begin{align*}
e^{-\rho t} \hat{J}( Y_t^{\bm\pi})-\int_0^t e^{-\rho s}\hat{q}\left( Y_{s}^{\bm\pi},\tilde{\theta}_{s}^{\bm \pi}\right)d s-\int_0^t e^{-\rho s}dL_{s}^{\bm \pi},\quad t\geq0
\end{align*}
is an $((\mathcal{F}_t)_{t \geq 0}, \mathbb{Q})$-martingale.
On the other hand,  assume that $e^{-\rho t} \hat{J}( Y_t^{{\bm \pi}})-\int_0^t e^{-\rho s}\hat{q}( Y_{s}^{{\bm \pi}},\tilde{\theta}_{s}^{{\bm \pi}})d s-\int_0^t e^{-\rho s}dL_{s}^{{\bm \pi}}$ is an  $((\mathcal{F}_t)_{t \geq 0}, \mathbb{Q})$ martingale. It then holds that, for any $T>0$,
\begin{align}\label{eq:J-Q}
\hat{J}(y)=\Ex^{\Qx}\left[e^{-\rho T} \hat{J}( Y_T^{\bm\pi})-\int_0^T e^{-\rho s}\hat{q}\left( Y_{s}^{\bm\pi},\tilde{\theta}_{s}^{\bm \pi}\right)d s-\int_0^T e^{-\rho s}dL_{s}^{\bm \pi}\right].
\end{align}
Integrating over the action randomization with respect to the policy ${\bm \pi}$ on the side of Eq. \eqref{eq:J-Q}, we get that
\begin{align}\label{eq:J-QW}
\hat{J}(y,{\bm \pi})=\Ex^{\Qx^W}\left[e^{-\rho T} \hat{J}( \tilde{Y}_T^{\bm\pi};{\bm \pi})-\int_0^T e^{-\rho s}\int_{\cal A}\hat{q}\left( \tilde{Y}_{s}^{\bm\pi},\tilde{\theta}\right){\bm \pi}(\tilde{\theta}|\tilde{Y}_s^{\bm \pi})d\tilde{\theta}d s-\int_0^T e^{-\rho s}dL_{s}^{\bm \pi}\right].
\end{align}
In view that $\int_{\mathcal{A}}[\hat{q}(y,\tilde{\theta})-\gamma \ln{\bm \pi}(\tilde{\theta}|y)] {\bm \pi} (\tilde{\theta}| y)d\tilde{\theta}=0$,  we obtain 
\begin{align}\label{eq:J-QW-pi}
\hat{J}(y)=\Ex^{\Qx^W}\left[e^{-\rho T} \hat{J}( \tilde{Y}_T^{\bm\pi})-\gamma\int_0^T e^{-\rho s}\int_{\cal A}\ln{\bm \pi}(\tilde{\theta}|\tilde{Y}_s^{\bm \pi}){\bm \pi}(\tilde{\theta}|\tilde{Y}_s^{\bm \pi})d\tilde{\theta}d s-\int_0^T e^{-\rho s}dL_{s}^{\bm \pi}\right].
\end{align}
Letting $T$ go to infinity on both side of \eqref{eq:J-QW-pi}, we deduce from MCT and \eqref{eq:condition-J} that
\begin{align}\label{eq:J-QW-pi-2}
\hat{J}(y)&=\Ex^{\Qx^W}\left[-\gamma\int_0^{\infty} e^{-\rho s}\int_{\cal A}\ln{\bm \pi}(\tilde{\theta}|\tilde{Y}_s^{\bm \pi}){\bm \pi}(\tilde{\theta}|\tilde{Y}_s^{\bm \pi})d\tilde{\theta}d s-\int_0^{\infty} e^{-\rho s}dL_{s}^{\bm \pi}\right].
\end{align}
This yields that $\hat{J}(y)=J(y;{\bm \pi})$ for all $y\in\R_+$ by virtue of  \eqref{RL}. Furthermore, based on Proposition \ref{prop:martingale-condition}, the martingale condition implies that $\hat{q}(y,\tilde{\theta})=q(y,\tilde{\theta} ; {\bm \pi})$ for all $(y,\tilde{\theta})\in\R_+\times {\cal A}$.

Finally, if ${\bm \pi}(\tilde{\theta}| y)=\frac{\exp\{\frac{1}{\gamma} \hat{q}(y,\tilde{\theta})\}}{\int_{\mathcal{A}} \exp\{\frac{1}{\gamma} \hat{q}(y,\tilde{\theta})\} d\tilde{\theta}}$ satisfies the condition \eqref{eq:condition-pi-geq}, then ${\bm \pi}={\cal I}({\bm \pi})$ where $\mathcal{I}$ is the map defined in Theorem \ref{thm:PIT}. This in turn implies that ${\bm \pi}$ is the optimal policy and $\hat{J}$ is the optimal value function.
\end{proof}

\begin{proof}[Proof of Theorem \ref{thm:convergence-policy}]
We first state the following auxiliary lemma.
\begin{lemma}[Lemma 8 in \cite{JZ22b}]\label{lem:convergence}
Let $f_h(x)=f(x)+r_h(x)$, where $f$ is a continuous function and $r_h$ converges to 0 uniformly on any compact set as $h \rightarrow 0$. Assume that $f_h(x_h^*)=0$ and $\lim _{h \rightarrow 0} x_h^*=x^*$. Then $f(x^*)=0$. 
\end{lemma}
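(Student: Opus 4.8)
The plan is to prove the lemma by a direct limiting argument that exploits the decomposition $f_h=f+r_h$ together with the vanishing condition $f_h(x_h^*)=0$. First I would rewrite this condition as $f(x_h^*)=-r_h(x_h^*)$ for each $h$, so that the target claim $f(x^*)=0$ reduces to showing that both sides of this identity converge correctly as $h\to 0$: the left side should tend to $f(x^*)$ by continuity of $f$, while the right side should tend to $0$ thanks to the convergence properties of $r_h$.

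The key technical step is to control the term $r_h(x_h^*)$, in which \emph{both} the function $r_h$ and the evaluation point $x_h^*$ depend on $h$. Since $\lim_{h\to 0}x_h^*=x^*$, the closed ball $K:=\overline{B}(x^*,1)$ is a compact set containing $x^*$ as well as every $x_h^*$ for all sufficiently small $h$. The hypothesis that $r_h$ converges to $0$ uniformly on compact sets then yields $\sup_{x\in K}|r_h(x)|\to 0$, and in particular the bound $|r_h(x_h^*)|\leq \sup_{x\in K}|r_h(x)|\to 0$ as $h\to 0$.

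For the left-hand side, continuity of $f$ at $x^*$ together with $x_h^*\to x^*$ gives $f(x_h^*)\to f(x^*)$. Passing to the limit in the identity $f(x_h^*)=-r_h(x_h^*)$ then produces $f(x^*)=\lim_{h\to 0}f(x_h^*)=-\lim_{h\to 0}r_h(x_h^*)=0$, which is exactly the desired conclusion.

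I expect the only genuine subtlety to be the uniform-on-compacts control of $r_h(x_h^*)$: a naive appeal to mere pointwise convergence $r_h\to 0$ would be insufficient, precisely because the evaluation point $x_h^*$ is itself moving with $h$. It is the uniform convergence hypothesis, applied on a single fixed compact neighborhood of the limit $x^*$, that decouples the two sources of $h$-dependence and lets them be handled simultaneously; everything else in the argument is routine continuity and limit-taking.
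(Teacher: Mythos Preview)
Your argument is correct and is the standard proof of this elementary lemma: rewrite $f_h(x_h^*)=0$ as $f(x_h^*)=-r_h(x_h^*)$, trap the convergent sequence $\{x_h^*\}$ in a fixed compact neighborhood of $x^*$, use uniform-on-compacts convergence to kill $r_h(x_h^*)$, and use continuity of $f$ for the left side. The paper does not actually supply its own proof of this statement; it simply quotes the lemma from \cite{JZ22b} and uses it as a black box inside the proof of Theorem~\ref{thm:convergence-policy}, so there is nothing further to compare.
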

We first apply Lemma \ref{lem:convergence} to prove item (i). We can take $f(x)=\Ex^{\Qx}[\int_0^{\infty}\varsigma_tdM_t^{\xi,\psi}]$ and  $f_h(x)=\Ex^{\Qx}[\int_0^{T}\varsigma_tdM_t^{\xi,\psi}]$ with $h=1/T$, $x=(\xi,\psi)$. Thus, we need to prove that
\begin{align}\label{eq:convergenct-T-1}
\Ex^{\Qx}\left[\int_0^{\infty}\varsigma_tdM_t^{\xi,\psi}\right]-\Ex^{\Qx}\left[\int_0^{T}\varsigma_tdM_t^{\xi,\psi}\right]=\Ex^{\Qx}\left[\int_T^{\infty}\varsigma_tdM_t^{\xi,\psi}\right]\to 0, 
\end{align}
as $T\to \infty$ uniformly on any compact subset of $\Theta\times \Psi$.
Let ${\cal C}\subset\Theta\times \Psi$ be a compact set, then ${\cal C}$ is a bounded closed set.  By Proposition \ref{prop:orthogonality} and  assumptions {\bf(A$_{\xi}$)} and {\bf(A$_{\psi}$)}, we have
\begin{align}\label{eq:convergenct-T}
&\Ex^{\Qx}\left[\int_T^{\infty}\varsigma_tdM_t^{\xi,\psi}\right]\nonumber\\
&= \Ex^{\Qx}\left[\int_T^{\infty}e^{-\rho t}\varsigma_t\left(  {\cal L}^{\tilde{\theta}^{\bm \pi}_t}J^{\xi}(Y_t^{{\bm \pi}})-\rho  J^{\xi}(Y_t^{\bm \pi})-q^{\psi}( Y_{t}^{\bm \pi},\tilde{\theta}_{t}^{\bm \pi})\right)dt\right]+\Ex^{\Qx}\left[\int_T^{\infty}e^{-\rho t}\varsigma_tdL_t^{\bm \pi}\right]\nonumber\\
&=\Ex^{\Qx}\left[\int_T^{\infty}e^{-\rho t}\varsigma_t\left((\tilde{\theta}^{\bm \pi}_t)^{\top}\mu J_y^{\xi}(Y_t^{\bm \pi})+\frac{1}{2}(\tilde{\theta}^{\bm \pi}_t)^{\top}\sigma\sigma^{\top}\tilde{\theta}^{\bm \pi}_tJ_{yy}^{\xi}(Y_t^{\bm \pi})-\sqrt{1-\kappa^2}\sigma_Z(Y_t^{\bm \pi}+1)(\tilde{\theta}^{\bm \pi}_t)^{\top}\sigma\eta J_{yy}^{\xi}(Y_t^{\bm \pi})\right.\right.\nonumber\\
&\qquad+\left.\left.\frac{1}{2}\sigma_Z^2 (Y_t^{\bm \pi}+1)^2 J_{yy}^{\xi}(Y_t^{\bm \pi})-\rho  J^{\xi}(Y_t^{\bm \pi})-q^{\psi}( Y_{t}^{\bm \pi},\tilde{\theta}_{t}^{\bm \pi})\right)dt\right]+\Ex^{\Qx}\left[\int_T^{\infty}e^{-\rho t}\varsigma_tdL_t^{\bm \pi}\right]\nonumber\\
&\leq (G_J(\xi)+G_q(\psi))\Ex^{\Qx}\left[\int_T^{\infty}e^{-\rho t}\varsigma_t\left(\left(\left|(\tilde{\theta}^{\bm \pi}_t)^{\top}\mu \right|+\frac{1}{2}(\tilde{\theta}^{\bm \pi}_t)^{\top}\sigma\sigma^{\top}\tilde{\theta}^{\bm \pi}_t+\left|\sqrt{1-\kappa^2}\sigma_Z(Y_t^{\bm \pi}+1)(\tilde{\theta}^{\bm \pi}_t)^{\top}\sigma\eta\right|\right.\right.\right.\nonumber\\
&\qquad+\left.\left.\left.\frac{1}{2}\sigma_Z^2 (Y_t^{\bm \pi}+1)^2+\rho \right) \tilde{J}(Y_t^{\bm \pi})+\tilde{q}( Y_{t}^{\bm \pi},\tilde{\theta}_{t}^{\bm \pi})\right)dt\right]+\Ex^{\Qx}\left[\int_T^{\infty}e^{-\rho t}\varsigma_tdL_t^{\bm \pi}\right].
\end{align}
This yields that as $T\to \infty$, $\Ex^{\Qx}[\int_T^{\infty}\varsigma_tdM_t^{\xi,\psi}]$ converges to 0 uniformly in $(\xi,\psi)\in {\cal C}$. By Lemma \ref{lem:convergence}, we get the desired result.

Next we deal with item (ii). By Lemma \ref{lem:convergence} again, it is sufficient to show that
\begin{align}\label{eq:convergenct-delta}
\Ex^{\Qx}\left[\int_0^{\infty}\varsigma_tdM_t^{\xi,\psi}\right]-\Ex^{\Qx}\left[\sum_{k=0}^{K-1}\varsigma_{t_k}\left(M_{t_{k+1}}^{\xi,\psi}-M_{t_k}^{\xi,\psi}\right)\right]\to 0, 
\end{align}
as $\Delta t\to \infty$ uniformly on any compact subset of $\Theta\times \Psi$. Let ${\cal C}\subset\Theta\times \Psi$ be a compact set, which is also a bounded closed set. It follows from that
\begin{align*}
&\Ex^{\Qx}\left[\int_0^{\infty}\varsigma_tdM_t^{\xi,\psi}\right]-\Ex^{\Qx}\left[\sum_{k=0}^{K-1}\varsigma_{t_k}\left(M_{t_{k+1}}^{\xi,\psi}-M_{t_k}^{\xi,\psi}\right)\right]\nonumber\\
& =\Ex^{\Qx}\left[\int_0^{\infty}\varsigma_tdM_t^{\xi,\psi}\right]-\Ex^{\Qx}\left[\sum_{k=0}^{K-1}\int_{t_k}^{t_{k+1}}\varsigma_{t_k}dM_t^{\xi,\psi}\right]\nonumber\\
&=\Ex^{\Qx}\left[\sum_{k=0}^{K-1}\int_{t_k}^{t_{k+1}}e^{-\rho t}(\varsigma_t-\varsigma_{t_k})\left(  {\cal L}^{\tilde{\theta}^{\bm \pi}_t}J^{\xi}(Y_t^{{\bm \pi}})-\rho  J^{\xi}(Y_t^{\bm \pi})-q^{\psi}( Y_{t}^{\bm \pi},\tilde{\theta}_{t}^{\bm \pi})\right)dt\right]\nonumber\\
&\quad+\Ex^{\Qx}\left[\int_0^{T}e^{-\rho t}\varsigma_tdL_t^{\bm \pi}-\sum_{k=0}^{K-1}\int_{t_k}^{t_{k+1}}e^{-\rho t}\varsigma_{t_k}dL_t^{\bm \pi}\right].\nonumber\\
\end{align*}
Note that
\begin{align*}
&\Ex^{\Qx}\left[\int_0^{T}e^{-\rho t}\varsigma_tdL_t^{\bm \pi}-\sum_{k=0}^{K-1}\int_{t_k}^{t_{k+1}}e^{-\rho t}\varsigma_{t_k}dL_t^{\bm \pi}\right]\\
&\quad=\Ex^{\Qx}\left[\int_0^{T}e^{-\rho t}\varsigma_tdL_t^{\bm \pi}-\sum_{k=0}^{K-1}e^{-\rho t_k}\varsigma_{t_k}\left(L_{t_{k+1}}^{\bm \pi}-L_{t_k}^{\bm \pi}\right)\right]+\Ex^{\Qx}\left[\sum_{k=0}^{K-1}\int_{t_k}^{t_{k+1}}(e^{-\rho t}-e^{-\rho t_k})\varsigma_{t_k}dL_t^{\bm \pi}\right]\\
&\quad\leq \Ex^{\Qx}\left[\int_0^{T}e^{-\rho t}\varsigma_tdL_t^{\bm \pi}-\sum_{k=0}^{K-1}e^{-\rho t_k}\varsigma_{t_k}\left(L_{t_{k+1}}^{\bm \pi}-L_{t_k}^{\bm \pi}\right)\right]\nonumber\\
&\qquad+\sup_{|t-s|\leq \Delta t,t,s\in[0,T]}(e^{-\rho t}-e^{-\rho s})\Ex^{\Qx}\left[\sum_{k=0}^{K-1}\int_{t_k}^{t_{k+1}}\varsigma_{t_k}dL_t^{\bm \pi}\right],
\end{align*}
where the 1st term in above inequality converges to 0 as $\Delta t\to 0$ by using the definition of Riemann-Stieltjes integral and the 2nd term converges to 0 as $\Delta t\to 0$ since $\sup_{|t-s|\leq \Delta t,t,s\in[0,T]}(e^{-\rho t}-e^{-\rho s})\to 0$. 
On the other hand, we have
\begin{align*}
&\Ex^{\Qx}\left[\sum_{k=0}^{K-1}\int_{t_k}^{t_{k+1}}e^{-\rho t}(\varsigma_t-\varsigma_{t_k})\left(  {\cal L}^{\tilde{\theta}^{\bm \pi}_t}J^{\xi}(Y_t^{{\bm \pi}})-\rho  J^{\xi}(Y_t^{\bm \pi})-q^{\psi}( Y_{t}^{\bm \pi},\tilde{\theta}_{t}^{\bm \pi})\right)dt\right]\\
&\leq\sum_{k=0}^{K-1}\left(\int_{t_k}^{t_{k+1}}\Ex^{\Qx}\left[(\varsigma_t-\varsigma_{t_k})^2\right]dt\right)^{\frac{1}{2}}\left(\int_{t_k}^{t_{k+1}}\Ex^{\Qx}\left[\left(  {\cal L}^{\tilde{\theta}^{\bm \pi}_t}J^{\xi}(Y_t^{{\bm \pi}})-\rho  J^{\xi}(Y_t^{\bm \pi})-q^{\psi}( Y_{t}^{\bm \pi},\tilde{\theta}_{t}^{\bm \pi})\right)^2\right]dt\right)^{\frac{1}{2}}\nonumber\\
&\leq d(\ell(\cdot),\Delta t)\sqrt{\Delta t}\sum_{k=0}^{K-1}\left(\int_{t_k}^{t_{k+1}}\Ex^{\Qx}\left[\left(  {\cal L}^{\tilde{\theta}^{\bm \pi}_t}J^{\xi}(Y_t^{{\bm \pi}})-\rho  J^{\xi}(Y_t^{\bm \pi})-q^{\psi}( Y_{t}^{\bm \pi},\tilde{\theta}_{t}^{\bm \pi})\right)^2\right]dt\right)^{\frac{1}{2}}\nonumber\\
&\leq d(\ell(\cdot),\Delta t)\sqrt{\Delta t}\sqrt{K}\left(\sum_{k=0}^{K-1}\int_{t_k}^{t_{k+1}}\Ex^{\Qx}\left[\left(  {\cal L}^{\tilde{\theta}^{\bm \pi}_t}J^{\xi}(Y_t^{{\bm \pi}})-\rho  J^{\xi}(Y_t^{\bm \pi})-q^{\psi}( Y_{t}^{\bm \pi},\tilde{\theta}_{t}^{\bm \pi})\right)^2\right]dt\right)^{\frac{1}{2}}\nonumber\\
&\leq d(\ell(\cdot),\Delta t)\sqrt{T}\left(\Ex^{\Qx}\left[\int_{0}^{T}\left(  {\cal L}^{\tilde{\theta}^{\bm \pi}_t}J^{\xi}(Y_t^{{\bm \pi}})-\rho  J^{\xi}(Y_t^{\bm \pi})-q^{\psi}( Y_{t}^{\bm \pi},\tilde{\theta}_{t}^{\bm \pi})\right)^2dt\right]\right)^{\frac{1}{2}}\nonumber\\
&\leq d(\ell(\cdot),\Delta t)\sqrt{T}(G^2_J(\xi)+G^2_q(\psi))^{\frac{1}{2}}\Ex^{\Qx}\left[\int_0^{T}\left(\left(\left|(\tilde{\theta}^{\bm \pi}_t)^{\top}\mu \right|+\frac{1}{2}(\tilde{\theta}^{\bm \pi}_t)^{\top}\sigma\sigma^{\top}\tilde{\theta}^{\bm \pi}_t\right.\right.\right.\nonumber\\
&\quad\left.\left.\left.+\left|\sqrt{1-\kappa^2}\sigma_Z(Y_t^{\bm \pi}+1)(\tilde{\theta}^{\bm \pi}_t)^{\top}\sigma\eta\right|+\frac{1}{2}\sigma_Z^2 (Y_t^{\bm \pi}+1)^2+\rho \right) \tilde{J}(Y_t^{\bm \pi})+\tilde{q}( Y_{t}^{\bm \pi},\tilde{\theta}_{t}^{\bm \pi})\right)^2dt\right]^{\frac{1}{2}},
\end{align*}
where $\ell(t):=\varsigma_t$ and $d(\ell(\cdot),\Delta t)=\sup_{|t-s|\leq \Delta t,t,s\in[0,T]}\Ex^{\Qx}[|\ell(t)-\ell(s)|^2]^{\frac{1}{2}}$ is the modulus of continuity of $\ell(\cdot)$ in $\mathbb{L}^2(\Omega)$. Therefore, $d^2(\ell(\cdot),\Delta t)\to 0$ as $\Delta t\to 0$ and we get the desired result \eqref{eq:convergenct-delta}. 
\end{proof}

\begin{proof}[Proof of Theorem \ref{thm:verification}]
By  Theorem \ref{thm:PIT}, it suffices to verify the transversality conditions in \eqref{eq:condition-optimal-pi}. Note that $y\to v(y)$ is non-decreasing,  we have
\begin{align*}
v(y)&=\ln(1+y)+\frac{1}{\rho}\left(-\frac{1}{2}\sigma_Z^2\kappa^2+\sqrt{1-\kappa^2}\zeta+\frac{\gamma}{2}\ln\left(\frac{(2\pi \gamma)^d}{|\sigma\sigma^{\top}|}\right)+\alpha\right)\\
&\geq \frac{1}{\rho}\left(-\frac{1}{2}\sigma_Z^2\kappa^2+\sqrt{1-\kappa^2}\zeta+\frac{\gamma}{2}\ln\left(\frac{(2\pi \gamma)^d}{|\sigma\sigma^{\top}|}\right)+\alpha\right).
\end{align*}
This yields that, for any ${\bm \pi}\in\Pi$,
\begin{align}\label{eq:liminf-v}
\limsup_{T\to \infty}\Ex^{\Qx^W}\left[e^{-\rho T}v(\tilde{Y}^{\bm \pi}_T)\right]\geq 0.
\end{align}

Next, we check the validity of the second transversality condition in \eqref{eq:condition-optimal-pi}, that is
\begin{align}
\lim_{T\to \infty}\Ex^{\Qx^W}\left[e^{-\rho T}v(Y^*_T)\right]=0.\label{eq:transcond}
\end{align}
It follows from \eqref{eq:tilde-Y-pi} and \eqref{eq:optimal-J} that the controlled state process $Y^*=(Y_t^*)_{t \geq 0}$ with policy $\bm{\pi}^*$ obeys the following reflected SDE, for $t> 0$,
\begin{align*}
d Y_t^{*}=(2\alpha+\sqrt{1-\kappa^2}\zeta) (1+Y^*_t) dt+\sqrt{\alpha+\frac{d}{2}\gamma+\frac{1}{2}\kappa^2\sigma_Z^2+\sqrt{1-\kappa^2}\zeta}(1+Y^*_t)dB_t+dL_t^{*}.
\end{align*}
Introduce the processes $H_t:=\ln(1+Y^*_t)$ and $K_t:=\int_0^t \frac{dL_s^*}{1+Y^*_s}$ for $t\geq0$. Then, $t\to H_t$ is a non-negative process wihth $H_0=h=\ln(1+y)$. Moreover, $t\to K_t$ is a non-decreasing process and satisfies $\int_{0}^t{\bf1}_{H_s=0}dK_s=\int_{0}^t{\bf1}_{Y^*_s=0}\frac{dL_s^*}{1+Y^*_s}=\int_0^t\frac{dL_s^*}{1+Y^*_s}=K_t$ a.s., for $t\geq0$. This implies that $K=(K_t)_{t\geq 0}$ is the local time of the process $H=(H_t)_{t\geq 0}$ at the reflecting boundary $0$.  By applying It\^{o}'s formula to $H_t=\ln(1+Y^*_t)$, we arrive at
\begin{align}\label{eq:H}
dH_t&=\hat{\mu}dt+\hat{\sigma}dB_t+\frac{dL_t^{*}}{1+Y_t^*}=\hat{\mu}dt+\hat{\sigma}dB_t+dK_t,\quad t>0,
\end{align}
where constants $\hat{\mu}$ and $\hat{\sigma}$ are defined by
\begin{align*}
\hat{\mu}&:=2\alpha+\sqrt{1-\kappa^2}\zeta-\frac{1}{2}\left(\alpha+\frac{d}{2}\gamma+\frac{1}{2}\kappa^2\sigma_Z^2+\sqrt{1-\kappa^2}\zeta\right),\\
\hat{\sigma}&:=\sqrt{\alpha+\frac{d}{2}\gamma+\frac{1}{2}\kappa^2\sigma_Z^2+\sqrt{1-\kappa^2}\zeta}.
\end{align*}
From the solution representation of the ``Skorokhod problem'', it follows that, for any $y\geq 0$,
\begin{align}\label{eq:K}
K_t&=0 \vee\left\{-h+\max _{s \in[0, t]}\left(-\hat{\mu} s-\hat{\sigma} B_s\right)\right\}
\leq h+|\hat{\mu}|t+|\hat{\sigma}|\max _{s \in[0, t]} B_s.
\end{align}
Thus, we deduce from \eqref{eq:H} and \eqref{eq:K} that, for $T>0$,
\begin{align*}
e^{-\rho T}\Ex^{\Qx^W}[\ln(1+Y_T^*)]=e^{-\rho T}\Ex^{\Qx^W}[K_T]\leq e^{-\rho T}\left(2h+2|\hat{\mu}|T+|\hat{\sigma}|\sqrt{\frac{2T}{\pi}}\right)\to 0, ~\text{as}~T\to \infty.
\end{align*}
This yields the desired transversality condition \eqref{eq:transcond} and completes the proof.
\end{proof}

\noindent
\textbf{Acknowledgements.}\quad {\small The authors are grateful to two anonymous referees for their helpful comments and suggestions. L. Bo and Y. Huang are supported by National Natural Science of Foundation of China (No. 12471451), Natural Science Basic Research Program of Shaanxi (No. 2023-JC-JQ-05) and Shaanxi Fundamental Science Research Project for Mathematics and Physics (No. 23JSZ010). X. Yu is supported by the Hong Kong RGC General Research Fund (GRF) under grant no. 15306523 and grant no. 15211524.}


\begin{thebibliography}{}
{\small

\bibitem[Bai et al.(2023)]{BGMX} Bai, L., T. Gamage, J. Ma and P. Xie (2023): Reinforcement Learning for optimal dividend problem under diffusion model. Preprint, available at arXiv:2309.10242.

\bibitem[Bj\"ork and Land\'en(2000)]{BJLand2000} Bj\"ork, T., C. Land\'en (2000):  On the term structure of futures and forward prices. in: Geman, H., Madan, D., Pliska, S.R. and Vorst, T. (eds.): Math. Finan. Bachelier Congress 2000, Springer-Verlag. pp. 111-149.

\bibitem[Bo et al.(2021)]{BoLiaoYu21}  Bo, L., H. Liao and X. Yu (2021): Optimal tracking portfolio with a ratcheting capital benchmark. {\it SIAM J. Contr. Optim.} 59(3), 2346-2380.

\bibitem[Bo et al.(2023a)]{BHY23a} Bo, L., Y. Huang and X. Yu (2023a): An extended Merton problem with relaxed benchmark tracking. {Preprint}, arXiv:2304.10802.

\bibitem[Bo et al.(2023b)]{BHY23b} Bo, L., Y. Huang and X. Yu (2023b): Stochastic control problems with state-reflections arising from relaxed benchmark tracking. {Preprint}, arXiv:2302.08302.



\bibitem[Browne(1999a)]{Browne9} Browne, S. (1999a): Reaching goals by a deadline: Digital options and continuous-time active portfolio management. {\it Adv. Appl. Probab.} {31}, 551-577.

\bibitem[{Browne(1999b)}]{Browne99} Browne, S. (1999b): Beating a moving target: Optimal portfolio strategies for outperforming a stochastic benchmark. {\it Finan. Stoch.} {3}, 275-294.


\bibitem[{Browne(2000)}]{Browne00} Browne, S. (2000): Risk-constrained dynamic active portfolio management {\it Manag. Sci.} {46}(9), 1188-1199.


\bibitem[Brunick and Shreve(2013)]{BS2013} Brunick, G. and S. Shreve (2013): Mimicking an It\^{o} process by a solution of a stochastic differential equation. {\it Ann. Appl. Probab.} 23(4), 1584-1628.


\bibitem[Kushner(1998)]{Fleming1999}  
Kushner, H.J. (1998): Existence of optimal controls for variance control. In
McEneaney, W., G. Yin and Q. Zhang (Eds.), {\it Stochastic analysis, control, optimization
and applications: a volume in honor of W.H. Fleming.} Boston: Birkh{\"a}user.

\bibitem[{Gaivoronski et al.(2005)}]{Gaivoronski05} Gaivoronski, A., S. Krylov and N. Wijst  (2005): Optimal portfolio selection and dynamic benchmark tracking. {\it Euro. J. Oper. Res.} 163, 115-131.


\bibitem[{Jia and Zhou(2022a)}]{JZ22a} Jia, Y. and X.Y. Zhou (2022a): Policy gradient and actor-critic learning in continuous time and space: Theory and algorithms. {\it J. Machine Learning Res.} 23, 1-50.


\bibitem[Jia and Zhou(2022b)]{JZ22b} Jia, Y. and X.Y. Zhou (2022b): Policy evaluation and temporal-difference learning in continuous time and space: A martingale approach. {\it J. Machine Learning Res.} 23, 1-55.



\bibitem[Jia and Zhou(2023)]{JZ22c}  Jia, Y. and X.Y. Zhou (2023): q-Learning in continuous time. {\it J. Machine Learning Res.} 24, 1-61.

\bibitem[Karatzas and Shreve(2014)]{Karatzas14}  Karatzas, I. and S. Shreve (2014): {\it  Brownian Motion and Stochastic Calculus}.  GTM, volume 113. Springer-Verlag, New York.

\bibitem[Kushner and Dupuis(1991)]{WKushner1990}  Kushner, H.J. and P. Dupuis  (1991):  {\it Numerical Methods for Stochastic Control Problems in Continuous Time}.  Springer-Verlag,  New York.


\bibitem[{Ni et al.(2022)}]{NLFC} Ni, C., Y. Li, P. Forsyth and R. Carroll (2022): Optimal asset allocation for outperforming a stochastic benchmark target. {\it Quant. Finance} {22}(9), 1595-1626.

\bibitem[{Robbins and Monro(1951)}]{Robbins1951} Robbins, H. and S. Monro (1951): A stochastic approximation method. {\it Ann. Math. Statist.} 22(3), 400-407.


\bibitem[{Strub and Baumann(2018)}]{Strub18} Strub, O. and P. Baumann (2018): Optimal construction and rebalancing of index-tracking portfolios. {\it Euro. J. Oper. Res.} 264, 370-387.

\bibitem[Sun(2006)]{Sun06}   Sun, Y. (2006): The exact law of large numbers via Fubini extension and characterization of insurable risks. {\it J. Econ. Theor.} 126(1), 31-69.

\bibitem[Sutton(1988)]{Sutton1988}   Sutton, R.S. (1988): Learning to predict by the methods of temporal differences. {\it Mach. Learn.} 3, 9-44.


\bibitem[Tang et al.(2022)]{TZZ} Tang, W.P., Y.M. Zhang and X.Y. Zhou (2022): Exploratory HJB equation and their convergence. {\it SIAM J. Contr. Optim.} 60, 3191-3216.

\bibitem[Wang et al.(2020)]{Wang20}  Wang, H., T. Zariphopoulou and X.Y. Zhou (2020): Reinforcement learning in continuous time and space: A stochastic control approach. {\it  J. Machine Learning Res.} 21, 1-34.

\bibitem[Wang et al.(2023)]{W23} Wang, B., X. Gao and L. Li (2023): Reinforcement learning for continuous-time optimal execution: Actor-critic algorithm and error analysis. Preprint, available at SSRN:4378950.


\bibitem[{Wei and Yu(2023)}]{weiyu2023} X. Wei and X. Yu (2023): Continuous-time q-learning for McKean-Vlasov control problems. Preprint, available at arXiv:2306.16208.


\bibitem[{Yao et al.(2006)}]{YaoZZ06} Yao D., S. Zhang and X.Y.  Zhou  (2006): Tracking a financial benchmark using a few assets. {\it Oper. Res.} {54}(2), 232-246.}

\bibitem[{Zhao et al.(2023)}]{ZhaoTD2024} Zhao H.Y., W.P. Tang and D. Yao (2023):  Policy optimization for continuous reinforcement learning. {\it 37th Adv. Neur. Inf. Process. Syst.} NeurIPS 2023, 1-27.
\end{thebibliography}
\end{document}